\documentclass[preprint,3p]{elsarticle}

\usepackage{graphicx}
\usepackage{epsfig}
\usepackage{amsmath}
\usepackage{epstopdf}
\usepackage{multirow}
\usepackage{rotating}  
\usepackage{booktabs}
\usepackage{caption}
\usepackage{subcaption}
\usepackage{ifpdf}
\usepackage{color}
\usepackage{amssymb}

\usepackage{amsthm}
\usepackage[numbers]{natbib}
\usepackage[colorlinks=true, linkcolor=blue, citecolor=blue, urlcolor=blue]{hyperref}
\usepackage{lineno}
\usepackage{xcolor}
\usepackage{algorithm2e}
\usepackage{bm}
\usepackage{mathrsfs}

\newtheorem{proposition}{Proposition}[section]

\newtheorem{remark}{Remark}[section]

\usepackage{lscape}
\usepackage{needspace}

\usepackage{float}
\usepackage{adjustbox}  
\numberwithin{equation}{section}

\journal{Physica A: Statistical Mechanics and its Applications}

\begin{document}

\begin{frontmatter}

\title{Markovian Shock-Source Tracing and Multidimensional Asset Roles in Exchange Rates, Gold Futures, and Bitcoin}

\author[1,4]{Seung Ho Choi\fnref{equal}}

\author[2]{Seoin Jang\fnref{equal}}

\author[5]{Hyunwoo Lee\corref{cor1}}
\ead{hyunwoolee@kias.re.kr}

\author[1]{Hayoung Choi\corref{cor1}}
\ead{hayoung.choi@knu.ac.kr}

\address[1]{Department of Mathematics, Kyungpook National University, 
80, Daehak-ro, Buk-gu, 41566 Daegu, Republic of Korea}
\address[2]{Department of Statistics, Kyungpook National University, 
80, Daehak-ro, Buk-gu, 41566 Daegu, Republic of Korea}
\address[4]{Daegu Metropolitan Office of Education, 
150, Geomdan-ro, Buk-gu, 41521 Daegu, Republic of Korea}

\address[5]{Korea Institute for Advanced Study, 
85 Hoegiro, Dongdaemun-gu, 02455 Seoul, Republic of Korea}

\cortext[cor1]{Co-corresponding authors.}
\fntext[equal]{These authors contributed equally to this work.}


\begin{abstract}
This study examines cross-asset connectedness in an international financial network of major exchange rates, gold futures, and Bitcoin. Moving beyond the conventional net transmitter--receiver classification, we characterize asset roles through three complementary dimensions: direct spillover transmission, stationary source-tracing dynamics, and multistep upstream connectivity. Return spillovers are estimated using VAR generalized forecast-error variance decomposition (VAR-GFEVD). The positive pairwise net-spillover structure is then mapped into a row-stochastic Markov kernel whose transitions trace dominant net-spillover sources in the reverse direction of the original transmission edges. A stationary departure flux describes long-run movement in this source-tracing chain, while Viral Centrality is evaluated by the deterministic probability-propagation algorithm of Fink et al. to approximate multistep upstream reach. Empirically, gold futures emerge as the dominant direct net transmitter and the leading stationary source-tracing node, but do not have the largest Viral Centrality. Several exchange-rate nodes classified as direct net receivers have relatively high Viral Centrality, indicating broad conditional access to upstream source nodes. Bitcoin occupies an intermediate role. An auxiliary specification including the U.S. Dollar Index yields qualitatively similar role differentiation. The results show that direct connectedness, stationary source tracing, and approximate multistep upstream reach cannot be inferred from NET spillovers alone. All diffusion interpretations are descriptive and conditional on the
estimated network, row normalization, and restart closure.

\end{abstract}

\begin{keyword}
Financial connectedness \sep Markovian source tracing \sep Weighted directed networks \sep VAR-GFEVD \sep Stationary departure flux \sep Viral Centrality
\end{keyword}

\end{frontmatter}

\section{Introduction}
\label{sec:introduction}

A central issue in financial connectedness analysis is how to characterize the role of each asset in a system of interacting markets. In the conventional spillover literature, this role is usually described through the distinction between net transmitters and net receivers. Assets with positive net spillovers are interpreted as shock transmitters, whereas assets with negative net spillovers are interpreted as shock receivers. This interpretation has been useful for measuring directional connectedness, systemic risk, and crisis-period spillovers in financial markets~\cite{diebold2009measuring,diebold2012better,diebold2014network,choi2022volatility}. However, complex financial systems may involve richer forms of shock propagation than can be captured by a single transmitter--receiver classification.

International financial markets are increasingly understood as weighted and directed networks rather than as collections of independent assets. Exchange rates, precious metals, digital assets, and broad dollar-related factors interact through global liquidity conditions, monetary-policy expectations, inflation uncertainty, risk aversion, and portfolio rebalancing. In such a system, shocks may not only move directly from one asset to another, but may also persist through repeated network transitions or diffuse indirectly through multistep cascade-like paths. From the perspective of complex financial networks, cross-asset interactions can therefore be studied not only as pairwise relationships but also as networked diffusion processes~\cite{mantegna1999hierarchical,onnela2004clustering,tumminello2005tool,diebold2014network}.

This perspective suggests a limitation of conventional spillover measures. Although net spillovers identify the dominant direction of direct transmission, they do not fully describe the topology obtained by recursively tracing the sources of those spillovers~\cite{diebold2014network,choi2022volatility}. A direct net transmitter need not be the node most frequently visited by a reverse source-tracing walk. Conversely, a direct net receiver may be connected to several upstream sources through indirect paths. These considerations motivate a distinction among direct transmission, stationary source-tracing dynamics, and multistep upstream connectivity~\cite{masuda2017random,fink2023centrality}.

The novelty of this study lies in characterizing financial connectedness through several network summaries rather than through a binary transmitter--receiver classification alone. We distinguish direct transmission, stationary occupation and departure in a reverse source-tracing chain, and Viral Centrality on the same conditional transition topology. This role-based perspective separates direct net sources from nodes having broad multistep access to upstream sources. It also highlights that assets with similar NET spillovers may occupy different positions in the normalized source-tracing topology.

To operationalize this idea, we construct a directed network from return spillovers estimated through a VAR-GFEVD framework. The VAR-GFEVD framework provides ordering-invariant forecast-error variance shares and is widely used to measure directional connectedness among financial assets~\cite{koop1996impulse,pesaran1998generalized,diebold2012better,diebold2014network}. The positive pairwise net-spillover matrix is transformed into a row-stochastic Markov kernel. A transition from asset $i$ to asset $j$ traces a dominant net-spillover supplier $j$ of asset $i$ and therefore reverses the original transmission edge. The transition probabilities are conditional relative weights; row normalization removes the absolute incoming magnitude. The resulting kernel is thus a stochastic representation of source-tracing topology, not a structural law of economic shock propagation~\cite{lovasz1993random,masuda2017random,seabrook2023tutorial}.

This Markovian representation enables the analysis to distinguish three complementary dimensions of asset roles. The first is direct spillover transmission, measured by conventional TO, FROM, and NET spillovers. The second is stationary source tracing, summarized by the stationary distribution and the probability of departing from each node in the reverse chain. The third is approximate multistep upstream reach, measured by the deterministic Viral Centrality algorithm of Fink et al.~\cite{fink2023centrality}. Viral Centrality propagates activation probabilities iteratively and is distinct from a Monte Carlo estimate of independent-cascade spread. Combining these dimensions extends the binary transmitter--receiver classification while retaining a descriptive, noncausal interpretation.

This framework is applied to an international financial asset network consisting of major exchange rates, gold futures, and Bitcoin. These assets are not selected because they perform the same financial function, but because they occupy closely related positions in the international financial system. Major exchange rates reflect currency-market adjustment through capital flows, trade-related price changes, monetary-policy expectations, and global risk sentiment. Gold futures are commonly associated with safe-haven demand, inflation expectations, and real interest-rate conditions, while Bitcoin has increasingly been discussed as an alternative digital asset whose behavior may be sensitive to global liquidity and risk appetite~\cite{ciner2013hedges,reboredo2013gold,dyhrberg2016bitcoin,urquhart2019bitcoin}. Because exchange rates, gold, and Bitcoin are jointly linked to dollar-related liquidity conditions, portfolio rebalancing, and shifts in market sentiment, their interactions provide a suitable empirical setting for examining heterogeneous shock-diffusion roles.

The baseline network consists of JPY/USD, GBP/USD, EUR/USD, CNY/USD, gold futures, and BTC/USD. The U.S. Dollar Index is excluded from the baseline specification not because dollar-related conditions are irrelevant, but because the objective is to examine the internal diffusion topology among exchange rates, gold, and Bitcoin without allowing a broad dollar factor to enter the network as an explicit and potentially dominant node. This choice is also motivated by the fact that bilateral exchange rates share common currency risk factors and dollar-related systematic components~\cite{lustig2011common,verdelhan2018share}. Accordingly, the U.S. Dollar Index is treated as an auxiliary specification rather than as the main object of analysis. This design allows us to examine whether the baseline diffusion-role patterns remain broadly consistent when a broad dollar factor is introduced into the network.

The empirical findings suggest that asset roles are multidimensional under the baseline network. Gold futures emerge as the dominant direct net transmitter and the leading stationary source-tracing node, but do not have the largest Viral Centrality. By contrast, some exchange-rate nodes that are direct net receivers have relatively high Viral Centrality, indicating broad conditional access to upstream source nodes in the reverse topology. Bitcoin occupies an intermediate position. These findings show that an asset's descriptive network role depends on the measure considered.

This study makes three main contributions. First, it links VAR-GFEVD connectedness to a reverse source-tracing Markov representation of the positive pairwise net-spillover network. Second, it distinguishes direct spillover roles, stationary source-tracing roles, and approximate multistep upstream connectivity without proposing a new VAR estimator or centrality measure. Third, it applies this framework to major exchange rates, gold futures, and Bitcoin and documents heterogeneous positions across these conditional network summaries.

Overall, this paper combines VAR-GFEVD connectedness, a reverse source-tracing Markov chain, stationary departure probabilities, Viral Centrality, and an auxiliary dollar-index comparison. The resulting role classification complements conventional net-transmitter and net-receiver measures while avoiding structural causal claims.

The remainder of this paper is organized as follows. Section~\ref{sec:literature} reviews the related literature on cross-asset connectedness, spillover networks, safe-haven assets, complex financial networks, and diffusion-based network measures. Section~\ref{sec:data_methodology} describes the VAR-GFEVD framework, the reverse source-tracing kernel, stationary departure, deterministic Viral Centrality, and state indicators. Section~\ref{sec:results} presents the direct-spillover, stationary source-tracing, upstream-connectivity, and auxiliary dollar-index results. Sections~\ref{sec:discussion} and~\ref{sec:conclusion} discuss the implications and limitations and conclude the paper.

\section{Literature review}
\label{sec:literature}

\subsection{Cross asset connectedness and safe haven assets}

International financial assets are increasingly understood as a connected system rather than as independent markets linked only by pairwise correlations. Exchange rates, commodities, equity related assets, gold, and digital assets are jointly affected by global liquidity conditions, portfolio rebalancing, inflation expectations, risk aversion, and hedging demand. From a complex systems perspective, these assets can be interpreted as nodes in a financial network through which shocks propagate over time ~\cite{mantegna1999hierarchical}. Prior studies show that the relationships among oil prices, exchange rates, and stock markets are dynamic and state dependent. Basher et al.~\cite{basher2012oil} analyze oil prices, exchange rates, and emerging stock markets within a structural VAR framework, while Tsai~\cite{tsai2012relationship} shows that the relationship between stock price indices and exchange rates in Asian markets varies across exchange rate conditions. Basher et al.~\cite{basher2016impact} further demonstrate, using a Markov switching approach, that the effects of oil shocks on real exchange rates differ across regimes. These findings suggest that cross asset transmission is not fixed, but varies depending on market conditions, shock origins, and macro financial environments.

Gold has also been widely examined as a hedge and safe haven asset. Ciner et al.~\cite{ciner2013hedges} show that gold may hedge exchange rate fluctuations and function as a safe haven under extreme currency market conditions. Reboredo~\cite{reboredo2013gold}, using a copula approach, finds that gold can reduce downside risk in currency portfolios and act as a hedge against U.S. dollar depreciation. Bitcoin has also been discussed as an alternative financial asset whose role may differ from that of traditional safe haven assets and whose behavior can be sensitive to global liquidity and risk appetite ~\cite{dyhrberg2016bitcoin}. More recent studies extend the literature from correlation and dependence analysis to spillover and connectedness frameworks. Antonakakis and Kizys~\cite{antonakakis2015dynamic} examine dynamic return and volatility spillovers between commodity and currency markets, showing that the transmitter and receiver roles of assets change over time. Liu et al.~\cite{liu2024risk} analyze risk spillovers among oil, gold, stock, and foreign exchange markets in G20 economies from a network perspective, emphasizing the coexistence of risk contagion and diversification effects.

These studies imply that exchange rates, gold, and Bitcoin should not be treated as separate asset classes with independent dynamics. Exchange rates reflect international capital flows, trade related price adjustments, and monetary policy expectations, while gold responds to safe haven demand, inflation expectations, and real interest rate conditions. Bitcoin, in turn, has been discussed as an alternative asset whose behavior may overlap partly with gold and partly with risk sensitive financial assets~\cite{dyhrberg2016bitcoin,urquhart2019bitcoin}. Therefore, the relationships among exchange rates, gold, and Bitcoin are likely to involve not only direct pairwise dependence but also indirect and time varying transmission channels. This provides a network based motivation for analyzing these assets jointly as interacting nodes in a directed and weighted financial system.

\subsection{Spillover networks and transmitter receiver classification}

Methodologically, this study builds on the spillover and network connectedness literature. Diebold and Yilmaz~\cite{diebold2009measuring} propose a VAR based spillover index using forecast error variance decomposition, which allows system wide connectedness to be summarized by the total connectedness index and decomposed into directional TO, FROM, and NET spillovers. Koop et al.~\cite{koop1996impulse} and Pesaran and Shin~\cite{pesaran1998generalized} provide the foundation for generalized impulse response and generalized variance decomposition, which are useful because they avoid the variable ordering problem inherent in Cholesky based approaches. Diebold and Yilmaz~\cite{diebold2014network} further reinterpret variance decomposition as a weighted directed network, where nodes represent financial assets and edges represent shock transmission.

Within this framework, assets are commonly classified as net transmitters or net receivers according to whether they transmit more shocks to the system than they receive from it. This interpretation has been widely used in empirical studies of financial spillovers and crisis period connectedness. For example, Choi~\cite{choi2022volatility} applies the Diebold and Yilmaz spillover index to stock market volatility connectedness among Northeast Asia and the United States and identifies time varying net transmitter and receiver roles during the global financial crisis and the COVID 19 pandemic. Such studies demonstrate the usefulness of directional spillover measures for describing the hierarchy of shock transmission across financial markets.

However, the transmitter--receiver dichotomy does not describe every topological property of a connected network. A direct net transmitter need not be the most frequently occupied node in a source-tracing random walk. Conversely, a direct net receiver may be linked to several upstream suppliers through indirect paths. This motivates comparison of direct spillovers with stationary and multistep summaries whose orientation and normalization are stated explicitly.

\subsection{Markovian source tracing and deterministic network roles}

Building on the spillover-network framework, the present study constructs a Markov transition structure from the estimated spillover matrix. Related applications link spillover networks to Markov chains; for example, Zhang et al.~\cite{zhang2020detecting} model intra-urban housing-market spillovers through a spatial Markov structure. Once a matrix is transformed into a row-stochastic kernel, stationary distributions and flows characterize the selected walk~\cite{seabrook2023tutorial,masuda2017random}. Their substantive meaning, however, depends on the direction assigned to each transition. In this study the walk is explicitly reverse source tracing.

Fink et al.~\cite{fink2023centrality} propose Viral Centrality as a deterministic approximation motivated by expected independent-cascade reach on weighted directed networks. The recursion is exact for acyclic networks but can overestimate spread in cyclic networks. We apply that deterministic recursion to the full augmented reverse kernel. Consequently, the score is used to summarize approximate multistep upstream connectivity, rather than to label receiver nodes as forward cascade conduits.

\subsection{Financial network contagion, state-dependent spillovers, and diffusion-based role interpretation}
\label{subsec:network_contagion_diffusion_roles}

The need to move beyond a simple transmitter--receiver classification is also supported by the broader literature on financial network contagion, systemic risk, and diffusion-based centrality. Financial spillovers are not necessarily constant over time, because both the magnitude and direction of risk transmission may change across market conditions. Adams et al.~\cite{adams2014spillover} show that spillover effects among financial institutions are state dependent and differ across tranquil, normal, and volatile market states. This finding is closely related to the present study, which uses rolling TCI, rolling directional spillovers, and rolling diffusion measures to examine how cross-asset connectedness changes over time rather than assuming a fixed spillover structure.

The financial network literature further emphasizes that shock propagation depends on the structure of interdependence among nodes. Elliott et al.~\cite{elliott2014financial} show that financial contagion can arise through networked claims and obligations, and that the extent of failure cascades depends on the integration and diversification of the network. Similarly, Acemoglu et al.~\cite{acemoglu2015systemic} demonstrate that financial networks may exhibit a robust-yet-fragile property: greater connectivity can absorb small shocks, but may amplify large shocks once the system crosses a critical threshold. Gai and Kapadia~\cite{gai2010contagion} also show that contagion in financial networks is shaped not only by aggregate connectivity but also by idiosyncratic shocks, network structure, and market liquidity. These studies suggest that financial connectedness should be interpreted as a network-dependent propagation process rather than as a collection of isolated bilateral relationships.

This perspective is relevant for the empirical design because an asset's position cannot be inferred from direct NET spillover alone. Jackson and Pernoud~\cite{jackson2021systemic} emphasize that systemic risk can arise through direct exposures, common portfolios, fire sales, feedback effects, and changes in financial centrality. The present study does not analyze those structural channels, default contagion, or balance-sheet exposures; it uses that broader network perspective only as motivation for comparing descriptive connectedness summaries.

The importance of indirect paths is also consistent with the diffusion-centrality literature. Banerjee et al.~\cite{banerjee2013diffusion} show that the position of initially informed nodes affects subsequent diffusion of microfinance participation. In weighted directed networks, Fink et al.~\cite{fink2023centrality} propose Viral Centrality as a deterministic spread approximation. These studies motivate a multistep diagnostic, while the reverse orientation used here changes its application-specific interpretation to upstream-source connectivity.

Finally, the safe-haven literature suggests that gold, Bitcoin, and major currencies perform heterogeneous roles under different market conditions. Wang and Lee~\cite{wang2022gold} show that gold's hedging role varies across currencies and horizons. Feder-Sempach et al.~\cite{federsempach2024global} find that safe-haven properties differ across assets, markets, and crisis episodes. These findings support examining gold futures, exchange rates, and Bitcoin as non-equivalent nodes across direct and source-tracing network summaries.

Taken together, these studies justify comparing conventional NET spillovers with additional, explicitly oriented network measures. We combine VAR-GFEVD connectedness with a reverse source-tracing Markov kernel, stationary departure, and deterministic Viral Centrality. The resulting roles are conditional descriptive summaries, not distinct structural mechanisms established by the data.

\section{Data and Methodology}
\label{sec:data_methodology}

This section presents the data and the estimation pipeline in the order in
which it is applied: construction of daily returns
(Section~\ref{subsec:data_preprocessing}); VAR-GFEVD estimation of
directional spillovers (Section~\ref{subsec:var_gfevd}); transformation of
the positive net-spillover structure into a reverse source-tracing Markov
kernel and calculation of its stationary departure measure
(Section~\ref{subsec:markov_fout}); deterministic iterative evaluation of
Viral Centrality (Section~\ref{subsec:viral_centrality}); and the
rolling-window implementation together with distribution-based
diffusion-state indicators
(Section~\ref{subsec:rolling_implementation}).
Auxiliary specifications are collected in
Section~\ref{subsec:robustness_design}. All connectedness and diffusion
objects depend on the forecast horizon $H$; the argument $H$ is displayed in
definitions but suppressed in the surrounding text whenever no confusion
arises. Throughout, every measure introduced below is conditional on the
estimated VAR-GFEVD structure and is interpreted as statistical
forecast-error-variance connectedness, not as structural causality; this
caveat applies uniformly and is not repeated at each step.%

\subsection{Data, asset sets, and preprocessing}
\label{subsec:data_preprocessing}

Daily closing prices are obtained from Yahoo
Finance\footnote{\url{https://finance.yahoo.com/}} through the Python
package \texttt{yfinance}\footnote{\url{https://pypi.org/project/yfinance/}}
for the period September 18, 2014 to April 30, 2026. The baseline node set
is
\[
\mathcal{V}_0=
\{\mathrm{JPY/USD},\,\mathrm{GBP/USD},\,\mathrm{EUR/USD},\,\mathrm{CNY/USD},\,
\mathrm{Gold},\,\mathrm{BTC/USD}\},
\qquad N=6,
\]
and the auxiliary node set is
$\mathcal{V}_1=\mathcal{V}_0\cup\{\mathrm{U.S.\ Dollar\ Index}\}$ with $N=7$.
The U.S. Dollar Index enters only the auxiliary specification; the reasons
for excluding a broad dollar factor from the baseline network are given in
Section~\ref{subsec:robustness_design}. Table~\ref{tab:data_sources} reports
the tickers and quotation conventions. The raw \texttt{JPY=X} and
\texttt{CNY=X} series are quoted as local-currency units per U.S. dollar.
Before returns are calculated, each is transformed to its reciprocal so that
the analyzed series are JPY/USD and CNY/USD. A positive return therefore
denotes an increase in the displayed, possibly transformed, quotation.%

\begin{table}[!htbp]
\centering
\caption{Data sources and quotation conventions}
\label{tab:data_sources}
\scriptsize
\setlength{\tabcolsep}{5pt}
\renewcommand{\arraystretch}{1.15}
\begin{tabular}{llll}
\toprule
Asset label & Yahoo Finance ticker & Quotation used in returns & Asset set \\
\midrule
JPY/USD & \texttt{JPY=X} & USD per JPY (reciprocal) & Baseline \\
GBP/USD & \texttt{GBPUSD=X} & U.S. dollars per British pound & Baseline \\
EUR/USD & \texttt{EURUSD=X} & U.S. dollars per euro & Baseline \\
CNY/USD & \texttt{CNY=X} & USD per CNY (reciprocal) & Baseline \\
Gold futures & \texttt{GC=F} & U.S. dollars per troy ounce & Baseline \\
BTC/USD & \texttt{BTC-USD} & U.S. dollars per Bitcoin & Baseline \\
U.S. Dollar Index & \texttt{DX-Y.NYB} & Index level & Auxiliary \\
\bottomrule
\end{tabular}
\vspace{1mm}
\begin{minipage}{0.96\textwidth}
\scriptsize
\textit{Note}: \texttt{JPY=X} and \texttt{CNY=X} are downloaded as Japanese
yen and Chinese yuan per U.S. dollar, respectively. They are inverted before
return calculation and are labeled JPY/USD and CNY/USD. Thus, a positive
return means appreciation of the displayed numerator currency relative to the
U.S. dollar. The other series are used in their downloaded quotations.

\end{minipage}
\end{table}

For each asset $i\in\{1,\ldots,N\}$ the daily percentage return is%

\begin{equation}
r_{i,t}
=
100 \times \frac{P_{i,t}-P_{i,t-1}}{P_{i,t-1}},
\label{eq:return}
\end{equation}

where $P_{i,t}$ is the closing price of asset $i$ on date $t$. The vector
$\mathbf{y}_t=(r_{1,t},\ldots,r_{N,t})'$ collects the returns at time $t$,
stacked into the return matrix $\mathbf{Y}\in\mathbb{R}^{T\times N}$.

All series are aligned on the intersection of trading dates; dates with a
missing observation in any series are dropped and no interpolation is
applied. Because \texttt{GC=F} is a continuously linked nearby-futures
series, some daily return movements may reflect contract-roll adjustments
as well as changes in the underlying gold price. The resulting balanced
panel contains $T=2{,}914$ daily observations.

Table~\ref{tab:descriptive_stationarity} reports descriptive
statistics: the ADF and Phillips--Perron tests reject the unit-root null for
every return series, so the returns are suitable inputs for the stationary
VAR analysis below. The complete estimation pipeline is summarized in
Algorithm~\ref{alg:spillover_diffusion}.%

\begin{table}[!htbp]
\centering
\caption{Descriptive statistics and stationarity test results}
\label{tab:descriptive_stationarity}
\scriptsize
\setlength{\tabcolsep}{4pt}
\renewcommand{\arraystretch}{1.15}
\begin{tabular}{lcccccccc}
\toprule
Variable
& Obs.
& Mean
& Std. Dev.
& Skewness
& Kurtosis
& JB
& ADF
& PP $Z_{\rho}$ \\
\midrule
JPY/USD
& 2914 & -0.012 & 0.579 & 0.293 & 6.215
& 1296.68$^{***}$ & -13.58$^{***}$ & -2924.99$^{***}$ \\

GBP/USD
& 2914 & -0.005 & 0.582 & -0.827 & 15.917
& 20590.40$^{***}$ & -15.48$^{***}$ & -2736.90$^{***}$ \\

EUR/USD
& 2914 & -0.002 & 0.508 & 0.084 & 5.512
& 769.58$^{***}$ & -14.76$^{***}$ & -2836.42$^{***}$ \\

CNY/USD
& 2914 & -0.003 & 0.296 & 0.285 & 11.886
& 9626.63$^{***}$ & -12.59$^{***}$ & -3438.24$^{***}$ \\

Gold futures
& 2914 & 0.051 & 1.033 & -0.534 & 11.450
& 8807.94$^{***}$ & -14.59$^{***}$ & -2910.42$^{***}$ \\

BTC/USD
& 2914 & 0.264 & 4.189 & 0.032 & 9.611
& 5307.05$^{***}$ & -13.21$^{***}$ & -3067.50$^{***}$ \\

U.S. Dollar Index
& 2914 & 0.006 & 0.439 & -0.121 & 4.641
& 334.07$^{***}$ & -14.50$^{***}$ & -2772.83$^{***}$ \\
\bottomrule
\end{tabular}

\vspace{1mm}

\begin{minipage}{0.98\textwidth}
\scriptsize
\textit{Note}:
JB denotes the Jarque-Bera normality test statistic.
ADF denotes the Augmented Dickey--Fuller unit-root test statistic, and
PP $Z_{\rho}$ denotes the Phillips--Perron rho statistic.
For the JB test, statistical significance indicates rejection of the null hypothesis of normality.
For the ADF and PP tests, statistical significance indicates rejection of the unit-root null hypothesis.
$^{***}$, $^{**}$, and $^{*}$ denote significance at the 1\%, 5\%, and 10\% levels, respectively.
The ADF test was conducted with an intercept, and the lag length was selected by the Bayesian Information Criterion with a maximum lag of 20.
The PP $Z_{\rho}$ test was conducted with truncation lag 9.
\end{minipage}
\end{table}

\subsection{VAR-GFEVD connectedness}
\label{subsec:var_gfevd}

Directional return connectedness is estimated with the generalized VAR
framework of Diebold and Yilmaz~\cite{diebold2012better,diebold2014network}.
It builds on the generalized impulse-response framework of Koop et
al.~\cite{koop1996impulse} and the ordering-invariant generalized
forecast-error variance decomposition of Pesaran and
Shin~\cite{pesaran1998generalized}. The return vector follows a VAR($p$),%

\begin{equation}
\mathbf{y}_t
=
\mathbf{c}
+
\sum_{\ell=1}^{p}
\mathbf{A}_{\ell}\mathbf{y}_{t-\ell}
+
\boldsymbol{\varepsilon}_t,
\qquad
\mathbb{E}(\boldsymbol{\varepsilon}_t\mid\mathcal{F}_{t-1})=\mathbf{0},
\qquad
\mathbb{E}(\boldsymbol{\varepsilon}_t\boldsymbol{\varepsilon}_t^{\prime})
=\boldsymbol{\Sigma},
\label{eq:var_model}
\end{equation}

where $\mathbf{c}$ is the intercept, $\mathbf{A}_\ell$ are lag matrices,
and $\boldsymbol{\varepsilon}_t$ is a serially uncorrelated innovation with
covariance $\boldsymbol{\Sigma}$. The lag order minimizes the Bayesian
Information Criterion~\cite{schwarz1978estimating},%

\begin{equation}
\mathrm{BIC}(p)
=
\ln \left| \widehat{\boldsymbol{\Sigma}}_{p} \right|
+
\frac{\ln(T)}{T} k_{p},
\label{eq:bic}
\end{equation}

where $\widehat{\boldsymbol{\Sigma}}_p$ is the residual covariance under lag
length $p$, $T$ is the effective estimation-sample size used by the BIC
implementation, and $k_p$ is its corresponding system parameter count. As
Table~\ref{tab:bic_results} reports, the criterion selects $p=1$ for the
baseline six-asset network and $p=2$ for the auxiliary seven-asset network.
These orders are held fixed across all rolling windows in
Section~\ref{subsec:rolling_implementation}, so that time variation in the
estimated measures reflects changes in the spillover structure rather than
changes in the lag specification; they are model-selection results, not
evidence of structural causality.%

\begin{table}[!htbp]
\centering
\caption{BIC results for lag selection}
\label{tab:bic_results}
\scriptsize
\setlength{\tabcolsep}{4pt}
\renewcommand{\arraystretch}{1.10}
\begin{tabular}{lcccccccccc}
\toprule
\multicolumn{11}{c}{Panel A. Auxiliary asset set including the U.S. Dollar Index} \\
\midrule
Lag & 1 & 2 & 3 & 4 & 5 & 6 & 7 & 8 & 9 & 10 \\
\midrule
BIC & -7.36 & -7.50 & -7.47 & -7.39 & -7.30 & -7.19 & -7.08 & -6.96 & -6.84 & -6.74 \\
\midrule
\multicolumn{11}{c}{Panel B. Baseline asset set excluding the U.S. Dollar Index} \\
\midrule
Lag & 1 & 2 & 3 & 4 & 5 & 6 & 7 & 8 & 9 & 10 \\
\midrule
BIC & -3.98 & -3.90 & -3.81 & -3.73 & -3.65 & -3.56 & -3.49 & -3.39 & -3.30 & -3.23 \\
\bottomrule
\end{tabular}

\vspace{1mm}

\begin{minipage}{0.94\textwidth}
\small
\textit{Note}: The optimal lag length is selected by minimizing the Bayesian Information Criterion. Panel A reports the BIC values for the auxiliary asset set including the U.S. Dollar Index, while Panel B reports the results for the baseline asset set excluding the U.S. Dollar Index. Lower BIC values indicate a preferred model after accounting for both model fit and parameter complexity.
\end{minipage}
\end{table}

For a stable VAR, let
\[
\boldsymbol{\mu}
=
\left(\mathbf{I}_N-\sum_{\ell=1}^{p}\mathbf{A}_{\ell}\right)^{-1}
\mathbf{c}.
\]
Its mean-adjusted moving-average representation is
\[
\mathbf{y}_t-\boldsymbol{\mu}
=
\sum_{h=0}^{\infty}
\boldsymbol{\Phi}_h\boldsymbol{\varepsilon}_{t-h},
\qquad
\boldsymbol{\Phi}_0=\mathbf{I}_N.
\]
The $H$-step generalized forecast-error variance share associated with
innovation $j$ for affected asset $i$ is%

\begin{equation}
\theta_{ij}^{g}(H)
=
\frac{
\sigma_{jj}^{-1}
\sum_{h=0}^{H-1}
\left(
\mathbf{e}_{i}^{\prime}
\boldsymbol{\Phi}_{h}
\boldsymbol{\Sigma}
\mathbf{e}_{j}
\right)^{2}
}{
\sum_{h=0}^{H-1}
\mathbf{e}_{i}^{\prime}
\boldsymbol{\Phi}_{h}
\boldsymbol{\Sigma}
\boldsymbol{\Phi}_{h}^{\prime}
\mathbf{e}_{i}
},
\label{eq:gfevd}
\end{equation}

where $\sigma_{jj}$ is the $j$th diagonal element of $\boldsymbol{\Sigma}$
and $\mathbf{e}_i$ a selection
vector~\cite{koop1996impulse,pesaran1998generalized}. Because generalized
shares need not sum to one across $j$, each row is normalized,%

\begin{equation}
\widetilde{\theta}_{ij}^{g}(H)
=
\frac{
\theta_{ij}^{g}(H)
}{
\sum_{k=1}^{N}
\theta_{ik}^{g}(H)
},
\label{eq:normalized_gfevd}
\end{equation}

so that $\widetilde{\theta}_{ij}^{g}(H)$ is the share of the $H$-step
forecast-error variance of asset $i$ attributable to shocks in asset $j$:
rows index affected assets and columns index shock sources. The directional
measures are%

\begin{equation}
\mathrm{FROM}_{i}(H)=100\!\!\sum_{\substack{j=1 \\ j\ne i}}^{N}\!\widetilde{\theta}_{ij}^{g}(H),
\qquad
\mathrm{TO}_{i}(H)=100\!\!\sum_{\substack{j=1 \\ j\ne i}}^{N}\!\widetilde{\theta}_{ji}^{g}(H),
\qquad
\mathrm{NET}_{i}(H)=\mathrm{TO}_{i}(H)-\mathrm{FROM}_{i}(H),
\label{eq:to_from_net}
\end{equation}

and the total connectedness index is%

\begin{equation}
\mathrm{TCI}(H)
=
100
\times
\frac{
\sum_{i=1}^{N}
\sum_{\substack{j=1 \\ j \ne i}}^{N}
\widetilde{\theta}_{ij}^{g}(H)
}{N}.
\label{eq:tci}
\end{equation}

A positive $\mathrm{NET}_i(H)$ identifies asset $i$ as a net transmitter and
a negative value as a net receiver; the TCI is the average cross-asset
variance contribution and serves as the system-wide connectedness level.%

\subsection{From net spillovers to a reverse source-tracing Markov kernel}
\label{subsec:signed_network}
\label{subsec:markov_fout}

The normalized GFEVD matrix is next reinterpreted as a weighted directed
network in which assets are nodes and net variance contributions are
directed links~\cite{diebold2014network}. The pairwise net spillover from
asset $i$ to asset $j$ is%

\begin{equation}
\mathcal{N}_{ij}(H)
=
\widetilde{\theta}_{ji}^{g}(H)
-
\widetilde{\theta}_{ij}^{g}(H),
\label{eq:pairwise_net_spillover}
\end{equation}

and its positive part%

\begin{equation}
\mathcal{A}_{ij}^{+}(H)
=
\max
\left\{
\mathcal{N}_{ij}(H),0
\right\}
\label{eq:positive_net_spillover}
\end{equation}

retains, for each ordered pair, only the dominant direction of net
transmission. The matrix $\mathcal{A}^{+}(H)$ is therefore a directed
dominance network of pairwise net spillovers, not the gross spillover
network. The following identity shows that this reduction preserves exactly
the directional information summarized by NET.%

\begin{proposition}[Signed net spillovers and network imbalance]
\label{prop:net_imbalance}
Let \(\widetilde{\boldsymbol{\Theta}}^g(H)=[\widetilde{\theta}_{ij}^g(H)]_{i,j=1}^{N}\) be the row-normalized GFEVD matrix, and let \(\mathcal{N}(H)=[\mathcal{N}_{ij}(H)]_{i,j=1}^{N}\) be defined by Eq.~\eqref{eq:pairwise_net_spillover}. Then \(\mathcal{N}(H)\) is skew-symmetric and
\begin{equation}
\mathrm{NET}_{i}(H)
=
100\sum_{j=1}^{N}\mathcal{N}_{ij}(H).
\label{eq:net_as_signed_sum}
\end{equation}
Moreover, if \(\mathcal{A}_{ij}^{+}(H)=\max\{\mathcal{N}_{ij}(H),0\}\), then
\begin{equation}
\mathrm{NET}_{i}(H)
=
100
\left[
\sum_{j=1}^{N}\mathcal{A}_{ij}^{+}(H)
-
\sum_{j=1}^{N}\mathcal{A}_{ji}^{+}(H)
\right].
\label{eq:net_as_positive_imbalance}
\end{equation}
\end{proposition}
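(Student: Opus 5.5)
The proof is a direct computation from the definitions in Eqs.~\eqref{eq:to_from_net} and \eqref{eq:pairwise_net_spillover}. It has three short steps.

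\emph{Skew-symmetry.} Swapping the indices in Eq.~\eqref{eq:pairwise_net_spillover} gives
\[
\mathcal{N}_{ji}(H)=\widetilde{\theta}_{ij}^{g}(H)-\widetilde{\theta}_{ji}^{g}(H)=-\mathcal{N}_{ij}(H).
\]
In particular $\mathcal{N}_{ii}(H)=0$.

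\emph{Identity \eqref{eq:net_as_signed_sum}.} Since the diagonal term vanishes, we may sum over all $j$ rather than $j\neq i$. Splitting the sum gives
\[
100\sum_{j}\mathcal{N}_{ij}(H)=100\sum_{j\neq i}\widetilde{\theta}_{ji}^{g}(H)-100\sum_{j\neq i}\widetilde{\theta}_{ij}^{g}(H)=\mathrm{TO}_i(H)-\mathrm{FROM}_i(H)=\mathrm{NET}_i(H).
\]
Row normalization \eqref{eq:normalized_gfevd} is not needed here; only the definitions of TO and FROM are used.

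\emph{Identity \eqref{eq:net_as_positive_imbalance}.} Use the elementary decomposition $x=\max\{x,0\}-\max\{-x,0\}$ with $x=\mathcal{N}_{ij}(H)$. Skew-symmetry gives $\max\{-\mathcal{N}_{ij}(H),0\}=\max\{\mathcal{N}_{ji}(H),0\}=\mathcal{A}_{ji}^{+}(H)$, so
\[
\mathcal{N}_{ij}(H)=\mathcal{A}_{ij}^{+}(H)-\mathcal{A}_{ji}^{+}(H).
\]
Summing over $j$ and inserting this into \eqref{eq:net_as_signed_sum} yields \eqref{eq:net_as_positive_imbalance}.

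None of these steps is a real obstacle. The one place needing care is the pointwise decomposition $\mathcal{N}_{ij}=\mathcal{A}^{+}_{ij}-\mathcal{A}^{+}_{ji}$. It follows from the two cases $\mathcal{N}_{ij}\ge 0$ and $\mathcal{N}_{ij}<0$, where in each case exactly one of the two positive parts is nonzero, and the diagonal case gives $0-0=0$.
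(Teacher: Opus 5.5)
Your proposal is correct and follows essentially the same route as the paper: skew-symmetry by swapping indices, cancellation of the diagonal terms to recover $\mathrm{TO}_i(H)-\mathrm{FROM}_i(H)$, and the pointwise identity $\mathcal{N}_{ij}(H)=\mathcal{A}_{ij}^{+}(H)-\mathcal{A}_{ji}^{+}(H)$ summed over $j$. Your case check of that pointwise identity, via $x=\max\{x,0\}-\max\{-x,0\}$, spells out a step the paper states directly from skew-symmetry.
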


\begin{proof}
Skew-symmetry follows directly from
\[
\mathcal{N}_{ji}(H)
=
\widetilde{\theta}_{ij}^{g}(H)
-
\widetilde{\theta}_{ji}^{g}(H)
=
-\mathcal{N}_{ij}(H).
\]
Furthermore,
\[
\sum_{j=1}^{N}\mathcal{N}_{ij}(H)
=
\sum_{j=1}^{N}\widetilde{\theta}_{ji}^{g}(H)
-
\sum_{j=1}^{N}\widetilde{\theta}_{ij}^{g}(H).
\]
The diagonal terms cancel, so the right-hand side equals
\[
\sum_{j\ne i}\widetilde{\theta}_{ji}^{g}(H)
-
\sum_{j\ne i}\widetilde{\theta}_{ij}^{g}(H)
=
\frac{\mathrm{TO}_{i}(H)-\mathrm{FROM}_{i}(H)}{100}
=
\frac{\mathrm{NET}_{i}(H)}{100}.
\]
Finally, skew-symmetry implies
\(\mathcal{N}_{ij}(H)=\mathcal{A}_{ij}^{+}(H)-\mathcal{A}_{ji}^{+}(H)\). Summing over \(j\) gives Eq.~\eqref{eq:net_as_positive_imbalance}.
\end{proof}

For each asset define the incoming and outgoing positive net-spillover
strengths $s_i^{\mathrm{in}}(H)=\sum_{k}\mathcal{A}^{+}_{ki}(H)$ and
$s_i^{\mathrm{out}}(H)=\sum_{k}\mathcal{A}^{+}_{ik}(H)$; by
Proposition~\ref{prop:net_imbalance},
$\mathrm{NET}_i(H)=100\,[s_i^{\mathrm{out}}(H)-s_i^{\mathrm{in}}(H)]$, so a
net transmitter is an asset whose outgoing strength exceeds its incoming
strength. A node with
$s_i^{\mathrm{in}}(H)=0$ is called a \emph{pure net source} and a node with
$s_i^{\mathrm{out}}(H)=0$ a \emph{pure net sink}.

We define a random walk on the edge-reversed dominance network. Starting
from asset $i$, the walk moves toward assets that are upstream
net-spillover suppliers of $i$: the step $i\to j$ is weighted by the net
spillover that $j$ transmits into $i$. Thus, the walk follows the reverse
orientation of the transmission edges. The resulting kernel is a
source-tracing device, not a forward shock-propagation kernel. It is defined
as%

\begin{equation}
P_{ij}(H)
=
\begin{cases}
\dfrac{\mathcal{A}_{ji}^{+}(H)}
{s_i^{\mathrm{in}}(H)},
&
\text{if } s_i^{\mathrm{in}}(H)>0, \\[3mm]
\dfrac{1}{N},
&
\text{if } s_i^{\mathrm{in}}(H)=0.
\end{cases}
\label{eq:transition_matrix}
\end{equation}

Each row of $\mathbf{P}(H)=[P_{ij}(H)]_{i,j=1}^{N}$ is nonnegative and sums
to one, so $\mathbf{P}(H)$ is a row-stochastic Markov
kernel~\cite{lovasz1993random,masuda2017random,seabrook2023tutorial}. For
$s_i^{\mathrm{in}}(H)>0$, $P_{ij}(H)$ is the conditional share of upstream
supplier $j$ among the positive net-spillover suppliers of asset $i$. Row
normalization preserves these relative weights but removes the total
incoming magnitude $s_i^{\mathrm{in}}(H)$. For a pure net source, the
uniform row is a restart closure introduced only to keep the reverse kernel
row-stochastic. It adds no directional preference, but it creates augmented
transitions, including a self-transition, that are absent from
$\mathcal{A}^{+}(H)$. Consequently, all Markov and Viral Centrality results
computed from $\mathbf{P}(H)$ are conditional on row normalization and this
restart closure.
Absolute direct transmission remains measured by TO,
FROM, and NET.%

\begin{proposition}[Pure-sink bound under the uniform-restart closure]
\label{prop:source_concentration}
Let \(U=\{k:\,s_k^{\mathrm{in}}(H)=0\}\) denote the set of pure net sources, and let \(\boldsymbol{\pi}(H)\) be any stationary distribution of \(\mathbf{P}(H)\). If asset \(r\) is a pure net sink, that is, \(s_r^{\mathrm{out}}(H)=0\) (equivalently \(\mathcal{A}_{rk}^{+}(H)=0\) for all \(k\)), then
\begin{equation}
\pi_r(H)=\frac{1}{N}\sum_{k\in U}\pi_k(H)\le\frac{1}{N}.
\label{eq:sink_bound}
\end{equation}
Consequently, a pure net sink receives at most $1/N$ stationary mass under
the selected reverse orientation and uniform-restart closure. This bound
does not imply that every net source has high stationary mass, that mass
generally concentrates on sources, or that the ranking is invariant to an
alternative closure.
\end{proposition}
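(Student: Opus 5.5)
The plan is to write out the stationarity equation at the coordinate $r$ and show that only the restart rows can contribute mass to $r$. By definition, $\pi_r(H)=\sum_{i=1}^{N}\pi_i(H)P_{ir}(H)$. I will split the sum over $i$ into $i\in U$ (pure net sources, whose rows are the uniform restart closure) and $i\notin U$ (rows defined by Eq.~\eqref{eq:transition_matrix} with $s_i^{\mathrm{in}}(H)>0$).

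For $i\notin U$, the kernel gives $P_{ir}(H)=\mathcal{A}^{+}_{ri}(H)/s_i^{\mathrm{in}}(H)$. The hypothesis that $r$ is a pure net sink means $s_r^{\mathrm{out}}(H)=\sum_k\mathcal{A}^{+}_{rk}(H)=0$. Since every $\mathcal{A}^{+}_{rk}(H)$ is nonnegative, this forces $\mathcal{A}^{+}_{rk}(H)=0$ for all $k$, and in particular $\mathcal{A}^{+}_{ri}(H)=0$. Hence $P_{ir}(H)=0$ for every $i\notin U$. This is the point where the reverse orientation matters: a sink supplies no one, so no tracing step ever leads into it from a node with positive incoming strength.

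For $i\in U$, the restart closure gives $P_{ir}(H)=1/N$, including the case $i=r$ if $r$ itself happens to lie in $U$ (an isolated node). Combining the two parts yields $\pi_r(H)=\frac{1}{N}\sum_{k\in U}\pi_k(H)$. Because $\boldsymbol{\pi}(H)$ is a probability vector, $\sum_{k\in U}\pi_k(H)\le 1$, which gives the bound $\pi_r(H)\le 1/N$.

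No step is a real obstacle; the only care needed is bookkeeping. First, nonnegativity of the entries of $\mathcal{A}^{+}(H)$ is what turns the zero row sum into entrywise zeros. Second, the case $r\in U$ must be handled by the uniform row rather than by the formula in Eq.~\eqref{eq:transition_matrix}, and the argument above covers it. The argument uses no uniqueness of $\boldsymbol{\pi}(H)$, so it applies to any stationary distribution. The closing caveats in the statement are interpretive remarks and require no proof.
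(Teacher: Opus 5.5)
Your proposal is correct and follows essentially the same argument as the paper: expand $\pi_r(H)=\sum_i \pi_i(H)P_{ir}(H)$, note that $P_{ir}(H)=\mathcal{A}^{+}_{ri}(H)/s_i^{\mathrm{in}}(H)=0$ for every $i\notin U$ because $r$ supplies no one, and that each restart row contributes exactly $1/N$. Your extra bookkeeping (nonnegativity turning the zero row sum into entrywise zeros, and the case $r\in U$) is sound and merely makes explicit what the paper's proof leaves implicit.
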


\begin{proof}
Write \(\pi_r(H)=\sum_{i=1}^{N}\pi_i(H)P_{ir}(H)\). For any \(i\notin U\) we have \(s_i^{\mathrm{in}}(H)>0\) and \(P_{ir}(H)=\mathcal{A}_{ri}^{+}(H)/s_i^{\mathrm{in}}(H)\); since \(r\) is a pure net sink, \(\mathcal{A}_{rk}^{+}(H)=0\) for all \(k\), hence \(\mathcal{A}_{ri}^{+}(H)=0\) and \(P_{ir}(H)=0\). For \(i\in U\) the row is uniform, so \(P_{ir}(H)=1/N\). Therefore
\[
\pi_r(H)=\frac{1}{N}\sum_{i\in U}\pi_i(H)\le\frac{1}{N}\sum_{i=1}^{N}\pi_i(H)=\frac{1}{N}.
\]
\end{proof}

Being finite and row-stochastic, $\mathbf{P}(H)$ admits at least one
stationary distribution $\boldsymbol{\pi}(H)$,%

\begin{equation}
\boldsymbol{\pi}^{\prime}(H)
=
\boldsymbol{\pi}^{\prime}(H)\mathbf{P}(H),
\qquad
\sum_{i=1}^{N}\pi_i(H)=1,
\label{eq:stationary_distribution}
\end{equation}

which is unique when the induced chain is irreducible; empirically,
$\boldsymbol{\pi}(H)$ is computed as a normalized nonnegative left
eigenvector of $\mathbf{P}(H)$ associated with eigenvalue one. The quantity
$\pi_i(H)$ is the long-run occupation probability of asset $i$ under the
reverse source-tracing walk. A large value means that the node is frequently
reached while recursively tracing upstream suppliers, conditional on row
normalization and the restart closure. It does not by itself imply large
outgoing spillover magnitude. The stationary departure measure of asset $i$
is defined as%

\begin{equation}
F_{\mathrm{out},i}(H)
:=
\pi_i(H)
\sum_{\substack{j=1 \\ j\ne i}}^{N}
P_{ij}(H)
=
\pi_i(H)
\left(
1-P_{ii}(H)
\right).
\label{eq:fout}
\end{equation}

\begin{remark}
\label{rem:fout_vs_pi}
Since $\mathcal{A}^{+}_{ii}(H)=0$, one has $P_{ii}(H)=0$ and hence
$F_{\mathrm{out},i}(H)=\pi_i(H)$ for every non-source asset, whereas a pure
net source has $P_{ii}(H)=1/N$ and
$F_{\mathrm{out},i}(H)=(1-1/N)\,\pi_i(H)$. Thus $F_{\mathrm{out}}$ coincides
with $\boldsymbol{\pi}$ up to a mild $(1-1/N)$ deflation applied only to
pure-source nodes; it is reported alongside $\boldsymbol{\pi}$ because
Proposition~\ref{prop:fout_flux} endows it with an exact stationary-flux
interpretation.
\end{remark}

\begin{proposition}[Stationary departure probability in the augmented source-tracing chain]
\label{prop:fout_flux}
Let \(\{X_t\}_{t\ge 0}\) be a finite Markov chain with transition matrix \(\mathbf{P}(H)\) and stationary distribution \(\boldsymbol{\pi}(H)\). Then
\begin{equation}
F_{\mathrm{out},i}(H)
=
\Pr_{\boldsymbol{\pi}}
\left(
X_t=i,\;X_{t+1}\ne i
\right).
\label{eq:fout_stationary_flux}
\end{equation}
Moreover,
\begin{equation}
\sum_{i=1}^{N}F_{\mathrm{out},i}(H)
=
\Pr_{\boldsymbol{\pi}}(X_{t+1}\ne X_t),
\label{eq:fout_total_flux}
\end{equation}
which is the total stationary probability of moving across distinct asset nodes in one transition.
\end{proposition}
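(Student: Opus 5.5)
The plan is a direct computation from the definition of a stationary Markov chain, with no appeal to earlier propositions beyond the definition of $\mathbf{P}(H)$ in Eq.~\eqref{eq:transition_matrix} and of $F_{\mathrm{out},i}(H)$ in Eq.~\eqref{eq:fout}. Throughout, $\Pr_{\boldsymbol{\pi}}$ denotes the law of the chain started from $X_0\sim\boldsymbol{\pi}(H)$.

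\emph{Step 1: stationarity at every time.} By induction from $\boldsymbol{\pi}'(H)=\boldsymbol{\pi}'(H)\mathbf{P}(H)$, the marginal law of $X_t$ equals $\boldsymbol{\pi}(H)$ for every $t\ge 0$, so $\Pr_{\boldsymbol{\pi}}(X_t=i)=\pi_i(H)$.

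\emph{Step 2: the joint law of $(X_t,X_{t+1})$.} The Markov property gives
\[
\Pr_{\boldsymbol{\pi}}(X_t=i,\;X_{t+1}=j)=\pi_i(H)\,P_{ij}(H).
\]
The event $\{X_t=i,\;X_{t+1}\ne i\}$ is a disjoint union over $j\ne i$. Summing therefore yields
\[
\Pr_{\boldsymbol{\pi}}(X_t=i,\;X_{t+1}\ne i)=\pi_i(H)\sum_{j\ne i}P_{ij}(H).
\]
This is exactly the first expression in Eq.~\eqref{eq:fout}, and row-stochasticity of $\mathbf{P}(H)$ turns it into $\pi_i(H)(1-P_{ii}(H))$. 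This establishes Eq.~\eqref{eq:fout_stationary_flux}.

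\emph{Step 3: the total flux.} The events $\{X_t=i,\;X_{t+1}\ne i\}$ for $i=1,\dots,N$ are pairwise disjoint, because they are distinguished by the value of $X_t$. Their union is $\{X_{t+1}\ne X_t\}$. Summing Eq.~\eqref{eq:fout_stationary_flux} over $i$ by finite additivity then gives Eq.~\eqref{eq:fout_total_flux}.

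Two remarks complete the argument.

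\emph{Non-uniqueness of $\boldsymbol{\pi}(H)$.} The statement allows any stationary distribution, and the argument never uses irreducibility. It therefore holds for whichever $\boldsymbol{\pi}(H)$ is selected, including when the chain has several closed classes.

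\emph{The augmented self-transition.} For a pure net source, $P_{ii}(H)=1/N$ is an artifact of the restart closure. It is correctly excluded from the departure event, consistent with Remark~\ref{rem:fout_vs_pi}.

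There is no genuine obstacle here. The only point needing care is the justification that $X_t$ has law $\boldsymbol{\pi}(H)$ for all $t$, since the expression is stated for arbitrary $t$; Step~1 handles this.
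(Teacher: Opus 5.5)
Your proposal is correct and follows essentially the same direct computation as the paper: use the stationary marginal $\Pr_{\boldsymbol{\pi}}(X_t=i)=\pi_i(H)$, sum $P_{ij}(H)$ over $j\ne i$ to obtain $\pi_i(H)\bigl(1-P_{ii}(H)\bigr)=F_{\mathrm{out},i}(H)$, then sum over $i$ to get the total flux. The only difference is that you spell out two points the paper leaves implicit: why $X_t$ has law $\boldsymbol{\pi}(H)$ for every $t$, and why the events $\{X_t=i,\,X_{t+1}\ne i\}$ are disjoint with union $\{X_{t+1}\ne X_t\}$.
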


\begin{proof}
Under stationarity, \(\Pr_{\boldsymbol{\pi}}(X_t=i)=\pi_i(H)\). Conditional on \(X_t=i\), the probability of leaving node \(i\) is
\[
\Pr(X_{t+1}\ne i\mid X_t=i)
=
\sum_{j\ne i}P_{ij}(H)
=
1-P_{ii}(H).
\]
Therefore,
\[
\Pr_{\boldsymbol{\pi}}(X_t=i,X_{t+1}\ne i)
=
\pi_i(H)(1-P_{ii}(H))
=
F_{\mathrm{out},i}(H).
\]
Summing this identity over all \(i\) gives Eq.~\eqref{eq:fout_total_flux}.
\end{proof}

Proposition~\ref{prop:fout_flux} shows that $F_{\mathrm{out},i}(H)$ is the
stationary frequency with which the augmented reverse source-tracing chain
occupies asset $i$ and then moves to a distinct asset. The subscript
``out'' refers to departure from a node of this reverse chain; it does not
denote forward economic shock outflow from the asset. For non-source nodes,
$F_{\mathrm{out},i}(H)=\pi_i(H)$ exactly, while a pure-source value also
reflects the selected restart closure.%

\subsection{Deterministic Viral Centrality on the source-tracing kernel}
\label{subsec:viral_centrality}

Viral Centrality (VC) is used as an auxiliary measure of multistep
connectivity on the weighted directed network. Adapting the deterministic
probability-propagation structure of Fink et
al.~\cite{fink2023centrality}, VC is designed to approximate the expected reach of an
independent-cascade process; it is not obtained by simulating random cascade
realizations. We apply the recursion directly to the full augmented kernel
$\mathbf{P}(H)$ used in the reported computation. Thus, a pure-source
restart row and any diagonal transition created by that closure are retained
in the VC input, and the resulting score is conditional on that convention.

We encode the selected seed as already activated and therefore no longer
susceptible at initialization. For a seed asset $s$, let
$S_i^{(t)}(s)$ denote the algorithm's approximate probability that asset
$i$ remains susceptible after iteration $t$, and let $A_i^{(t)}(s)$ denote
its approximate probability of new activation at iteration $t$. The
initialization is

\begin{equation}
S_i^{(0)}(s)
=
1-\mathbf{1}\{i=s\},
\qquad
A_i^{(0)}(s)
=
\mathbf{1}\{i=s\}.
\label{eq:vc_initialization}
\end{equation}

In the implementation, $\mathbf{P}(H)$ is stored in row-stochastic form,
where $P_{ij}(H)$ denotes the transition weight from node $i$ to node $j$.
For the probability-update equations below, the corresponding incoming
weight to node $i$ from node $j$ is therefore written as $P_{ji}(H)$.
At iteration $t$, the probability that asset $i$ receives no activation
from the nodes activated in the preceding iteration is

\begin{equation}
B_i^{(t)}(s)
=
\prod_{j=1}^{N}
\left[
1-P_{ji}(H)A_j^{(t-1)}(s)
\right].
\label{eq:vc_no_activation}
\end{equation}

The newly activated and remaining susceptible probabilities are then
updated recursively as

\begin{equation}
A_i^{(t)}(s)
=
\left[
1-B_i^{(t)}(s)
\right]
S_i^{(t-1)}(s),
\label{eq:vc_activation_update}
\end{equation}

and

\begin{equation}
S_i^{(t)}(s)
=
B_i^{(t)}(s)
S_i^{(t-1)}(s).
\label{eq:vc_susceptible_update}
\end{equation}

The corresponding cumulative approximate activation mass at iteration $t$ is

\begin{equation}
C_s^{(t)}(H)
=
\sum_{i=1}^{N}
\left[
1-S_i^{(t)}(s)
\right].
\label{eq:vc_total_activation}
\end{equation}

Iterations continue until the change in total activation mass satisfies

\begin{equation}
\left|
C_s^{(t)}(H)-C_s^{(t-1)}(H)
\right|
<
\varepsilon,
\label{eq:vc_convergence}
\end{equation}

or the maximum number of iterations is reached. In the implementation,
$\varepsilon=10^{-6}$ and the maximum number of iterations is 200.
The Viral Centrality score of seed asset $s$ is then

\begin{equation}
\mathrm{VC}_s(H)
=
C_s^{(*)}(H)-1
=
\sum_{i=1}^{N}
\left[
1-S_i^{(*)}(s)
\right]
-1,
\label{eq:viral_centrality}
\end{equation}

where $*$ denotes the converged iteration and the subtraction of one
excludes the seed asset itself.

The calculation is deterministic conditional on $\mathbf{P}(H)$:
no random cascade realizations are generated and no Monte Carlo sampling
error is involved. Accordingly, the reported VC values should be
interpreted as iterative Viral Centrality approximations implied by the
augmented transition network rather than as exact Monte Carlo expectations
or estimates of realized contagion. The Fink et al. recursion is exact on
acyclic networks but may overestimate independent-cascade reach when cycles
are present. Because $\mathbf{P}(H)$ is row-stochastic and hence has spectral
radius one, we use VC as an algorithm-defined topology score and do not claim
that it is numerically identical to independent-cascade expected spread.

Because each non-restart step of $\mathbf{P}(H)$ is oriented toward dominant
net-spillover sources, a high VC value indicates broad conditional
connectivity to upstream source nodes within the constructed reverse
topology. For a pure net source, however, VC also reflects the artificial
restart row and cannot be interpreted as literal upstream reach. In all
cases, VC is a descriptive network score rather than evidence of structural
or realized financial contagion.

\subsection{Rolling implementation and diffusion-state indicators}
\label{subsec:rolling_implementation}
\label{subsec:diffusion_states}

Time variation is captured by re-estimating the entire pipeline on rolling
windows of $W=100$ trading days,
$\mathcal{W}_t=\{\mathbf{y}_{t-W+1},\ldots,\mathbf{y}_t\}$, which yields
$T_W=T-W+1=2{,}815$ endpoint estimates of
$\mathrm{TCI}_t$, $\mathrm{TO}_{i,t}$, $\mathrm{FROM}_{i,t}$,
$\mathrm{NET}_{i,t}$, $F_{\mathrm{out},i,t}$, and $\mathrm{VC}_{i,t}$.

The rolling design is adopted in preference to time-varying-parameter (TVP)
VAR connectedness in the spirit of Antonakakis et
al.~\cite{antonakakis2020refined}, and this choice is deliberate. Rolling
estimation introduces no additional hyperparameters --- no forgetting
factors, decay rates, or prior tightness --- so each windowed estimate is an
ordinary least-squares object that is transparent and exactly reproducible
window by window. More importantly, the contribution of this paper lies in
the spillover-to-diffusion layer of Sections~\ref{subsec:markov_fout}
and~\ref{subsec:viral_centrality}, which operates on any estimated GFEVD
matrix: the reverse-kernel construction, stationary departure measure, and Viral
Centrality are orthogonal to the choice of time-varying estimator and could
equally be applied on top of TVP-VAR-based connectedness estimates.
Examining the framework under smoother TVP-VAR dynamics is left as a
natural extension (Section~\ref{sec:conclusion}).%

The baseline specification uses a forecast horizon of $H=10$ and a rolling
window of $W=100$.
Variation in the rolling-window size, $W\in\{52,100,150\}$, is reported in
Section~\ref{supsec:rolling_window_robustness} of the Supplementary Material.

Forecast-horizon variation, $H\in\{5,10,15\}$, is examined jointly with the
alternative window sizes in the PageRank robustness grid
(Table~\ref{tabS:stationary_pagerank_robustness}) and in the
diffusion-state robustness summary (Table~\ref{tabS:eds_lds_robustness_grid})
reported in the Supplementary Material.

Full-sample and rolling
estimates are distinguished notationally: for a generic asset-level measure
$M_i$,
$M_i^{\mathrm{full}}(H)$ denotes the full-sample estimate,
$M_{i,t}(H,W)$ the rolling estimate at endpoint $t$, and%

\begin{equation}
\overline{M}_{i}(H,W)
=
\frac{1}{T_W}
\sum_{t=1}^{T_W}
M_{i,t}(H,W)
\label{eq:rolling_average_measure}
\end{equation}

its time average. Tables based on rolling output report
$\overline{M}_i(H,W)$, whereas full-sample scatter plots use
$M_i^{\mathrm{full}}(H)$. Algorithm~\ref{alg:spillover_diffusion} summarizes
the complete procedure.%

To describe when individual diffusion roles become unusually strong or weak,
distribution-based state indicators are constructed from the rolling series
of the two diffusion dimensions,%

\begin{equation}
X_{i,t}
\in
\left\{
\mathrm{VC}_{i,t},
F_{\mathrm{out},i,t}
\right\},
\qquad
t=1,\ldots,T_W.
\label{eq:diffusion_dimensions}
\end{equation}

For each asset $i$ and each measure $X$, the asset-specific bounds are the
empirical quantiles of the full rolling sample,%

\begin{equation}
U_i^{X}
=
Q_{0.975}
\left(
\left\{
X_{i,t}
\right\}_{t=1}^{T_W}
\right),
\qquad
L_i^{X}
=
Q_{0.025}
\left(
\left\{
X_{i,t}
\right\}_{t=1}^{T_W}
\right),
\label{eq:diffusion_bounds}
\end{equation}

\begin{algorithm}[H]
\caption{Spillover-to-diffusion estimation procedure}
\label{alg:spillover_diffusion}
\KwIn{Return matrix \(\mathbf{Y}\), node set \(\mathcal{V}\), VAR lag \(p\),
forecast horizon \(H\), rolling-window size \(W\), VC convergence tolerance
\(\varepsilon\), and maximum number of VC iterations \(T_{\max}\).}
\KwOut{TCI, TO, FROM, NET, \(F_{\mathrm{out}}\), VC, and diffusion-state indicators.}
\For{the full sample or each rolling window}{
Estimate the VAR(\(p\)) model and compute the normalized GFEVD matrix \(\widetilde{\boldsymbol{\Theta}}^g(H)\)\;
Compute TO, FROM, NET, and TCI from \(\widetilde{\boldsymbol{\Theta}}^g(H)\)\;
Construct the signed pairwise net-spillover matrix \(\mathcal{N}(H)\)\;
Construct the positive net-spillover matrix \(\mathcal{A}^{+}(H)\)\;
Form the Markov transition kernel \(\mathbf{P}(H)\) from \(\mathcal{A}^{+}(H)\) by incoming-strength normalization (Eq.~\eqref{eq:transition_matrix})\;
Compute the stationary distribution \(\boldsymbol{\pi}(H)\) and stationary departure measure \(F_{\mathrm{out},i}(H)\)\;
Compute Viral Centrality \(\mathrm{VC}_i(H)\) deterministically using
Eqs.~\eqref{eq:vc_initialization}--\eqref{eq:viral_centrality}, with
\(\varepsilon=10^{-6}\) and \(T_{\max}=200\)\;
}
Construct diffusion-state indicators from the rolling series of \(F_{\mathrm{out}}\) and VC\;
\end{algorithm}

so that $[L_i^X,U_i^X]$ is a central $95\%$ empirical band, with approximately $2.5\%$ of observations in each tail.
Because the bounds use full-sample quantiles, all indicators below are in-sample
descriptive classifications of each asset relative to its own rolling
distribution; they are neither real-time signals nor externally validated
crisis dates. Upper- and lower-tail exceedances are%

\begin{equation}
I_{i,t}^{X,+}
=
\mathbf{1}
\left(
X_{i,t}
>
U_i^{X}
\right),
\qquad
I_{i,t}^{X,-}
=
\mathbf{1}
\left(
X_{i,t}
<
L_i^{X}
\right),
\label{eq:upper_indicator}
\end{equation}

and the joint and relaxed diffusion states are%

\begin{equation}
\mathrm{EDS}_{i,t}=I_{i,t}^{\mathrm{VC},+}\,I_{i,t}^{F_{\mathrm{out}},+},
\qquad
\mathrm{LDS}_{i,t}=I_{i,t}^{\mathrm{VC},-}\,I_{i,t}^{F_{\mathrm{out}},-},
\qquad
\mathrm{WeakEDS}_{i,t}=\max\bigl\{I_{i,t}^{\mathrm{VC},+},\,I_{i,t}^{F_{\mathrm{out}},+}\bigr\}.
\label{eq:eds}
\end{equation}

The elevated diffusion state EDS requires simultaneous upper-tail behavior
in \emph{both} diffusion dimensions and is intentionally conservative; the
low diffusion state LDS is its lower-tail counterpart; WeakEDS flags an
upper-tail event in at least one dimension and captures episodic behavior
that the joint criterion misses. A continuous intensity score complements
the binary indicators: standardizing each rolling series by its asset-specific
time-series mean $\overline{X}_i$ and standard deviation
$\widehat{\sigma}_{X,i}$ (written $\widehat{\sigma}$ to avoid collision with
the strength notation $s_i^{\mathrm{in}}$, $s_i^{\mathrm{out}}$),%

\begin{equation}
Z_{i,t}^{X}
=
\frac{X_{i,t}-\overline{X}_{i}}{\widehat{\sigma}_{X,i}},
\quad X\in\{\mathrm{VC},F_{\mathrm{out}}\},
\qquad
S_{i,t}
=
\frac{Z_{i,t}^{\mathrm{VC}}+Z_{i,t}^{F_{\mathrm{out}}}}{2},
\label{eq:diffusion_score}
\end{equation}

so that a high $S_{i,t}$ indicates a high combined standardized level
of the two diffusion diagnostics for asset $i$ relative to its own history,
not return performance; it does not require both diagnostics to be
individually high. Simultaneous upper-tail behavior is captured separately
by $\mathrm{EDS}_{i,t}$.
High and low
score states are defined analogously, but use the 95th and 5th empirical
quantiles of the score distribution,%

\begin{equation}
I_{i,t}^{S,+}
=
\mathbf{1}
\left(
S_{i,t}
>
U_i^{S}
\right),
\qquad
I_{i,t}^{S,-}
=
\mathbf{1}
\left(
S_{i,t}
<
L_i^{S}
\right),
\label{eq:score_indicators}
\end{equation}

with $U_i^{S}$ and $L_i^{S}$ denoting the 95th and 5th empirical
quantiles of $\{S_{i,t}\}_{t=1}^{T_W}$, respectively. Thus, the VC and
$F_{\mathrm{out}}$ indicators use 2.5th--97.5th percentile bounds,
whereas the score-based indicators use 5th--95th percentile bounds.
In the empirical figures, $I^{\mathrm{VC},+}$,
$I^{F_{\mathrm{out}},+}$, $\mathrm{WeakEDS}$, $I^{S,+}$, and
$\mathrm{EDS}$ are labeled ``VC upper exceedance'',
``$F_{\mathrm{out}}$ upper exceedance'', ``Weak EDS'',
``EDS\_score\_high'', and ``Joint EDS'', respectively.

State frequencies are computed as
$\widehat{p}_i^{\mathrm{EDS}}
=T_W^{-1}\sum_{t=1}^{T_W}\mathrm{EDS}_{i,t}$,
and analogously for the other indicators.

Finally, the calibration of the bounds is checked with the coverage
probability
\begin{equation}
\mathrm{CP}_{i}^{X}
=
\frac{1}{T_W}
\sum_{t=1}^{T_W}
\mathbf{1}
\left(
L_i^{X}
\leq
X_{i,t}
\leq
U_i^{X}
\right),
\qquad
X_{i,t}
\in
\left\{
\mathrm{VC}_{i,t},
F_{\mathrm{out},i,t}
\right\}.
\label{eq:coverage_probability}
\end{equation}

This coverage probability is used only as a consistency check for the
threshold computation. Since the bounds are defined from the in-sample
2.5th--97.5th percentile band, the coverage is expected to be approximately
95\% by construction, up to ties and discreteness. It is not interpreted as
evidence of predictive calibration or crisis-forecasting performance.

\subsection{Auxiliary specifications: dollar factor and PageRank regularization}
\label{subsec:robustness_design}

Two auxiliary designs assess robustness beyond the horizon and window
variations described above. First, the U.S. Dollar Index is added as a
seventh node. The baseline excludes it because bilateral dollar exchange
rates share a common dollar-related risk
component~\cite{lustig2011common,verdelhan2018share}: entering the index as
an explicit node mixes asset-specific spillover channels with this common
factor and allows a broad dollar factor to dominate the estimated topology.
The seven-node network (BIC-selected VAR(2)) is therefore used only to check
whether the qualitative conclusion of the baseline analysis --- that direct
transmission, stationary source tracing, and multistep upstream connectivity
are distinct role dimensions --- survives when a broad dollar factor is
modeled explicitly.

Second, the stationary distribution is replaced by a PageRank-type
score~\cite{brin1998anatomy,page1999pagerank}, which regularizes the random
walk with uniform teleportation~\cite{masuda2017random}:%

\begin{equation}
\mathbf{G}_{\alpha}(H)
=
\alpha \mathbf{P}(H)
+
(1-\alpha)\frac{1}{N}\mathbf{1}\mathbf{1}^{\prime},
\qquad 0<\alpha<1,
\label{eq:pagerank_transition}
\end{equation}

where $\alpha$ is the damping factor and
$(1/N)\mathbf{1}\mathbf{1}^{\prime}$ the uniform teleportation matrix.
Since $\mathbf{G}_\alpha(H)$ is strictly positive for $\alpha<1$, the
regularized chain is irreducible and aperiodic, so its stationary
distribution --- the PageRank vector --- exists and is unique:%

\begin{equation}
PR^{\prime}(H)
=
PR^{\prime}(H)\,\mathbf{G}_{\alpha}(H),
\qquad
\sum_{i=1}^{N} PR_i(H)=1.
\label{eq:pagerank_vector}
\end{equation}

For direct comparability with
$F_{\mathrm{out},i}(H)=\pi_i(H)(1-P_{ii}(H))$, we define a hybrid
PageRank-weighted original-kernel departure score,%

\begin{equation}
PR_{\mathrm{out},i}(H)
=
PR_i(H)\left(1-P_{ii}(H)\right).
\label{eq:pagerank_outflow}
\end{equation}

The quantity in Eq.~\eqref{eq:pagerank_outflow} weights the original-kernel
departure probability by PageRank occupancy. It is not the stationary
off-diagonal flux of the teleportation chain $\mathbf{G}_{\alpha}(H)$, which
would instead be $PR_i(H)[1-G_{\alpha,ii}(H)]$. The robustness check uses the conventional damping value
$\alpha=0.85$~\cite{page1999pagerank,masuda2017random} and asks whether the
stationary-departure ranking reported in Section~\ref{sec:results} is an
artifact of the exact stationary distribution of $\mathbf{P}(H)$ or remains
stable under teleportation-regularized diffusion weighting.%

\section{Results}
\label{sec:results}

\subsection{Preliminary results and baseline specification}
 
This study examines the spillover structure and diffusion roles in a six asset international financial network that excludes the U.S. Dollar Index from the baseline specification. The assets included in the baseline network are JPY/USD, GBP/USD, EUR/USD, CNY/USD, gold futures, and BTC/USD. This specification is designed to analyze the direct and indirect transmission structure formed among major exchange rates, gold, and Bitcoin without treating the broad dollar factor as an explicit network node.
 
As reported in Table~\ref{tab:descriptive_stationarity}, the ADF and Phillips--Perron tests reject the unit root null hypothesis for all return series included in the analysis. This indicates that the daily return series are stationary and suitable for VAR GFEVD based connectedness analysis. In addition, the BIC based lag selection reported in Table~\ref{tab:bic_results} selects a VAR(1) specification for the baseline six asset network. This lag choice provides a parsimonious representation of short run return dynamics, but it should be interpreted as a model selection result rather than as evidence of structural causality among individual assets.
 
Based on this baseline specification, the empirical analysis proceeds in several steps. First, the GFEVD-based TO, FROM, NET, and TCI measures are used to examine direct spillover connectedness. Second, the positive net-spillover matrix is transformed into a reverse source-tracing Markov kernel, from which the stationary departure measure is calculated. Third, Viral Centrality is computed deterministically as an approximate multistep-connectivity score on the augmented reverse kernel. Finally, distribution-based state indicators are constructed from rolling \(F_{\mathrm{out}}\) and rolling VC to identify periods in which each asset exhibits unusually strong or weak values relative to its own empirical distribution.
 
\subsection{Time varying connectedness}
 
Figure~\ref{fig:rolling_tci_window100} presents the rolling Total Connectedness Index (TCI) for the international financial asset network excluding the dollar index. The TCI is the cross-sectional average, across assets, of the normalized forecast-error variance percentage associated with innovations in the other assets. A higher TCI therefore indicates stronger system-wide connectedness. The figure shows that the network among major exchange rates, gold futures, and Bitcoin is not static, but evolves over time.
 
In the main analysis, the results based on a rolling window size of 100 days are presented as the representative setting. Additional results using window sizes of 52 and 150 days are reported in Section~\ref{supsec:rolling_window_robustness} of the Supplementary Material (Figure~\ref{fig:appendix_rolling_tci_window_comparison}) as robustness checks. These additional results examine whether the observed connectedness pattern is sensitive to the choice of rolling window length. Shorter windows are expected to capture temporary changes in connectedness more sensitively, whereas longer windows smooth out short run fluctuations and reflect more persistent network structures.
 
\begin{figure}[htbp]
\centering
\includegraphics[width=0.65\textwidth]{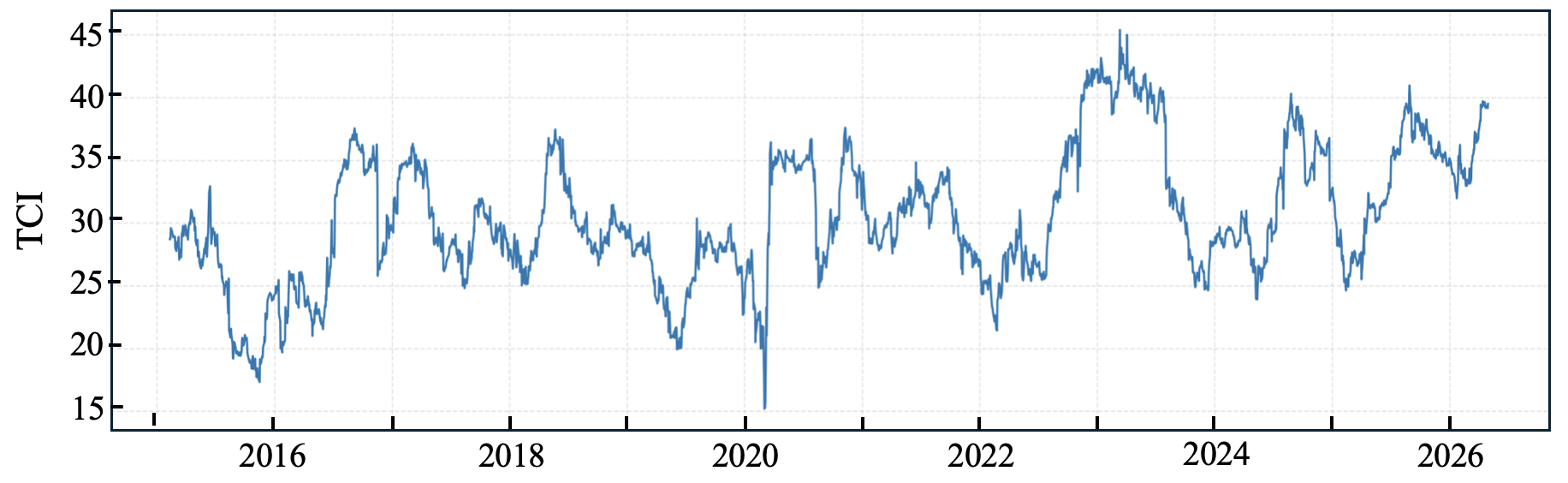}
\caption[Rolling TCI under the baseline window size]{
Rolling Total Connectedness Index (TCI) for the network excluding the dollar index. The figure reports the time varying system wide connectedness under the baseline setting of window size 100 and forecast horizon $H=10$. Higher TCI values indicate stronger overall spillover connectedness among the assets.
}
\label{fig:rolling_tci_window100}
\end{figure}
 
Under the representative setting of window size 100 and forecast horizon $H=10$, the rolling TCI has an average value of 30.44, a minimum of 15.20, a maximum of 45.38, and a standard deviation of 5.25. The maximum value is observed in the rolling window spanning from October 18, 2022 to March 13, 2023. These results indicate that, even after excluding the dollar index, the exchange rate, gold, and Bitcoin markets remain linked through a time-varying spillover structure. An average rolling TCI of approximately 30 means that, after averaging across assets and rolling windows, about 30\% of normalized forecast-error variance is associated with innovations in the other assets; it does not imply a 30\% contribution for every individual asset.
 
Several major financial episodes provide useful context for interpreting the observed TCI dynamics. The decline and recovery around 2015--2016 overlaps with the Chinese yuan devaluation and heightened concerns over China's financial markets~\cite{wang2016net,lai2019yuan}. Around the COVID--19 shock in March 2020, the TCI dropped to one of its lowest levels and then recovered rapidly, which is consistent with evidence of pandemic related financial contagion and increased global market risk~\cite{akhtaruzzaman2021financial,zhang2020financial}. The increase in early to mid 2022 coincides with the Russia--Ukraine war, energy price shocks, and heightened geopolitical uncertainty, which generated volatility spillovers across currency, commodity, stock, and energy markets~\cite{u2024russia}. The strongest connectedness episode, from late 2022 to March 2023, overlaps with aggressive U.S. monetary tightening, changing interest rate expectations, and banking sector stress associated with the collapse of Silicon Valley Bank~\cite{azmi2023svb}.
 
These event based interpretations should not be read as direct causal evidence, because the rolling TCI is computed over a 100 day window and captures system wide connectedness rather than the effect of a single event on a specific date. Instead, these episodes provide financial market context for understanding when and why the network may become more tightly connected. From an econophysics perspective, the rolling TCI suggests that the exchange rate, gold, and Bitcoin network undergoes time varying reorganization, with stronger shock propagation during periods of global financial stress.
 
Overall, the rolling TCI results show that cross asset connectedness remains economically and structurally meaningful even when the dollar index is not included as an explicit network node. Periods of elevated connectedness can be interpreted as regimes in which exchange rate risk, safe haven demand, hedging incentives, and global liquidity conditions interact more strongly within a connected financial network. This supports the central objective of this study, which is to examine not only direct spillovers but also the diffusion roles of individual assets in a time varying international financial network.

\subsection{Direct spillover roles}
 
Figure~\ref{fig:net_spillover} reports the net pairwise spillover network for the baseline six asset system excluding the dollar index. The figure provides a directed weighted representation of the asymmetric transmission structure among major exchange rates, gold futures, and Bitcoin. Positive net pairwise spillovers define the dominant direction of shock transmission between asset pairs.
 
The most notable feature is the central position of gold futures as a clear net transmitting node. Gold futures form outward transmission links mainly toward the major exchange rate variables, indicating that the gold futures node occupies a central position in the directed spillover topology even when the broad dollar factor is not included as an explicit node. This pattern suggests that gold futures do not merely exhibit isolated bilateral linkages, but occupy a directional position through which shocks are transmitted outward within the financial network.
 
\begin{figure}[H]
\centering
\includegraphics[width=0.40\textwidth]{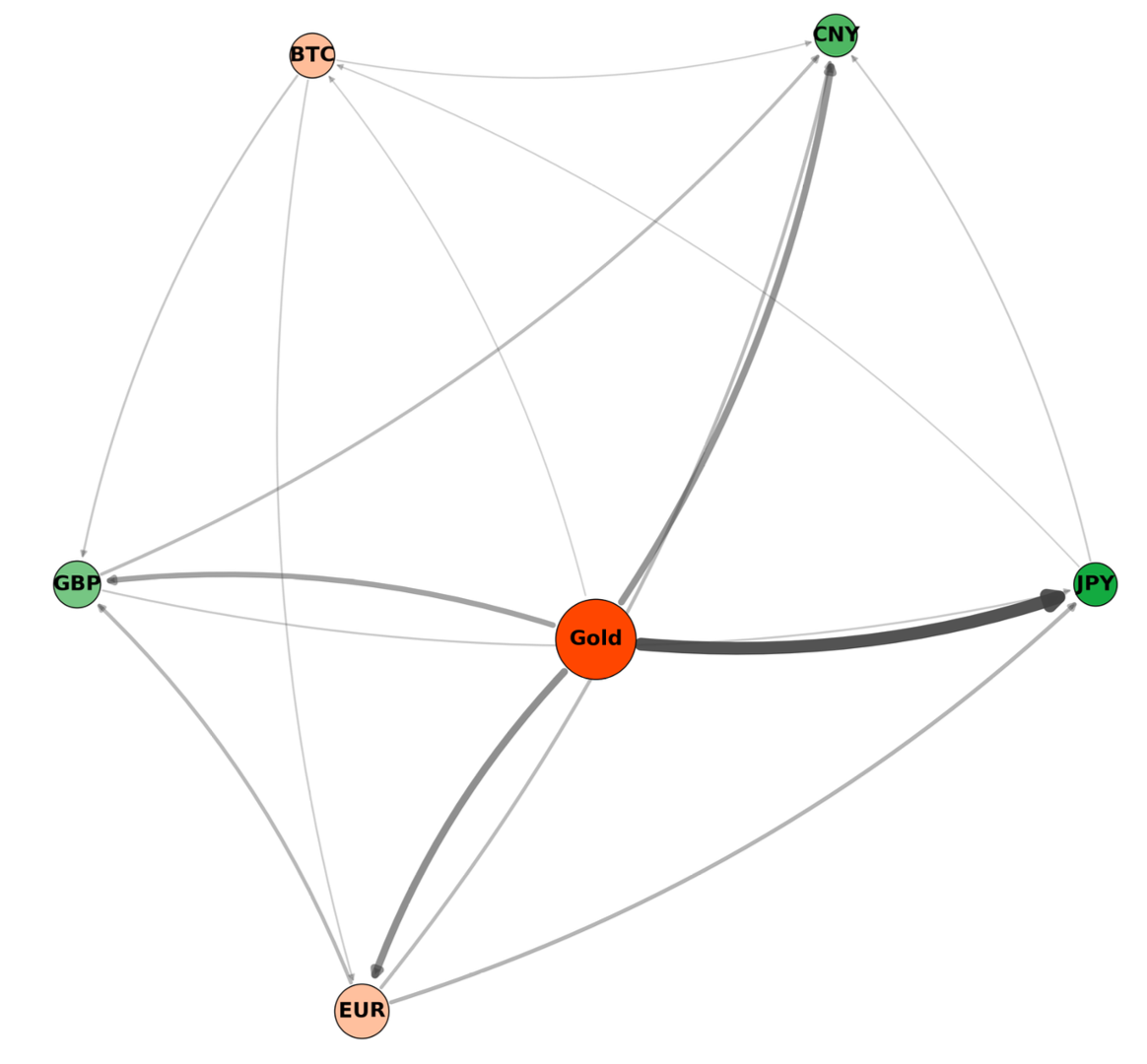}
\caption{Net pairwise spillover network among international financial assets excluding the dollar index. Warm-colored nodes indicate net transmitters, whereas green nodes indicate net receivers. Node size is scaled by total outward net spillover strength, and edge width and opacity represent the magnitude of positive net pairwise spillovers. Directed arrows show the dominant transmission direction between asset pairs. The network reveals a pronounced asymmetric spillover structure, with gold futures serving as the dominant source of net spillover transmission, especially toward major exchange rates and, to a lesser extent, Bitcoin.}
\label{fig:net_spillover}
\end{figure}
 
By contrast, the major exchange rate variables generally occupy net receiving or near balanced receiver positions. This does not mean that exchange rates are inactive in the network. Rather, it indicates that their incoming spillovers are generally larger than, or close to, their outgoing spillovers. In particular, CNY/USD and JPY/USD show clearer net receiving roles, while EUR/USD and GBP/USD are closer to balanced but still slightly receiver like. Bitcoin does not appear as a dominant net transmitting node comparable to gold futures, but it is not a purely peripheral receiver either. Its role is better interpreted as limited, time varying, and intermediate in the direct spillover structure.
 
\begin{table}[!htbp]
\centering
\caption{Directional spillover summary: TO, FROM, NET}
\label{tab:directional_spillover_summary}
\small
\setlength{\tabcolsep}{6pt}
\renewcommand{\arraystretch}{1.1}
\begin{tabular}{lccc}
\toprule
Asset & TO & FROM & NET\\
\midrule
Gold futures & 47.51 & 13.71 & 33.81\\
BTC/USD & 15.49 & 11.95 & 3.54\\
EUR/USD & 43.98 & 46.38 & -2.40\\
GBP/USD & 36.68 & 41.58 & -4.90\\
CNY/USD & 16.14 & 30.47 & -14.33\\
JPY/USD & 22.84 & 38.55 & -15.71\\
\bottomrule
\end{tabular}
 
\vspace{1mm}
\begin{minipage}{0.92\textwidth}
\small
\emph{Note}: TO denotes the total directional spillover transmitted by each asset to the other assets in the network, whereas FROM denotes the total directional spillover received by each asset from the other assets. NET is defined as TO minus FROM. A positive NET value indicates that
the asset acts as a net transmitter in the network, while a negative NET
value indicates that the asset acts as a net receiver. The values reported
in this table are based on the rolling results for the network excluding
the dollar index under the baseline specification with window size 100
and forecast horizon $H=10$. NET is computed from full-precision TO and
FROM estimates; therefore, minor discrepancies of 0.01 may arise when
subtracting the rounded TO and FROM values reported in the table.
\end{minipage}
\end{table}
 
Table~\ref{tab:directional_spillover_summary} confirms the asymmetric direct spillover structure. Gold futures record the largest positive NET spillover, with a TO value of 47.51, a FROM value of 13.71, and a NET value of 33.81. This indicates that gold futures transmit substantially more spillovers to other assets than they receive from them. Bitcoin also shows a positive NET spillover, but its magnitude is limited, with a TO value of 15.49, a FROM value of 11.95, and a NET value of 3.54. The major exchange rate variables generally play net receiving roles. CNY/USD and JPY/USD exhibit clear net receiving positions, with NET spillovers of -14.33 and -15.71, respectively. EUR/USD and GBP/USD are closer to balanced but still slightly receiver like, with NET spillovers of -2.40 and -4.90. 
The values in Table~\ref{tab:directional_spillover_summary} are time averages of the rolling estimates in the sense of Section~\ref{subsec:rolling_implementation}; the corresponding full-sample estimates are reported in Table~\ref{tab:full_sample_asset_roles}, which is why, for example, the NET spillover of gold futures is 33.81 here and 22.64 there.

\subsection{Stationary departure roles in the reverse source-tracing chain}
 
This subsection examines stationary departure in the reverse source-tracing Markov chain constructed from the positive net spillover matrix. While TO, FROM, and NET identify direct transmitter--receiver positions, $F_{\mathrm{out}}$ is the stationary probability of occupying an asset and then moving to a different node under the augmented reverse kernel. It is therefore a topology-based departure measure conditional on row normalization and the restart closure, not a measure of absolute forward economic shock outflow.
 
\begin{figure}[!htbp]
\centering
 
\begin{minipage}{0.32\textwidth}
\centering
\includegraphics[width=\textwidth,height=0.17\textheight,keepaspectratio]{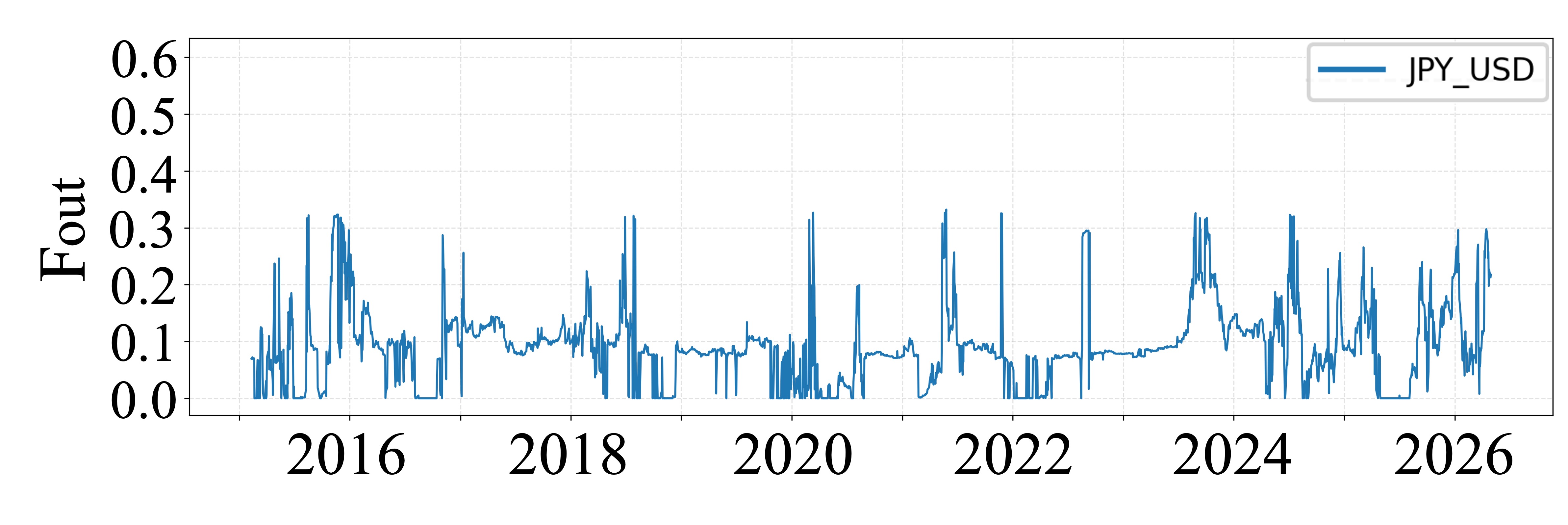}
 
{\small (a) JPY/USD}
\end{minipage}
\hfill
\begin{minipage}{0.32\textwidth}
\centering
\includegraphics[width=\textwidth,height=0.17\textheight,keepaspectratio]{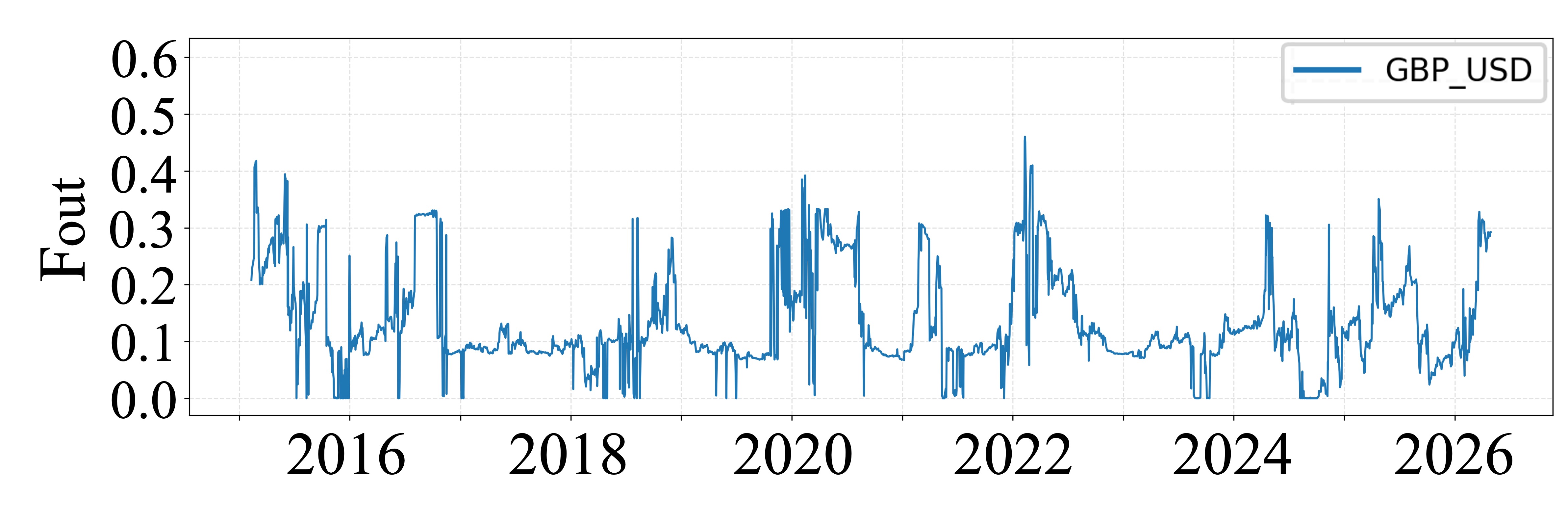}
 
{\small (b) GBP/USD}
\end{minipage}
\hfill
\begin{minipage}{0.32\textwidth}
\centering
\includegraphics[width=\textwidth,height=0.17\textheight,keepaspectratio]{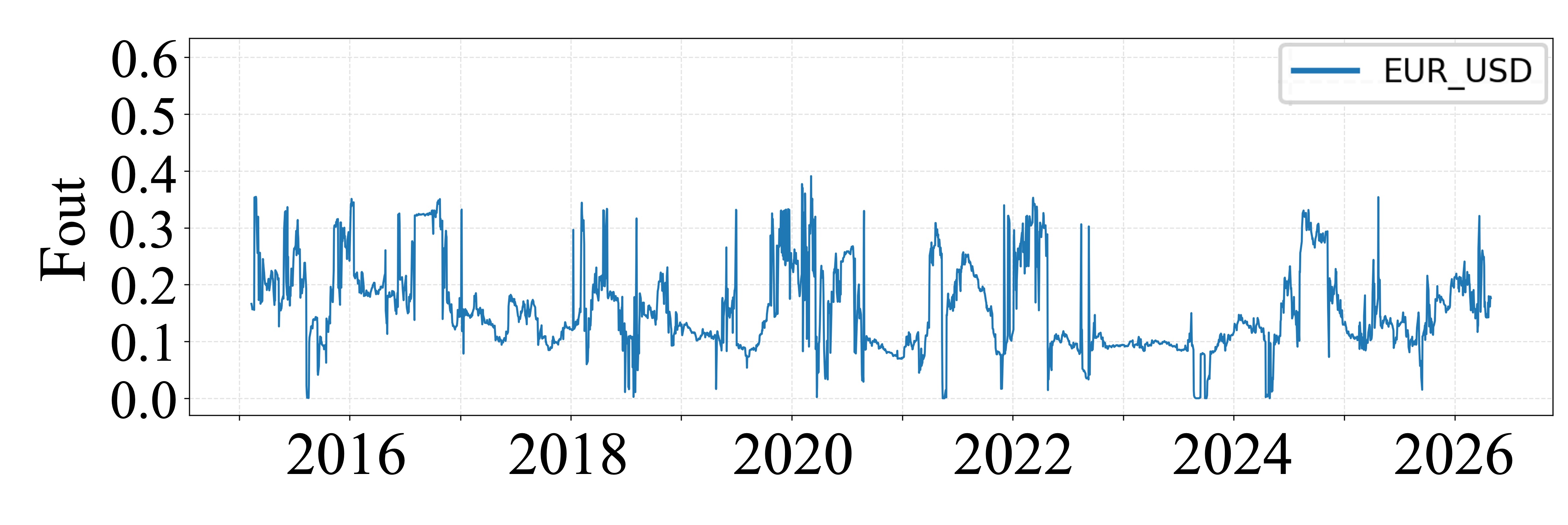}
 
{\small (c) EUR/USD}
\end{minipage}
 
\vspace{1mm}
 
\begin{minipage}{0.32\textwidth}
\centering
\includegraphics[width=\textwidth,height=0.17\textheight,keepaspectratio]{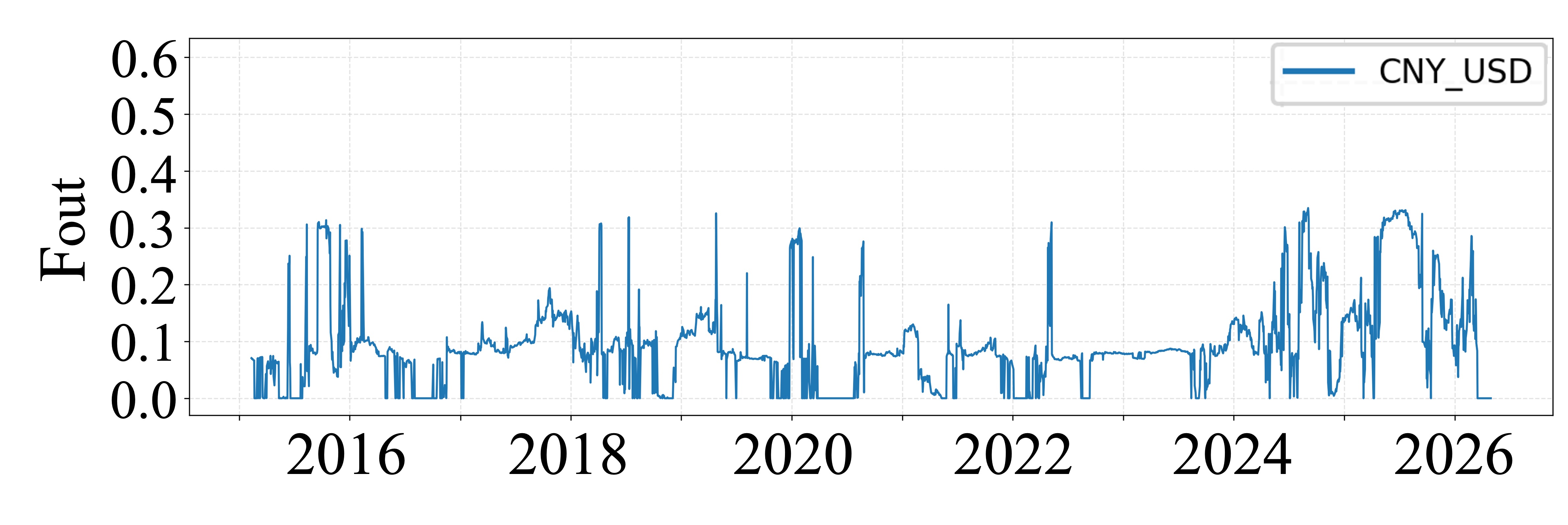}
 
{\small (d) CNY/USD}
\end{minipage}
\hfill
\begin{minipage}{0.32\textwidth}
\centering
\includegraphics[width=\textwidth,height=0.17\textheight,keepaspectratio]{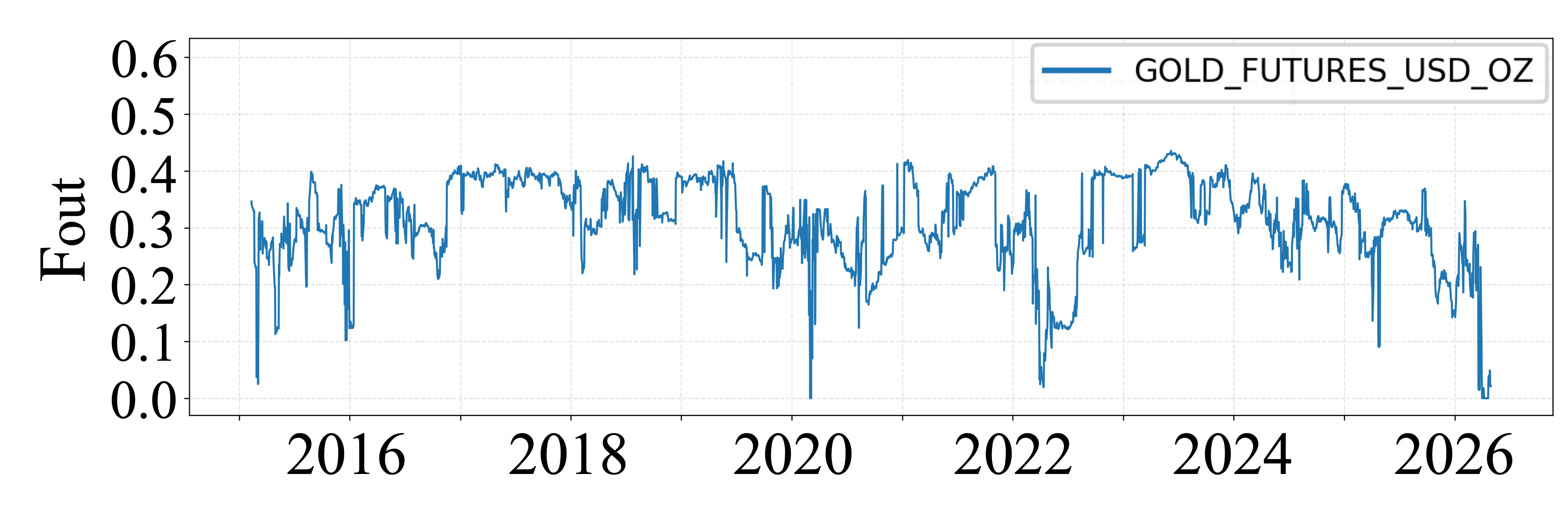}
 
{\small (e) Gold futures}
\end{minipage}
\hfill
\begin{minipage}{0.32\textwidth}
\centering
\includegraphics[width=\textwidth,height=0.17\textheight,keepaspectratio]{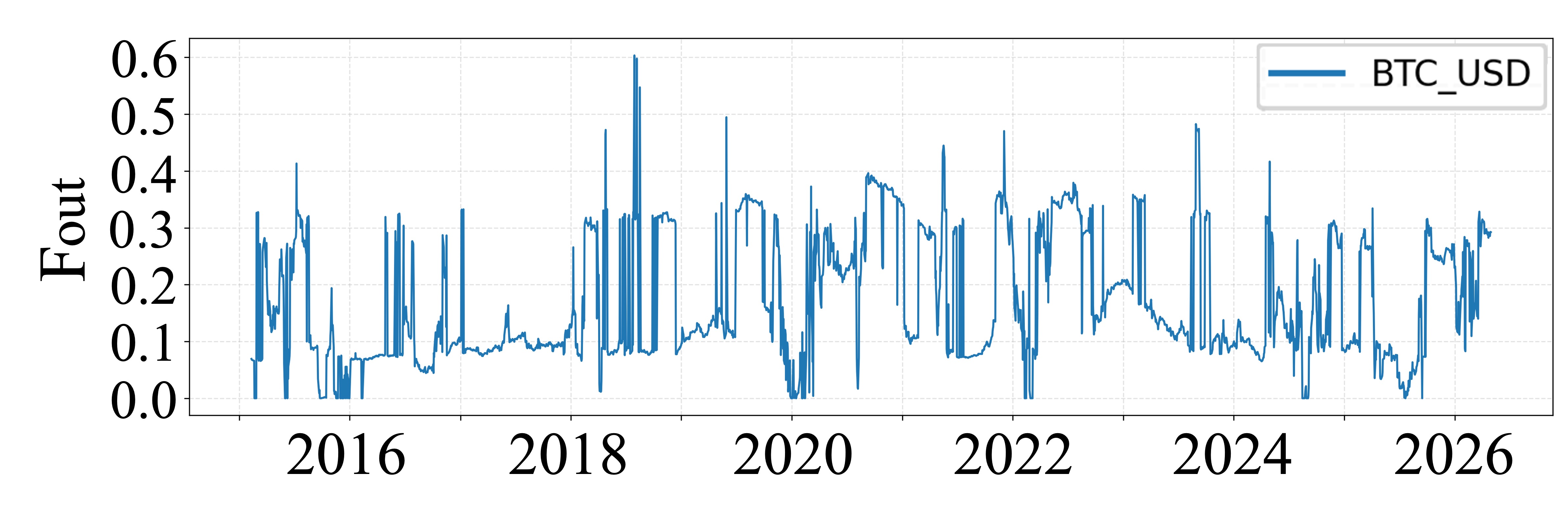}
 
{\small (f) BTC/USD}
\end{minipage}
 
\caption[Rolling stationary departure by asset]{
Rolling stationary departure measure, $F_{\mathrm{out}}$, by asset. The results are based on the reverse source-tracing kernel for the network excluding the dollar index, with window size 100 and forecast horizon $H=10$. Higher values indicate a greater stationary probability of occupying the asset and moving to a distinct node under the augmented kernel; they do not represent forward economic shock outflow.
}
\label{fig:rolling_fout_by_asset}
\end{figure}
 
Figure~\ref{fig:rolling_fout_by_asset} reports the rolling $F_{\mathrm{out}}$ series for each asset in the baseline network excluding the dollar index. Gold futures maintain the highest level over most of the sample period, indicating the greatest stationary departure frequency in the selected reverse source-tracing chain. This empirical ranking is consistent with the chosen reverse orientation and with the direct-spillover evidence identifying gold futures as the dominant net transmitter. It is not implied by Proposition~\ref{prop:source_concentration}, which only supplies an upper bound for a pure net sink under the uniform-restart closure.
 
The exchange-rate variables generally display lower $F_{\mathrm{out}}$ values than gold futures, although temporary increases are observed in some periods. Thus, their stationary departure frequencies in the reverse chain are more limited or episodic. JPY/USD and CNY/USD remain receiver-like in the direct spillover structure, which is a separate quantity from $F_{\mathrm{out}}$.
 
BTC/USD also shows a lower $F_{\mathrm{out}}$ level than gold futures but occasionally records increases. Under this reverse-chain measure, Bitcoin therefore has an intermediate and time-varying stationary departure role rather than the leading role.
 
Overall, Figure~\ref{fig:rolling_fout_by_asset} shows that stationary departure in the reverse source-tracing chain provides information not contained in direct spillover measures alone. Gold futures rank highly in both direct NET transmission and this conditional stationary topology, whereas the exchange-rate variables and BTC/USD show lower or more episodic stationary departure values.
 
Figure~\ref{fig:fout_vs_net} compares full-sample NET spillovers and $F_{\mathrm{out}}$ under forecast horizon $H=10$. The scatter plot examines whether direct net transmission and reverse-chain stationary departure identify the same relative asset roles.
 
\begin{figure}[!htbp]
\centering
\includegraphics[width=0.5\textwidth]{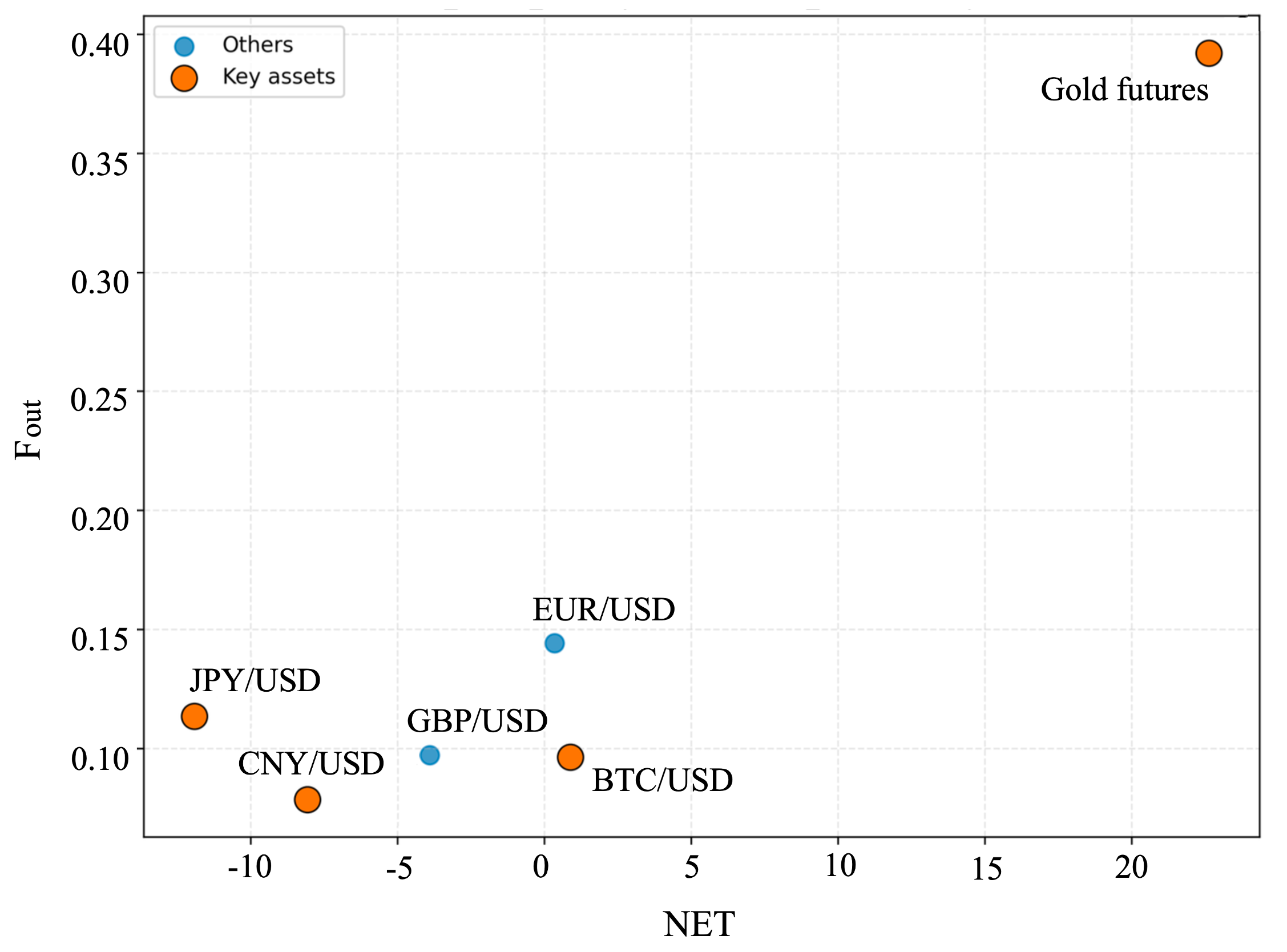}
\caption[Full-sample relationship between NET spillover and stationary departure]{
Full-sample relationship between NET spillover and the stationary departure measure $F_{\mathrm{out}}$ for the network excluding the dollar index. Assets farther to the right have stronger direct net spillover transmission, whereas assets higher on the vertical axis have greater stationary departure probability in the augmented reverse source-tracing chain.
}
\label{fig:fout_vs_net}
\end{figure}

The most notable feature in Figure~\ref{fig:fout_vs_net} is that gold futures are clearly separated from the other assets. Gold futures have a positive NET spillover of 22.64 and the highest $F_{\mathrm{out}}$ value of 0.392. Thus, they are both the dominant direct net transmitter and the leading stationary departure node under the selected reverse kernel.
 
By contrast, JPY/USD and CNY/USD have negative NET spillovers of -11.92 and -8.06, respectively, and lower stationary departure values. BTC/USD has a small positive NET value of 0.89, but its $F_{\mathrm{out}}$ value of 0.096 is much lower than that of gold futures. These comparisons describe positions in the augmented reverse chain and do not imply forward causal transmission.
 
Taken together, Figures~\ref{fig:rolling_fout_by_asset} and~\ref{fig:fout_vs_net} show that reverse-chain stationary departure is distinct from, but complementary to, direct spillover measures. The next subsection compares it with deterministic multistep upstream connectivity measured by Viral Centrality.

\subsection{Deterministic multistep upstream-connectivity roles}
 
This subsection examines multistep upstream connectivity using deterministically computed Viral Centrality (VC). While NET spillovers identify direct transmitter--receiver positions and $F_{\mathrm{out}}$ measures stationary departure in the reverse chain, VC is the fixed-point score defined by Eqs.~\eqref{eq:vc_initialization}--\eqref{eq:viral_centrality} on the same augmented reverse kernel. It is a cascade-inspired, closure-dependent topology diagnostic, not a Monte Carlo estimate or an exact expected cascade size.
 
Figure~\ref{fig:vc_vs_fout_fullsample} compares full-sample $F_{\mathrm{out}}$ and VC for the network excluding the dollar index. Gold futures have the highest $F_{\mathrm{out}}$ value of 0.392 but the lowest relative VC, 1.061. The two measures therefore summarize different features of the augmented reverse-kernel topology.
 
By contrast, CNY/USD has a low $F_{\mathrm{out}}$ value of 0.078 and a negative NET spillover of -8.06 but records the highest VC value, 1.955. BTC/USD and GBP/USD also have relatively high values of 1.897 and 1.888. Starting from these nodes, the reverse kernel can trace a broader set of upstream net-spillover suppliers; this does not mean that these assets spread shocks forward to that set.
 
This ordering follows from the kernel design. Each empirical transition of $\mathbf{P}(H)$ is oriented toward a net-spillover source, so $\mathrm{VC}_i$ summarizes approximate multistep upstream-source connectivity. Gold futures have zero incoming positive net-spillover strength in the full-sample empirical matrix. Their row is therefore replaced by the uniform restart closure, including its diagonal entry. The reported nonzero VC for gold partly reflects that closure; under the selected convention, gold nevertheless has the lowest relative VC. This statement is an empirical, closure-dependent comparison and is not a consequence of Proposition~\ref{prop:source_concentration}.

\begin{figure}[hb!]
\centering
\includegraphics[width=0.4\textwidth]{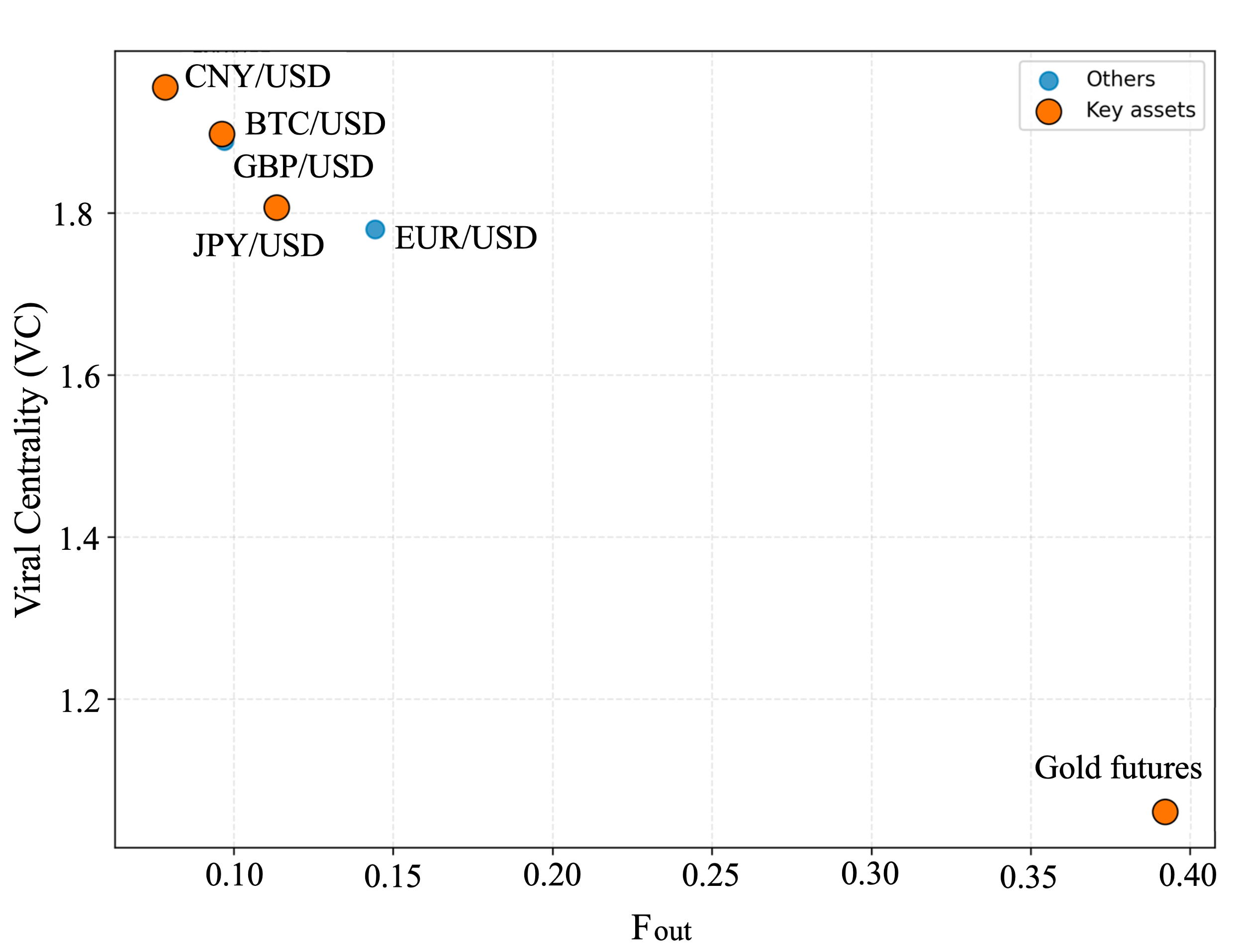}
\caption[Full-sample relationship between stationary departure and Viral Centrality]{
Full-sample relationship between the stationary departure measure $F_{\mathrm{out}}$ and deterministic Viral Centrality (VC) for the network excluding the dollar index. Assets farther to the right have greater stationary departure probability in the augmented reverse chain, whereas assets higher on the vertical axis have larger deterministic approximations to multistep upstream connectivity.
}
\label{fig:vc_vs_fout_fullsample}
\end{figure}
 
Figure~\ref{fig:rolling_vc_by_asset} reports the rolling deterministic VC values for each asset. Unlike the full-sample score computed from one full-sample kernel, the rolling plots show how the modeled upstream-connectivity score changes across estimated rolling kernels.
 
\begin{figure}[!htbp]
\centering
 
\begin{minipage}{0.32\textwidth}
\centering
\includegraphics[width=\textwidth,height=0.17\textheight,keepaspectratio]{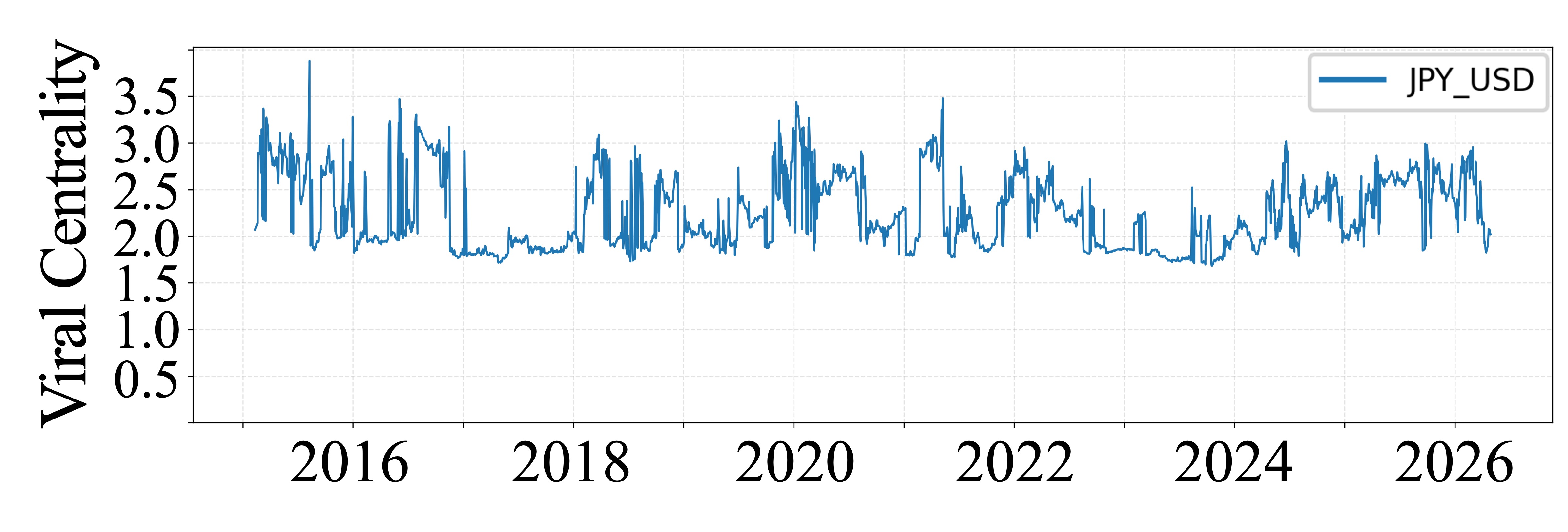}
 
{\small (a) JPY/USD}
\end{minipage}
\hfill
\begin{minipage}{0.32\textwidth}
\centering
\includegraphics[width=\textwidth,height=0.17\textheight,keepaspectratio]{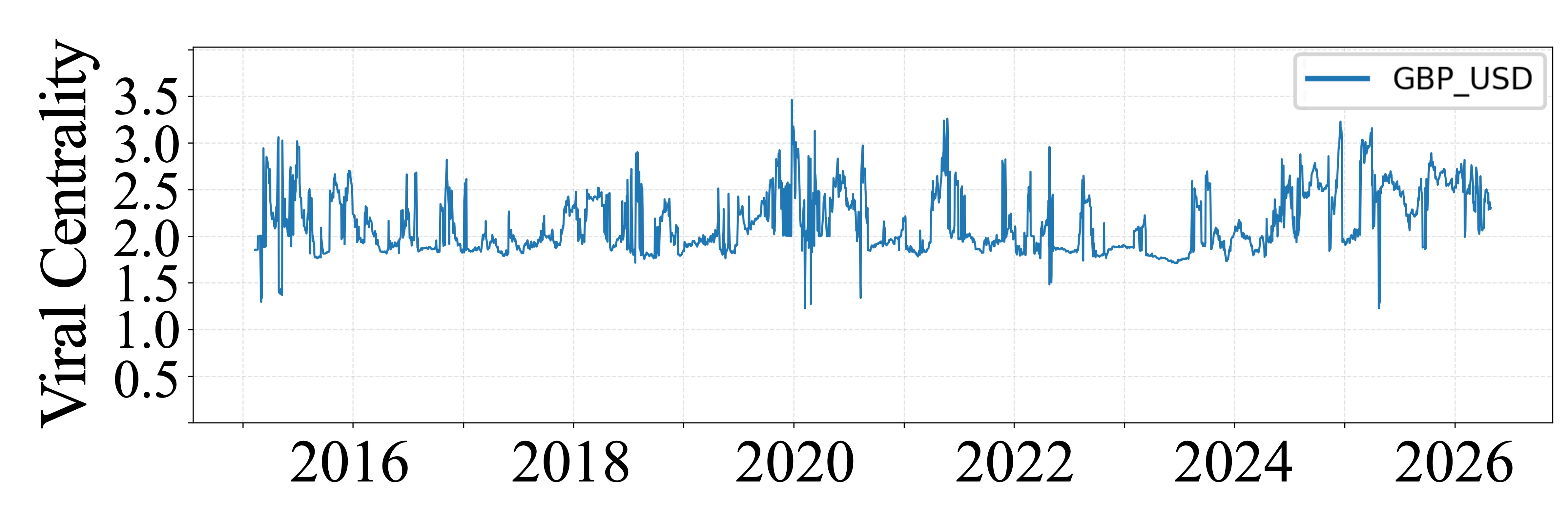}
 
{\small (b) GBP/USD}
\end{minipage}
\hfill
\begin{minipage}{0.32\textwidth}
\centering
\includegraphics[width=\textwidth,height=0.17\textheight,keepaspectratio]{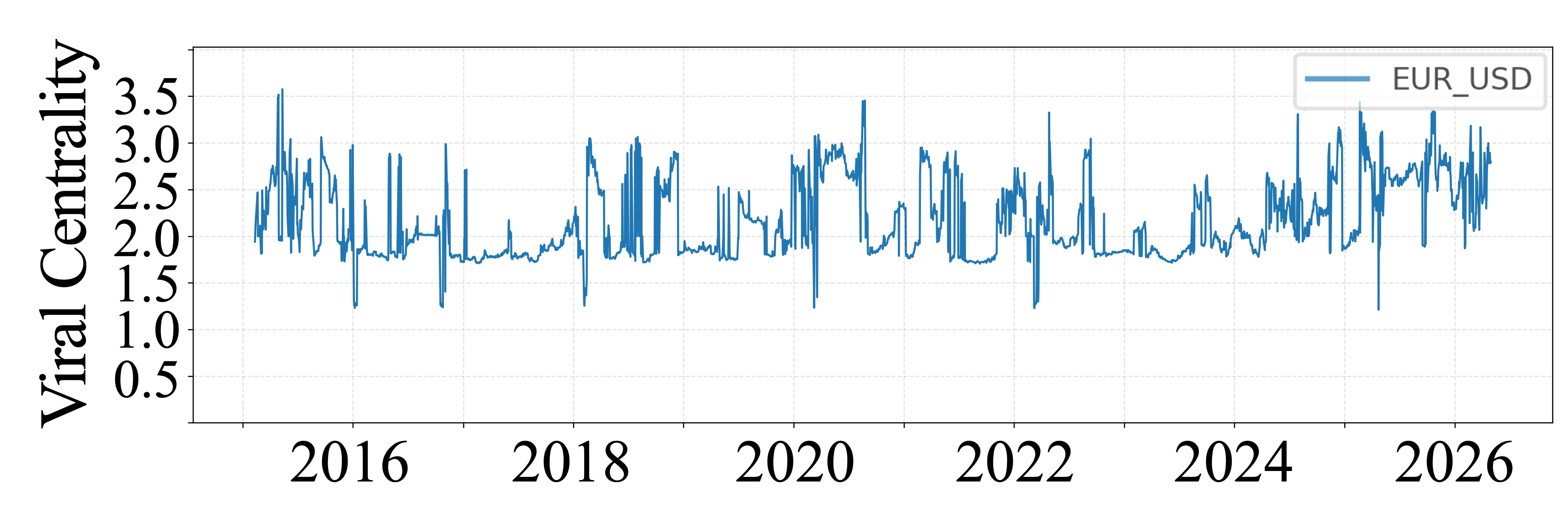}
 
{\small (c) EUR/USD}
\end{minipage}
 
\vspace{1mm}
 
\begin{minipage}{0.32\textwidth}
\centering
\includegraphics[width=\textwidth,height=0.17\textheight,keepaspectratio]{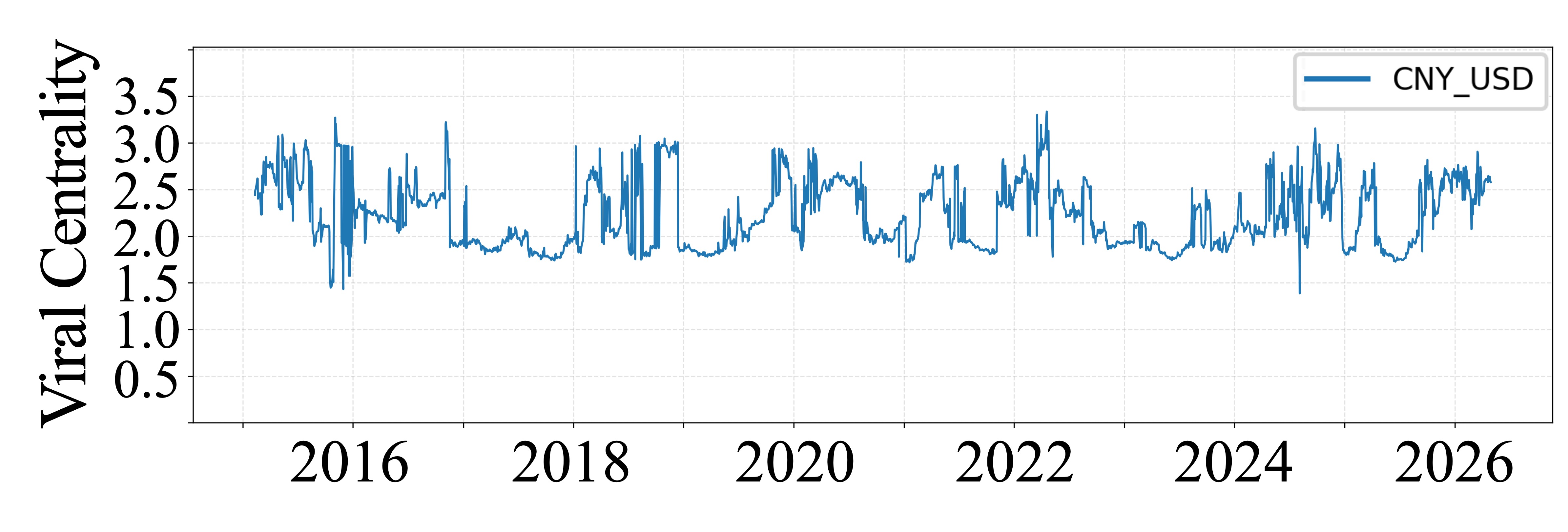}
 
{\small (d) CNY/USD}
\end{minipage}
\hfill
\begin{minipage}{0.32\textwidth}
\centering
\includegraphics[width=\textwidth,height=0.17\textheight,keepaspectratio]{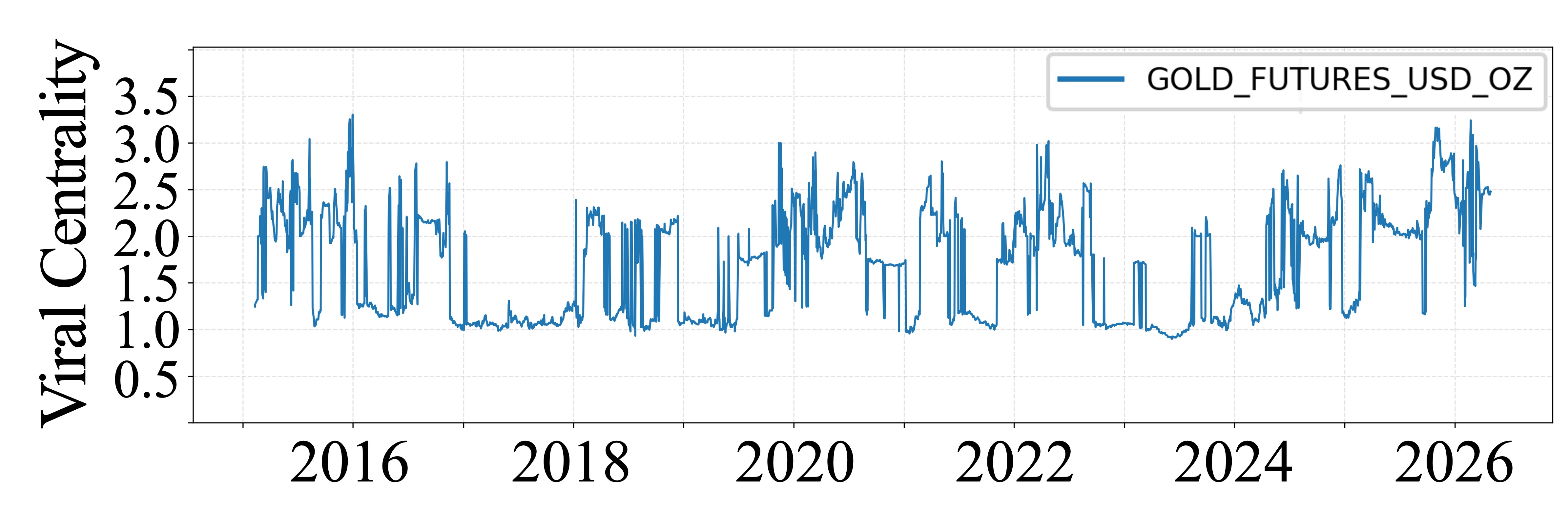}
 
{\small (e) Gold futures}
\end{minipage}
\hfill
\begin{minipage}{0.32\textwidth}
\centering
\includegraphics[width=\textwidth,height=0.17\textheight,keepaspectratio]{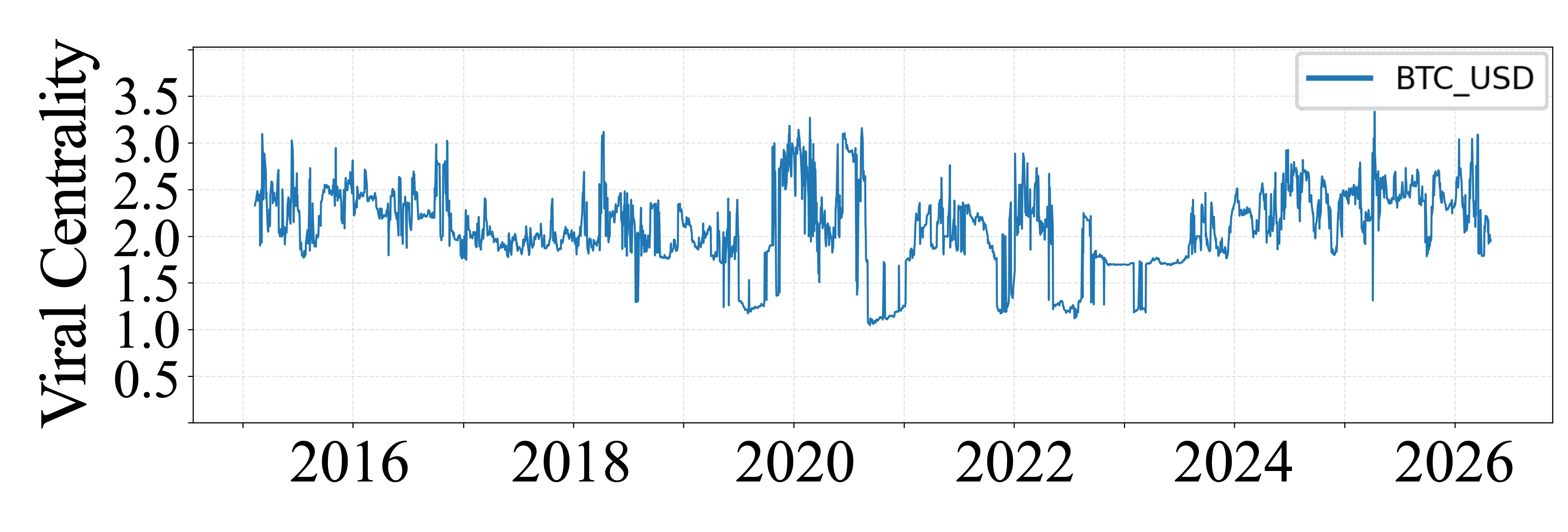}
 
{\small (f) BTC/USD}
\end{minipage}
 
\caption[Rolling Viral Centrality by asset]{
Rolling deterministic Viral Centrality (VC) by asset. The results are based on the augmented reverse source-tracing kernel for the network excluding the dollar index, with window size 100 and forecast horizon $H=10$. Higher values indicate larger algorithmic approximations to multistep upstream connectivity. Because the kernel may contain cycles and restart rows, VC is not an exact expected cascade size and is not evidence of realized causal contagion.
}
\label{fig:rolling_vc_by_asset}
\end{figure}

The rolling averages show that the ordering visible in
Figure~\ref{fig:rolling_vc_by_asset} is roughly the reverse of the
stationary-departure ordering: gold futures exhibit the lowest rolling-average
VC (1.690), the four exchange-rate variables cluster at the top of the range
with rolling averages between 2.134 and 2.248, and BTC/USD occupies an
intermediate level (2.088). Among the exchange rates, JPY/USD attains the
highest rolling-average VC (2.248), closely followed by CNY/USD (2.225).
The identity of the highest-scoring upstream-connectivity node therefore depends on the
estimand in the sense of Section~\ref{subsec:rolling_implementation}: the
time average of the rolling VC series is largest for JPY/USD, whereas the
full-sample transition matrix assigns the highest VC to CNY/USD.
The two
summaries are computed from different kernels and need not coincide; what is
common to both is that gold futures occupy the bottom of the VC ranking while
remaining the dominant node in direct transmission and reverse-chain stationary
departure. Detailed asset-level role comparisons for the auxiliary
U.S. Dollar Index network are reported in Supplementary Table~\ref{tabS:diffusion_role_with_without_dollar}.%

Taken together, the VC results show that TO, FROM, NET, and $F_{\mathrm{out}}$ do not exhaust the topology summarized by the reverse kernel. Gold futures dominate direct net transmission and reverse-chain stationary departure, whereas CNY/USD, BTC/USD, and GBP/USD have higher deterministic upstream-connectivity scores. The following subsection uses distribution-based state indicators to compare unusual values of these descriptive measures over time.

\subsection{Distribution-based diffusion-state diagnostics}
\label{subsec:diffusion_state_results}

As an additional diagnostic, we examine distribution-based diffusion-state
indicators constructed from rolling VC and rolling $F_{\mathrm{out}}$.
The detailed Weak EDS timelines and EDS\_score time-series results are
reported in Supplementary Figures~\ref{figS:weak_eds_timelines}
and~\ref{figS:eds_score_asset_heterogeneity}.

The results show that rolling VC and rolling $F_{\mathrm{out}}$ do not
jointly enter their upper tails under the conservative Joint EDS criterion.
This is consistent with the different constructions of the two diagnostics: a high reverse-chain stationary departure value need not coincide with a high deterministic upstream-connectivity score.

The continuous EDS\_score diagnostic nevertheless shows that diffusion roles
vary over time and across assets. Several exchange-rate nodes and BTC/USD
temporarily record larger combined standardized scores, whereas gold futures remain more prominent in direct transmission and stationary departure than in the VC diagnostic. These results show that the descriptive network measures vary over time and are not reducible to a single transmitter--receiver classification.

\subsection{Robustness to the U.S. Dollar Index}
\label{subsec:robustness_checks}
 
To examine whether the main findings are sensitive to the treatment of the broad dollar factor, this study compares the baseline network excluding the U.S. Dollar Index with an auxiliary network including the U.S. Dollar Index. The comparison is conducted under the same baseline setting of window size 100 and forecast horizon $H=10$. The purpose of this exercise is not to replace the baseline specification, but to assess whether the main diffusion-based interpretation remains robust when the dollar index is explicitly included as an additional node.
 
The auxiliary network including the U.S. Dollar Index exhibits a higher level
of system-wide connectedness than the baseline network excluding the dollar
index. The full rolling TCI comparison is reported in Supplementary
Figure~\ref{figS:rolling_tci_with_without_dollar}.
 
 
\begin{table}[H]
\centering
\caption{Robustness comparison between networks excluding and including the U.S. Dollar Index}
\label{tab:with_without_dollar_robustness}
\resizebox{0.98\textwidth}{!}{%
\begin{tabular}{l c l l l l c}
\toprule
Specification & Mean TCI & Main NET transmitter & Highest $F_{\mathrm{out}}$ asset 
& Highest $\pi$ asset & Highest VC asset & Joint EDS \\
\midrule
Without dollar index & 30.44 & Gold futures & Gold futures & Gold futures & JPY/USD & 0 \\
With dollar index    & 46.54 & U.S. Dollar Index & U.S. Dollar Index & U.S. Dollar Index & EUR/USD & 0 \\
\bottomrule
\end{tabular}
}
\vspace{1mm}
\begin{minipage}{0.98\textwidth}
\small
\textit{Note}: 
The comparison is based on the baseline setting with window size 100 and forecast horizon $H=10$. 
Mean TCI and the main NET transmitter are computed from rolling GFEVD spillover results. 
$F_{\mathrm{out}}$, the stationary distribution, and deterministic VC are computed from the augmented reverse source-tracing kernel constructed from the positive net spillover matrix. These quantities are conditional on row normalization and the uniform restart closure. 
All asset rankings in this table are based on time averages of rolling estimates; the full-sample VC ranking in Table~\ref{tab:full_sample_asset_roles} is computed from a different kernel and need not coincide. 
Joint EDS indicates whether any asset enters the conservative joint elevated diffusion state.
\end{minipage}
\end{table}
 
Table~\ref{tab:with_without_dollar_robustness} shows that the inclusion of the U.S. Dollar Index increases the overall connectedness of the network. The mean TCI rises from 30.44 in the baseline network excluding the dollar index to 46.54 in the auxiliary network including the dollar index. This indicates that the broad dollar factor strengthens system-wide spillover connectedness when it is explicitly included as a network node.

 
\subsection{Full-sample multidimensional asset role classification}
 
The preceding results show that direct net transmission, stationary departure in the reverse source-tracing chain, and deterministic multistep upstream connectivity do not necessarily identify the same asset roles.
This distinction is consistent with the broader connectedness literature, which emphasizes that the role of a financial asset is not fixed but may vary across market states, frequencies, and transmission channels.
For example, gold has been shown to play a time-varying role in cross-market connectedness, transmitting shocks over some horizons while receiving shocks over others \cite{shang2024quantile}.
Bitcoin has also been found to be connected with traditional financial assets, acting as a volatility transmitter during the COVID-19 period despite being only weakly connected over the full sample \cite{elsayed2022risk}.
Similarly, exchange-rate studies show that major currencies may serve as channels of shock propagation or as absorbers of spillovers from other currencies \cite{trancoso2023beyond}.
These findings suggest that a single direct NET measure may not be sufficient to characterize the role of each asset in a financial network.
 
To summarize these differences more explicitly, Table~\ref{tab:full_sample_asset_roles} reports the full-sample NET spillover, $F_{\mathrm{out}}$, and Viral Centrality (VC) for the baseline network excluding the dollar index under forecast horizon $H=10$.
This table provides a static benchmark for comparing the relative position of each asset across the three role dimensions.
 
\begin{table}[!htbp]
\centering
\caption{Full-sample asset roles based on direct spillover and source-tracing measures}
\label{tab:full_sample_asset_roles}
\scriptsize
\begin{tabular}{lrrrll}
\toprule
Asset & NET & $F_{\mathrm{out}}$ & VC & Direct NET role & Network role \\
\midrule
Gold futures & 22.64 & 0.392 & 1.061 & Dominant net transmitter & Highest stationary departure \\
BTC/USD & 0.89 & 0.096 & 1.897 & Weak net transmitter & High upstream score \\
EUR/USD & 0.35 & 0.144 & 1.780 & Near-balanced transmitter & Moderate upstream score \\
GBP/USD & -3.90 & 0.097 & 1.888 & Weak net receiver & High upstream score \\
CNY/USD & -8.06 & 0.078 & 1.955 & Clear net receiver & High upstream score \\
JPY/USD & -11.92 & 0.113 & 1.806 & Clear net receiver & Moderate upstream score \\
\bottomrule
\end{tabular}
 
\vspace{1mm}
\begin{minipage}{0.96\textwidth}
\small
\emph{Note}: The table reports full-sample measures for the network excluding the dollar index under forecast horizon $H=10$.
NET is defined as TO minus FROM.
$F_{\mathrm{out}}$ denotes stationary departure in the augmented reverse source-tracing chain, and VC denotes the deterministic fixed-point score in Eqs.~\eqref{eq:vc_initialization}--\eqref{eq:viral_centrality}.
The role classifications are descriptive and are based on the relative positions of assets in the full-sample NET, $F_{\mathrm{out}}$, and VC measures.
The VC role is conditional on row normalization and the uniform restart closure and should not be interpreted as an exact expected cascade size, a realized Joint EDS, or an externally validated cascade event.
\end{minipage}
\end{table}
 
Table~\ref{tab:full_sample_asset_roles} confirms that the asset ranking depends on the network summary considered.
Gold futures have the largest positive NET spillover and the highest $F_{\mathrm{out}}$, which supports their role as the dominant direct net transmitter and the leading stationary departure node in the selected reverse chain.
This result is consistent with prior evidence that gold is embedded in cross-market connectedness structures involving commodity, stock, currency, and macroeconomic indicators, and that its role may alternate between shock transmitter and receiver depending on the horizon, frequency, and market condition \cite{shang2024quantile,boubaker2023financial}.
However, the VC value of gold futures is the lowest among the six assets. Under the uniform restart closure, it is nonzero even though gold has zero incoming positive net-spillover strength. Gold therefore dominates direct transmission and reverse-chain stationary departure but not the relative deterministic upstream-connectivity score.
 
The exchange-rate variables show a different pattern.
CNY/USD and JPY/USD are clear net receivers in terms of NET spillover.
Although CNY/USD has the largest full-sample VC point estimate, CNY/USD, BTC/USD, and GBP/USD are better interpreted as a high-VC group than as a strict ranking; JPY/USD occupies a more moderate multistep position.
This result shows that an exchange-rate variable can be a receiver in direct net spillovers while the reverse source-tracing algorithm assigns it a high multistep upstream-connectivity score.
This interpretation is supported by studies showing that exchange-rate dynamics are inherently interconnected and that shocks can propagate through both direct and indirect channels \cite{tsiaplias2013multivariate,trancoso2023beyond}.
For CNY/USD, the high VC value means that a reverse-kernel walk initialized at this node can trace a relatively broad set of upstream net-spillover suppliers. This finding is broadly consistent with evidence that the Renminbi exchange-rate market is closely embedded in information linkages across China's bond and stock markets \cite{chen2023return}, but it is not evidence of a forward causal cascade.
For JPY/USD, the negative NET value is consistent with the view that the yen can absorb spillovers or operate as a safe-haven-related currency during periods of market stress, while still remaining an important currency node in global foreign-exchange dynamics \cite{ranaldo2010safe,trancoso2023beyond}.
 
EUR/USD and GBP/USD also illustrate the multidimensional nature of exchange-rate roles.
EUR/USD is close to balanced in direct NET spillover (a full-sample estimate; the rolling averages in Table~\ref{tab:directional_spillover_summary} place EUR/USD slightly on the receiver side) and has a moderate VC score, whereas GBP/USD is a weak net receiver but has relatively high VC.
This finding is consistent with currency-network evidence showing that the euro can serve as a major channel of shock propagation, while pound-related factors may absorb spillovers from other currencies \cite{trancoso2023beyond}.
Thus, major exchange rates need not appear as dominant direct transmitters in order to have broad upstream connectivity in the selected reverse-kernel topology.
 
BTC/USD occupies an intermediate position.
It has a small positive NET spillover and a relatively low $F_{\mathrm{out}}$, so it is not dominant in either direct transmission or reverse-chain stationary departure.
At the same time, its VC value is high relative to several other assets, indicating broad modeled access to upstream suppliers under the augmented reverse kernel.
This interpretation is consistent with prior evidence that Bitcoin is not isolated from traditional financial assets.
Elsayed et al.~\cite{elsayed2022risk} show that Bitcoin is connected with crude oil, gold, stocks, bonds, the U.S. dollar, and global uncertainty measures, and that Bitcoin becomes a net transmitter of volatility spillovers during the COVID-19 period, even though its full-sample connectedness is relatively weak.
Accordingly, BTC/USD occupies an intermediate role across the three descriptive network summaries rather than being either a dominant source or a purely peripheral asset.
 
Overall, Table~\ref{tab:full_sample_asset_roles} supports the main role-based interpretation of this study.
Direct spillover transmission, reverse-chain stationary departure, and deterministic upstream connectivity capture distinct summaries of the same estimated financial network.
This conclusion is consistent with the broader literature on financial contagion and spillovers, which emphasizes that shocks may be transmitted not only through direct links but also through indirect channels and intermediary mechanisms \cite{tsiaplias2013multivariate}.
Accordingly, asset roles should not be inferred from NET spillover alone, but can be compared using direct transmission, stationary departure, and deterministic upstream-connectivity measures, with the stated kernel and closure qualifications.

\section{Discussion}
\label{sec:discussion}

The empirical results support a role-based interpretation of cross-asset connectedness. Assets should not be characterized only as net transmitters or net receivers. Direct spillovers, stationary departure in the reverse source-tracing chain, and deterministic multistep upstream connectivity summarize related but distinct properties of the estimated network. An asset that dominates direct transmission need not rank highest under the other summaries, and a net receiver may have broad modeled access to upstream suppliers. These quantities are descriptive and conditional on the kernel construction.

From a methodological perspective, conventional TO, FROM, and NET measures remain useful for identifying the dominant direction of direct spillover transmission~\cite{diebold2012better,diebold2014network}. Embedding positive net spillovers into a row-stochastic kernel additionally permits a reverse walk that traces upstream net-spillover suppliers. Its stationary distribution, departure measure, and deterministic VC score characterize this normalized topology. They do not replace conventional spillover measures, restore the absolute magnitudes removed by row normalization, or establish forward causal shock propagation.

Gold futures are the dominant direct net transmitter and have the highest stationary departure value in the reverse chain. This pattern is consistent with literature relating gold to safe-haven demand, inflation expectations, and global risk conditions~\cite{ciner2013hedges,reboredo2013gold}. Gold nevertheless has the lowest relative deterministic VC score. That comparison shows that direct transmission, reverse-chain stationarity, and upstream-connectivity scores need not have the same ranking; it does not show that gold is weak in a forward physical cascade.

The exchange-rate variables provide a contrasting pattern. Some are net receivers or near-balanced nodes but have relatively high VC. Because the kernel reverses the empirical edge direction, this means that the algorithm initialized at those nodes traces a broad upstream supplier set. It does not identify them as forward cascade conduits. The contrast still supports the narrower conclusion that NET spillovers alone do not describe every property of the normalized network topology.

Bitcoin occupies an intermediate role. It is not a dominant direct source comparable to gold futures, and its reverse-chain stationary departure value is relatively low, but its deterministic VC score is high. This mixed profile is consistent with prior discussion of Bitcoin as both an alternative asset and an asset sensitive to liquidity conditions and risk appetite~\cite{dyhrberg2016bitcoin,urquhart2019bitcoin,bouri2017hedge}. Here it should be read as a cross-measure descriptive profile, not as proof that Bitcoin transmits multistep cascades.

The auxiliary specification including the U.S. Dollar Index provides an additional perspective on the organization of the diffusion network. The baseline specification excludes the dollar index not because dollar-related conditions are irrelevant, but because the main objective is to examine the internal diffusion topology among exchange rates, gold futures, and Bitcoin without allowing a broad dollar factor to enter the network as an explicit and potentially dominant node. This choice is also motivated by the fact that bilateral exchange rates share common currency risk factors and dollar-related systematic components~\cite{lustig2011common,verdelhan2018share}. If the dollar index is introduced as an auxiliary node, it may absorb part of the common currency component and alter the estimated diffusion hierarchy. Therefore, the auxiliary specification should not be interpreted as replacing the baseline network. Rather, it provides a robustness-oriented comparison for assessing whether the main diffusion-role patterns remain broadly consistent when a common dollar-related factor is explicitly included.

The distribution-based state indicators show that the two rolling diagnostics are episodic and asset-specific. Under the conservative joint criterion, no asset enters both upper tails simultaneously. Weak elevated states identify periods in which at least one diagnostic
is large relative to its asset-specific rolling distribution, whereas
the continuous EDS\_score summarizes the combined standardized level of
the two diagnostics. Because the inputs are reverse-chain $F_{\mathrm{out}}$ and closure-dependent deterministic VC, these indicators are descriptive diffusion-state diagnostics rather than
observed contagion states.

Several limitations should be noted. First, the results are based on return-based VAR-GFEVD spillovers and therefore describe forecast-error-variance connectedness rather than structural causal relations.
The estimated network should be interpreted as a statistical representation of shock transmission, not as evidence of underlying causal mechanisms.
Second, the Markov transition kernel is constructed from positive net spillovers.
This choice emphasizes the dominant direction of net transmission, but it abstracts from offsetting bilateral effects and negative net relationships.
Third, the uniform-row convention for nodes with zero incoming positive net-spillover strength, that is, pure net sources under the chosen orientation, is a modeling choice used to keep the transition matrix row-stochastic. It introduces augmented restart transitions, including diagonal entries, and affects both $F_{\mathrm{out}}$ and VC. Alternative closures may change their levels and rankings. Fourth, VC is a deterministic approximation on this augmented reverse kernel. With cycles it need not equal the exact independent-cascade expectation; because a row-stochastic kernel has spectral radius one, the small-spectral-radius accuracy condition is not available here. VC is therefore used only as a closure-dependent topology score. Fifth, the rolling results may depend on the selected window size and forecast horizon, although auxiliary specifications examine the main patterns.
In the seven-variable auxiliary network, estimating a VAR(2) within a 100-day rolling window leaves a relatively limited number of observations per estimated coefficient. Accordingly, the dollar-index results should be interpreted as supplementary robustness
evidence rather than as a more precisely estimated replacement for the baseline network.

Sixth, the series are observed at daily frequency, but their closing or settlement prices may be recorded at different venue-specific times.
As a result, the estimated lead--lag and directional spillover structure may partly reflect non-synchronous observation times rather than economic transmission alone. This issue is particularly relevant when comparing continuously traded exchange rates and Bitcoin with gold futures, whose daily value is based on an exchange-specific settlement convention.
Accordingly, the directional estimates, including the dominant transmitter role of gold futures, should be interpreted with this timing caveat in mind.
Finally, the proposed role categories should be interpreted as descriptive diffusion roles implied by the estimated network structure, not as fixed structural identities of the assets.

Overall, the contribution is a transparent comparison of direct spillovers with two source-tracing network summaries. Distinguishing direct transmission, reverse-chain stationary departure, and deterministic upstream connectivity reveals heterogeneous asset profiles not captured by the transmitter--receiver classification alone. The interpretation remains limited to statistical connectedness and the specified normalized, augmented kernel.

\section{Conclusion}
\label{sec:conclusion}
This study investigated cross-asset connectedness in a network of major exchange rates, gold futures, and Bitcoin. In addition to conventional net-transmitter and net-receiver measures~\cite{diebold2009measuring,diebold2012better,diebold2014network}, it compared direct spillovers with stationary departure and deterministic multistep upstream connectivity on an augmented reverse source-tracing kernel.

Return spillovers were estimated using a VAR-GFEVD framework, which provides ordering-invariant forecast-error variance shares~\cite{koop1996impulse,pesaran1998generalized}. Positive pairwise net spillovers were then converted into a row-stochastic reverse kernel: a step from a receiver moves toward one of its net-spillover suppliers. TO, FROM, and NET were compared with the stationary departure measure $F_{\mathrm{out}}$ and the deterministic Viral Centrality fixed point~\cite{fink2023centrality}. The latter approximates multistep upstream connectivity on the full augmented kernel; it is neither a Monte Carlo estimate nor an exact cascade expectation in a cyclic network.

Gold futures emerge as the dominant direct net transmitter and the highest-$F_{\mathrm{out}}$ node. They have the lowest relative VC score under the selected uniform-restart closure. Some exchange-rate receivers and Bitcoin have higher VC scores, meaning that the reverse-kernel recursion initialized at those nodes traces a broader set of upstream suppliers. This is not evidence that those assets are forward diffusion conduits. The main empirical conclusion is therefore a difference in rankings across well-defined descriptive measures, not a causal statement about realized shock cascades.

These findings show the value of comparing a direct NET measure with conditional source-tracing summaries. The framework links VAR-GFEVD connectedness with Markov-chain occupation and deterministic network recursion while keeping the three estimands distinct. It provides a more detailed descriptive characterization of asset positions without making structural causal claims.

Several limitations should be acknowledged. First, the estimates describe forecast-error-variance connectedness rather than structural causal relations. Second, retaining only positive pairwise net spillovers omits offsetting and opposite-direction components, while row normalization removes absolute edge magnitude. Third, the uniform row for nodes with zero incoming positive net-spillover strength is a restart closure that directly affects $F_{\mathrm{out}}$ and VC. Fourth, deterministic VC can overestimate exact independent-cascade reach in cyclic graphs, and the present row-stochastic kernel has spectral radius one. Fifth, rolling estimates depend on the window size and forecast horizon. The reported roles are therefore descriptive, specification-dependent network summaries rather than fixed asset identities.

Future research may compare alternative kernel orientations and dangling-row closures, retain information on absolute edge magnitude, and validate deterministic VC against exact or Monte Carlo independent-cascade reach. The framework can also be extended to broader asset systems and nonlinear or regime-dependent spillover models. Finally, the present PageRank comparison can be extended to other diffusion-based centralities to assess the robustness of the role classifications.

\section*{Acknowledgement}
This work of S. H. Choi and H. Choi was supported by the National Research Foundation of Korea(NRF) grant funded by the Korea government(MSIT) (No. 2022R1A5A1033624 and RS-2024-00342939). The work of S. Jang was supported by the National Research Foundation of Korea (NRF) grant funded by the Korea government (No. RS-2024-00464395). H. Lee is partially supported by the National Research Foundation of Korea (NRF) grant funded by the Korea government (MSIT) (No. RS-2024-00408003) and by a KIAS Individual Grant (AP103101) via the Center for AI and Natural Sciences at the Korea Institute for Advanced Study.


\clearpage
\appendix

\setcounter{section}{0}
\setcounter{table}{0}
\setcounter{figure}{0}
\setcounter{equation}{0}
\renewcommand{\thesection}{S\arabic{section}}
\renewcommand{\thesubsection}{\thesection.\arabic{subsection}}
\renewcommand{\thetable}{S\arabic{section}.\arabic{table}}
\renewcommand{\thefigure}{S\arabic{section}.\arabic{figure}}
\renewcommand{\theequation}{S\arabic{section}.\arabic{equation}}
\renewcommand{\theHsection}{S\arabic{section}}
\renewcommand{\theHsubsection}{\theHsection.\arabic{subsection}}
\renewcommand{\theHtable}{S\arabic{section}.\arabic{table}}
\renewcommand{\theHfigure}{S\arabic{section}.\arabic{figure}}
\renewcommand{\theHequation}{S\arabic{section}.\arabic{equation}}
\counterwithin*{table}{section}
\counterwithin*{figure}{section}

\begin{center}
{\Large Supplementary Material}\\[1em]
{\large Markovian Shock-Source Tracing and Multidimensional Asset Roles in Exchange Rates, Gold Futures, and Bitcoin}
\end{center}

\vspace{1em}

\section{Rolling directional spillover plots}
\label{supsec:rolling_directional_spillovers}

Figures~\ref{fig:to_spillover_asset_level}--\ref{fig:net_spillover_asset_level} summarize the time varying directional spillover patterns based on TO, FROM, and NET measures. Whereas the rolling TCI summarizes the overall level of system wide connectedness, these figures show how that connectedness is formed through the transmitter and receiver roles of individual assets. Because the y-axis scale is kept identical across assets within each figure, the relative magnitude of spillover intensity can be compared directly across assets.

\begin{figure}[!htbp]
\centering
 
\begin{minipage}{0.32\textwidth}
\centering
\includegraphics[width=\textwidth]{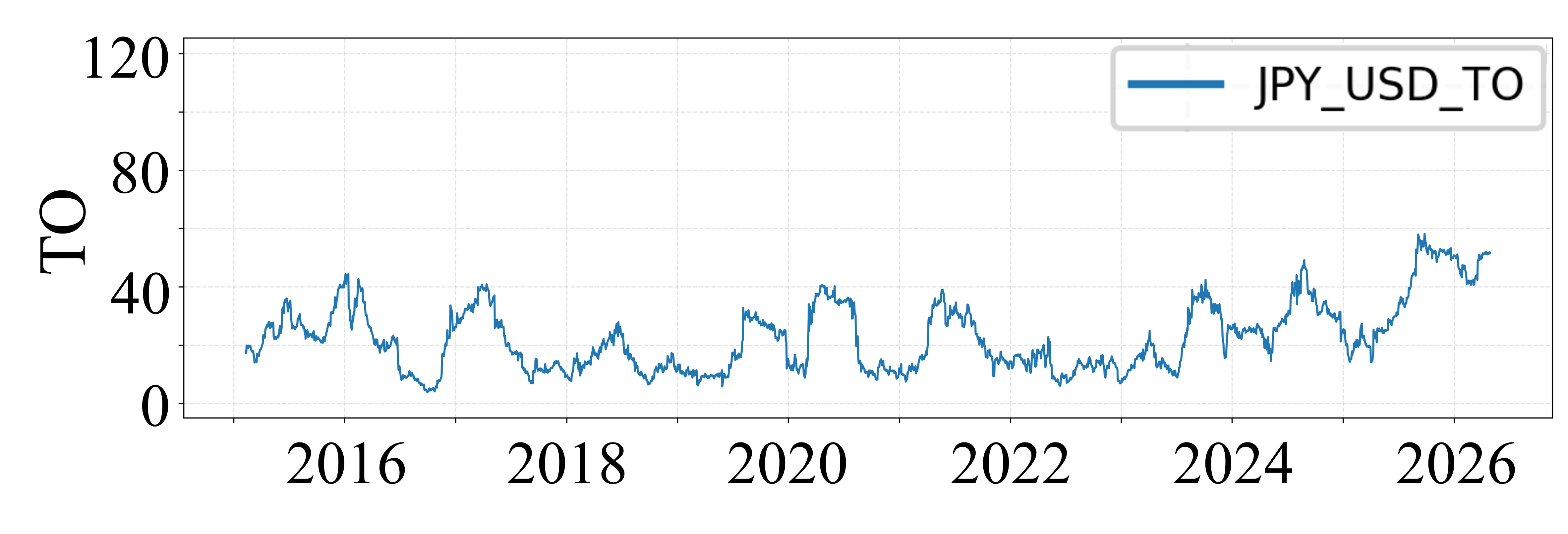}
 
{\small (a) JPY/USD}
\end{minipage}
\hfill
\begin{minipage}{0.32\textwidth}
\centering
\includegraphics[width=\textwidth]{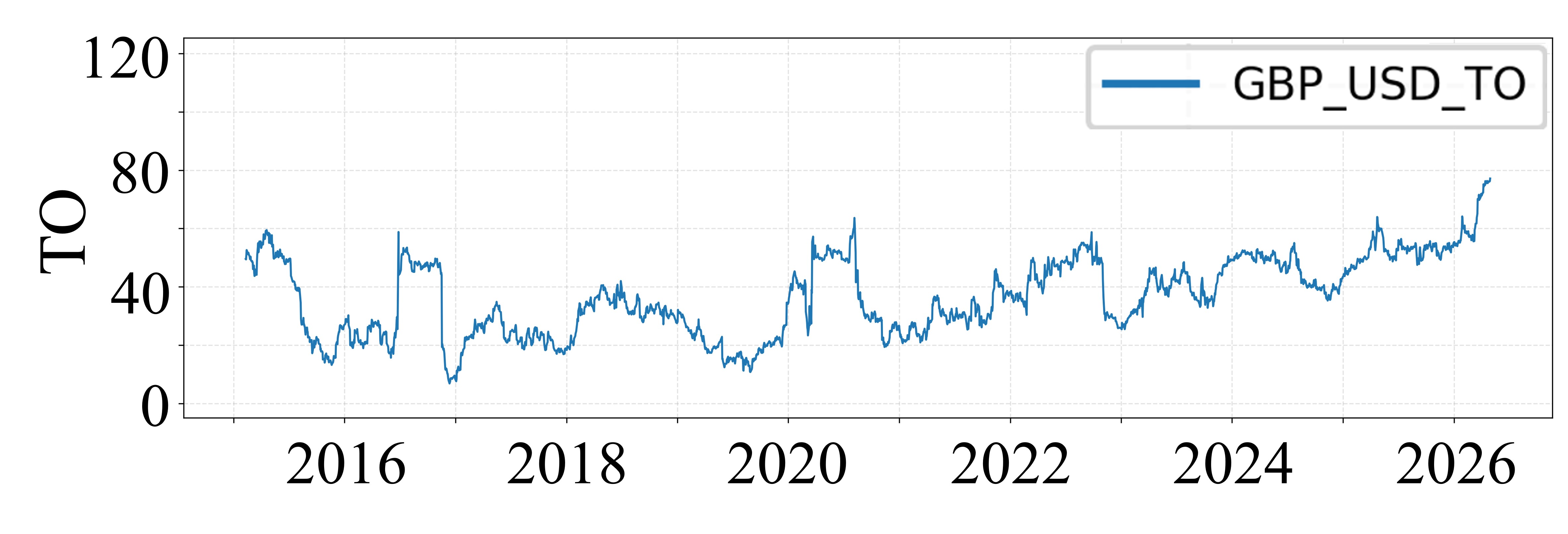}
 
{\small (b) GBP/USD}
\end{minipage}
\hfill
\begin{minipage}{0.32\textwidth}
\centering
\includegraphics[width=\textwidth]{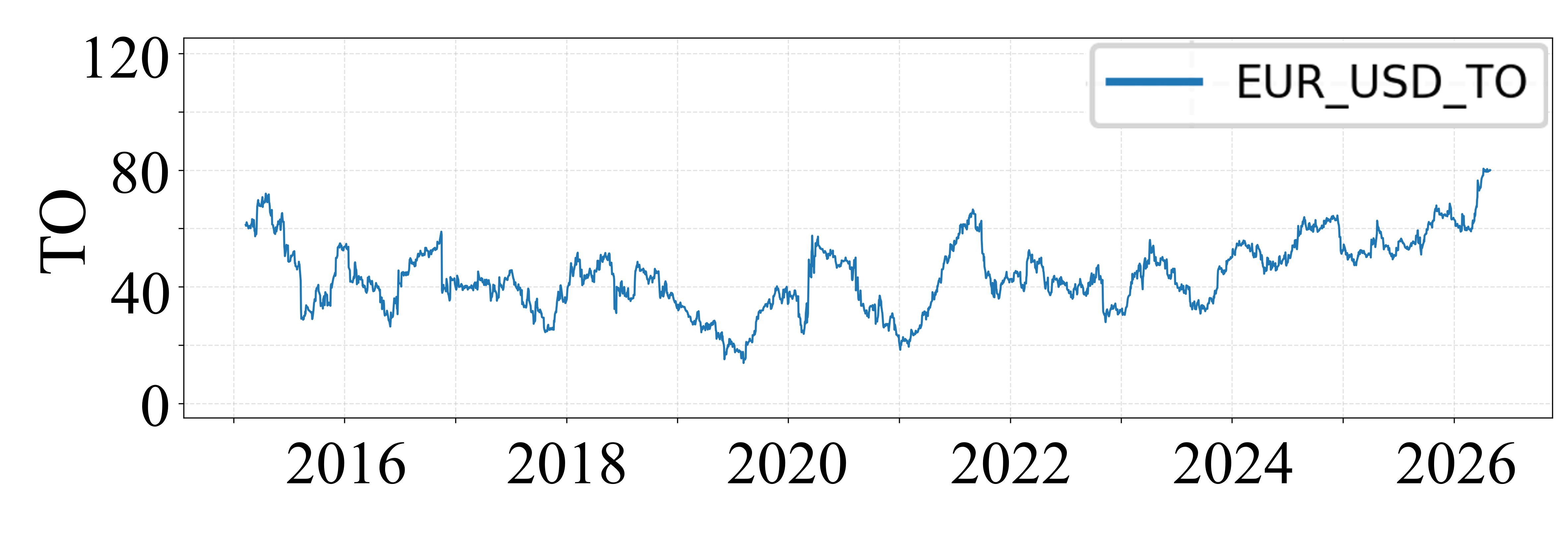}
 
{\small (c) EUR/USD}
\end{minipage}
 
\vspace{2mm}
 
\begin{minipage}{0.32\textwidth}
\centering
\includegraphics[width=\textwidth]{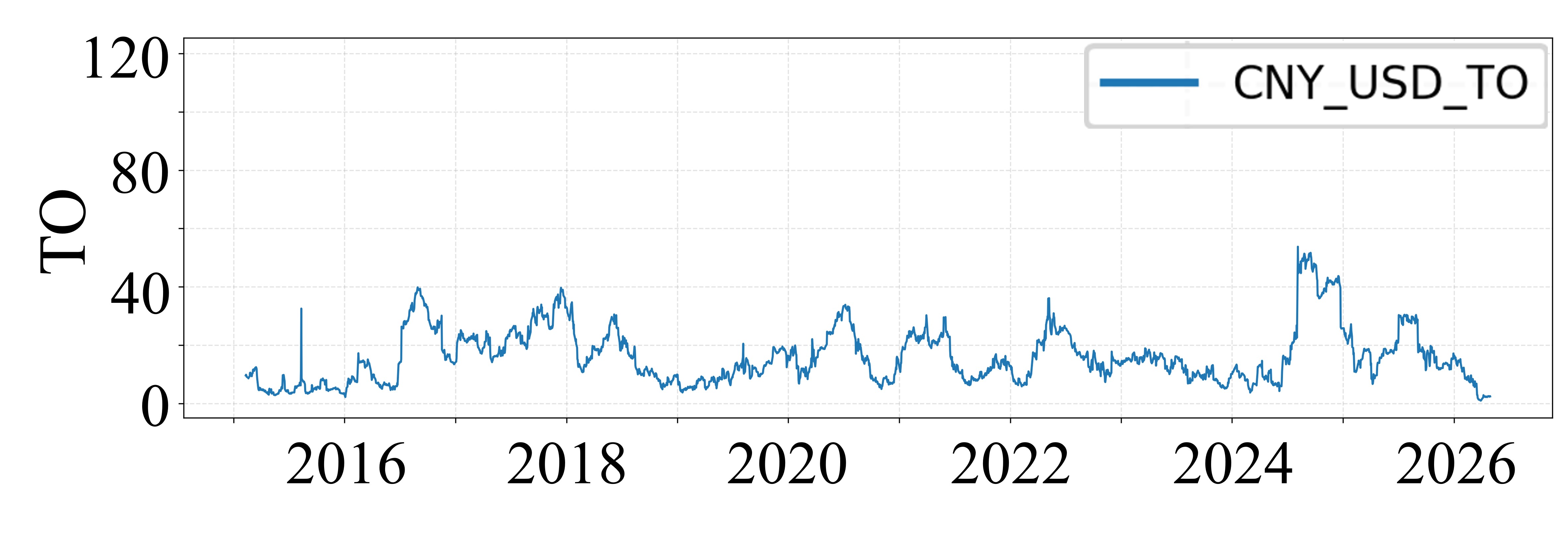}
 
{\small (d) CNY/USD}
\end{minipage}
\hfill
\begin{minipage}{0.32\textwidth}
\centering
\includegraphics[width=\textwidth]{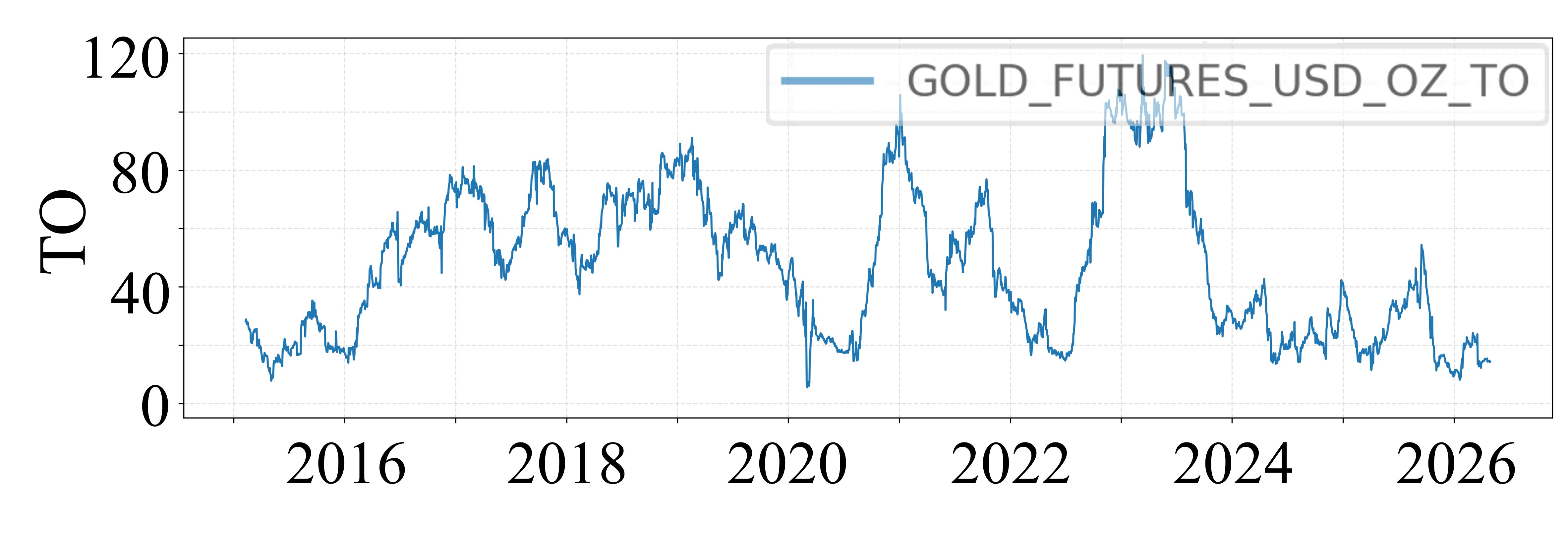}
 
{\small (e) Gold futures}
\end{minipage}
\hfill
\begin{minipage}{0.32\textwidth}
\centering
\includegraphics[width=\textwidth]{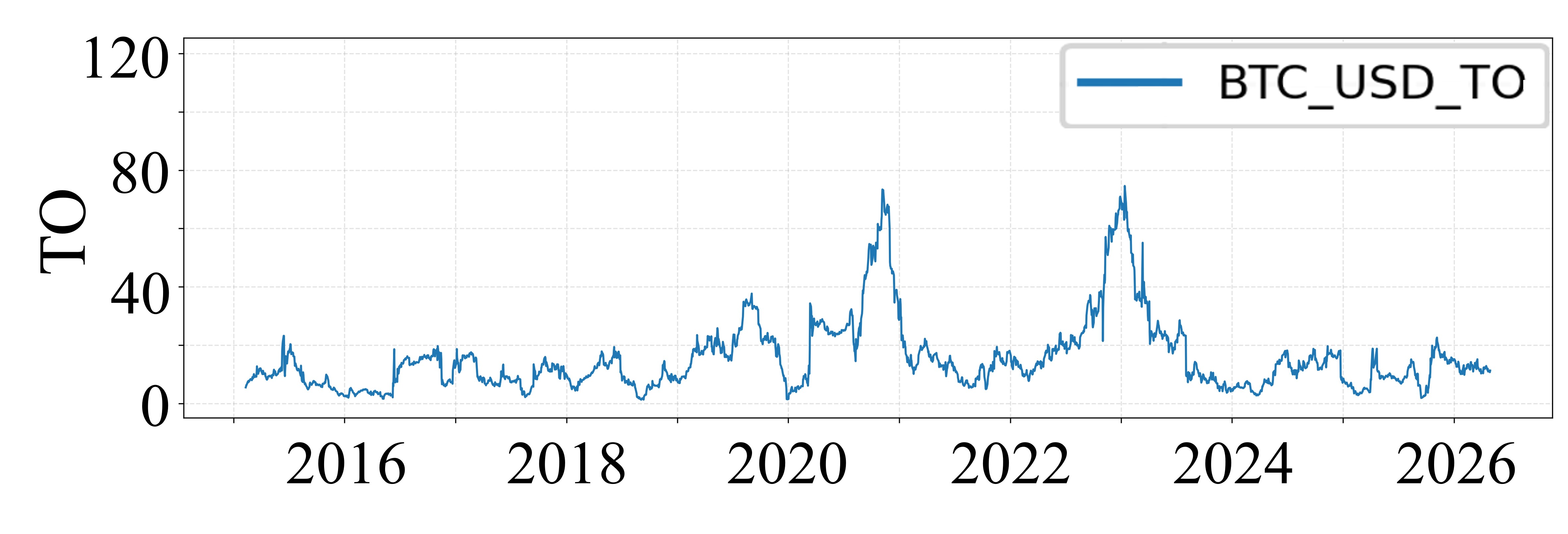}

{\small (f) BTC/USD}
\end{minipage}
 
\caption[Asset level TO spillovers]{
Asset level TO spillovers for the network excluding the dollar index under window size 100 and forecast horizon $H=10$. TO measures spillovers transmitted from each asset to other assets. The y-axis scale is kept identical across assets to allow direct comparison of transmission intensity.
}
\label{fig:to_spillover_asset_level}
\end{figure}

\begin{figure}[!htbp]
\centering
 
\begin{minipage}{0.32\textwidth}
\centering
\includegraphics[width=\textwidth]{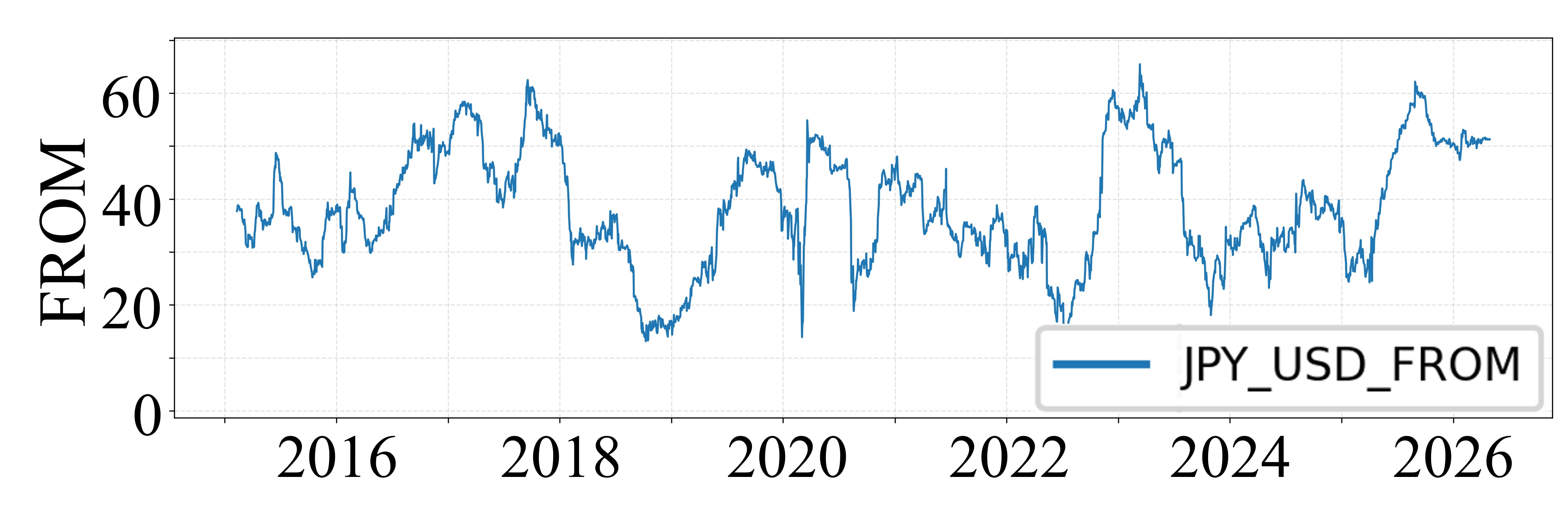}
 
{\small (a) JPY/USD}
\end{minipage}
\hfill
\begin{minipage}{0.32\textwidth}
\centering
\includegraphics[width=\textwidth]{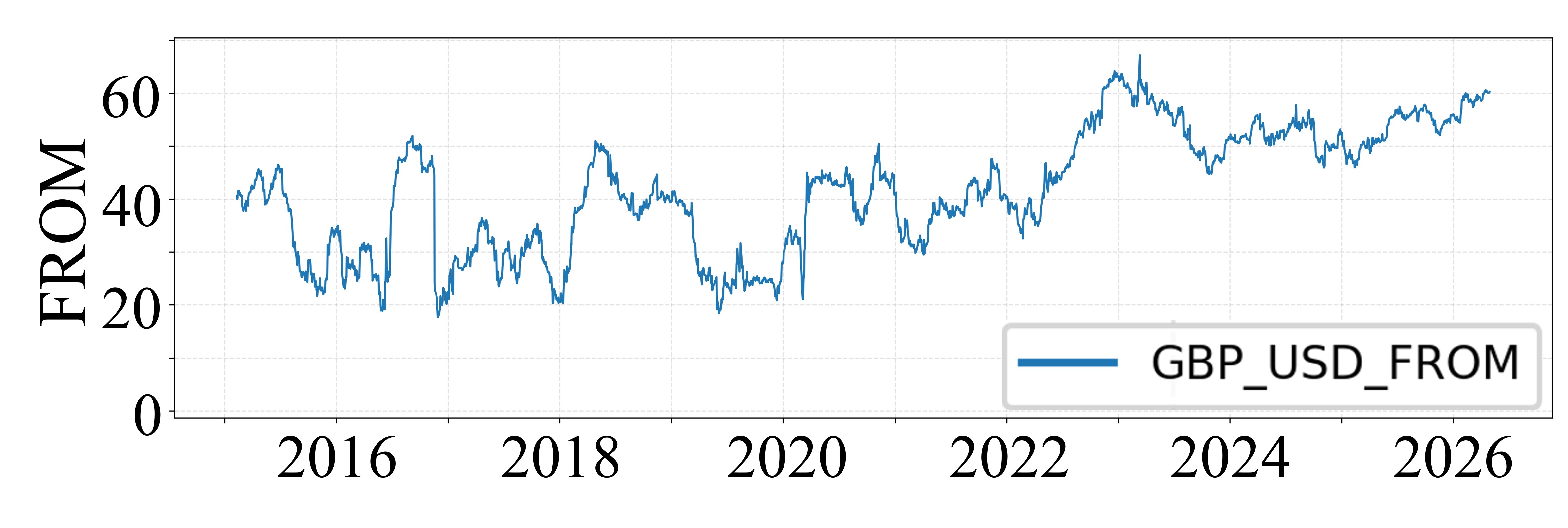}
 
{\small (b) GBP/USD}
\end{minipage}
\hfill
\begin{minipage}{0.32\textwidth}
\centering
\includegraphics[width=\textwidth]{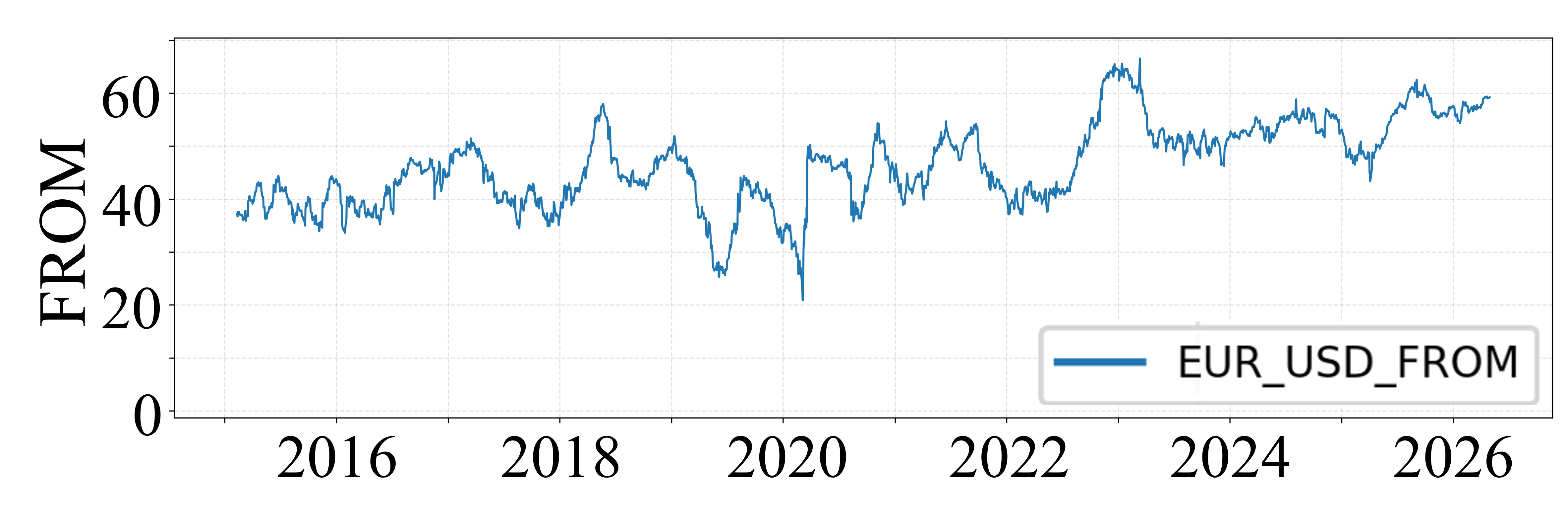}
 
{\small (c) EUR/USD}
\end{minipage}
 
\vspace{2mm}
 
\begin{minipage}{0.32\textwidth}
\centering
\includegraphics[width=\textwidth]{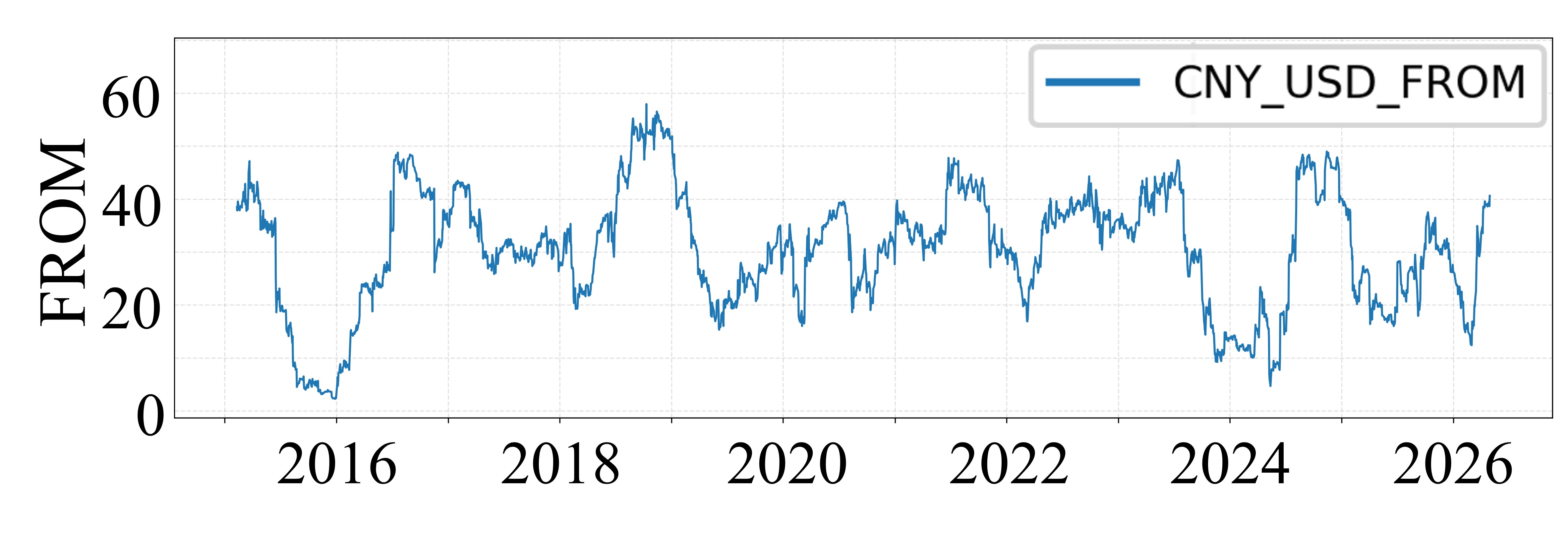}
 
{\small (d) CNY/USD}
\end{minipage}
\hfill
\begin{minipage}{0.32\textwidth}
\centering
\includegraphics[width=\textwidth]{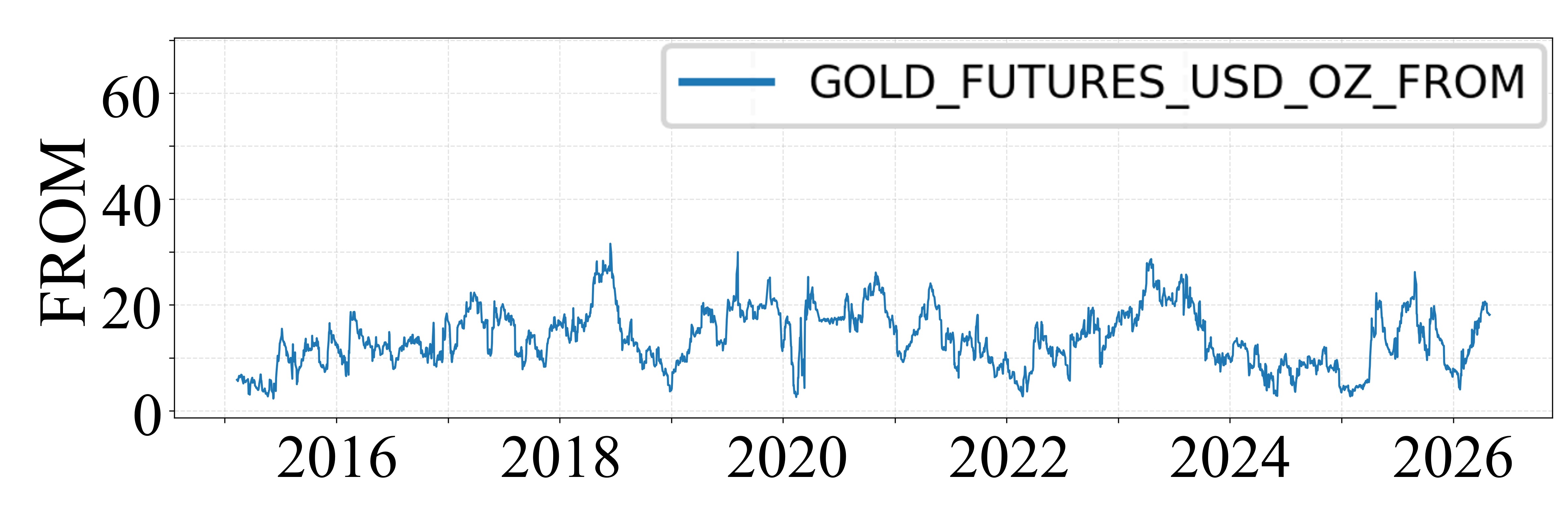}
 
{\small (e) Gold futures}
\end{minipage}
\hfill
\begin{minipage}{0.32\textwidth}
\centering
\includegraphics[width=\textwidth]{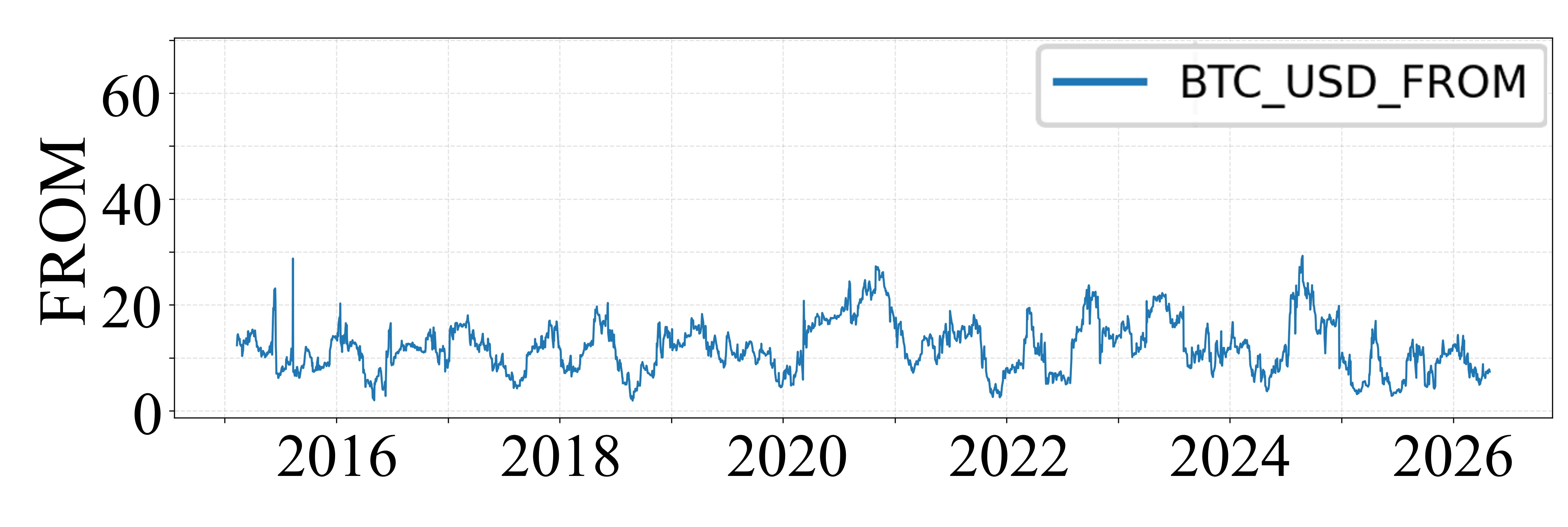}
 
{\small (f) BTC/USD}
\end{minipage}
 
\caption[Asset level FROM spillovers]{
Asset level FROM spillovers for the network excluding the dollar index under window size 100 and forecast horizon $H=10$. FROM measures spillovers received by each asset from other assets. The y-axis scale is kept identical across assets to allow direct comparison of exposure to incoming spillovers.
}
\label{fig:from_spillover_asset_level}
\end{figure}

Figure~\ref{fig:to_spillover_asset_level} reports TO spillovers, which measure the extent to which each asset transmits shocks to the other assets in the network. The most prominent pattern is the high and persistent transmission role of gold futures. Although several exchange rate variables and BTC/USD show temporary increases in TO spillovers, their direct transmission roles are less dominant and more time varying than that of gold futures. This confirms that gold futures are the main direct source of outward spillover transmission in the baseline network.

Figure~\ref{fig:from_spillover_asset_level} presents FROM spillovers, which measure the extent to which each asset receives shocks from other assets. The major exchange rate variables show relatively high FROM spillovers, indicating that currency pairs are important receivers of cross asset shocks in the baseline network. Gold futures and BTC/USD exhibit relatively lower FROM spillovers, suggesting that their exposure to incoming shocks from the rest of the network is more limited.

 
\begin{figure}[!htbp]
\centering
 
\begin{minipage}{0.32\textwidth}
\centering
\includegraphics[width=\textwidth]{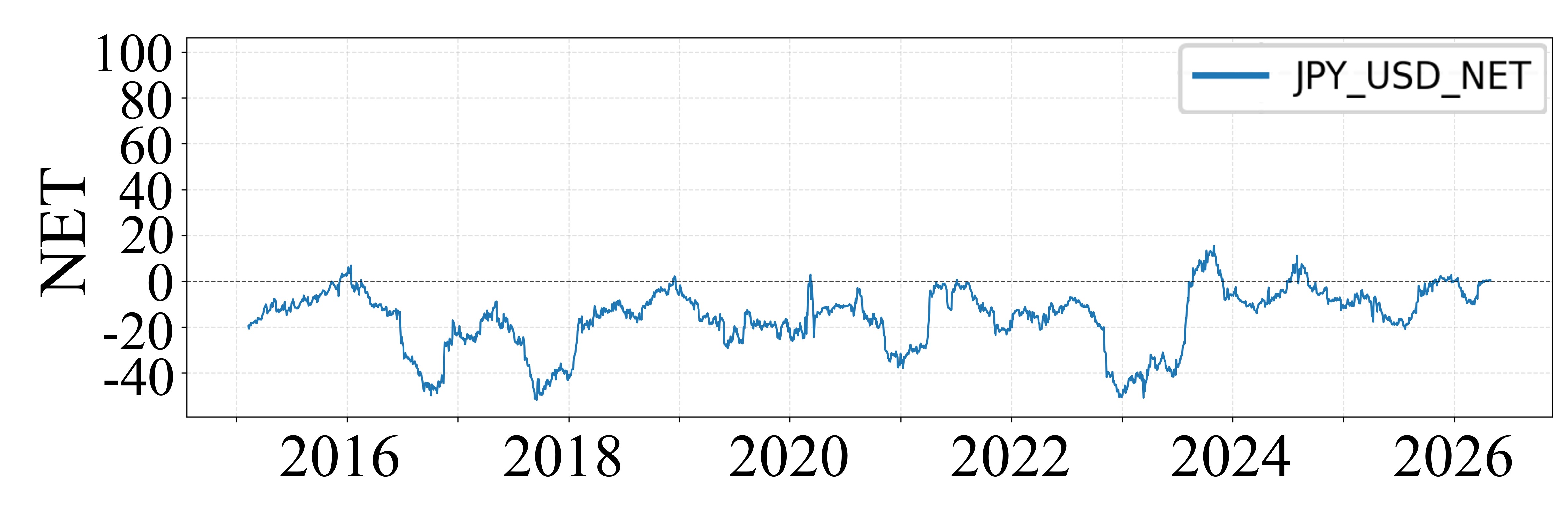}
 
{\small (a) JPY/USD}
\end{minipage}
\hfill
\begin{minipage}{0.32\textwidth}
\centering
\includegraphics[width=\textwidth]{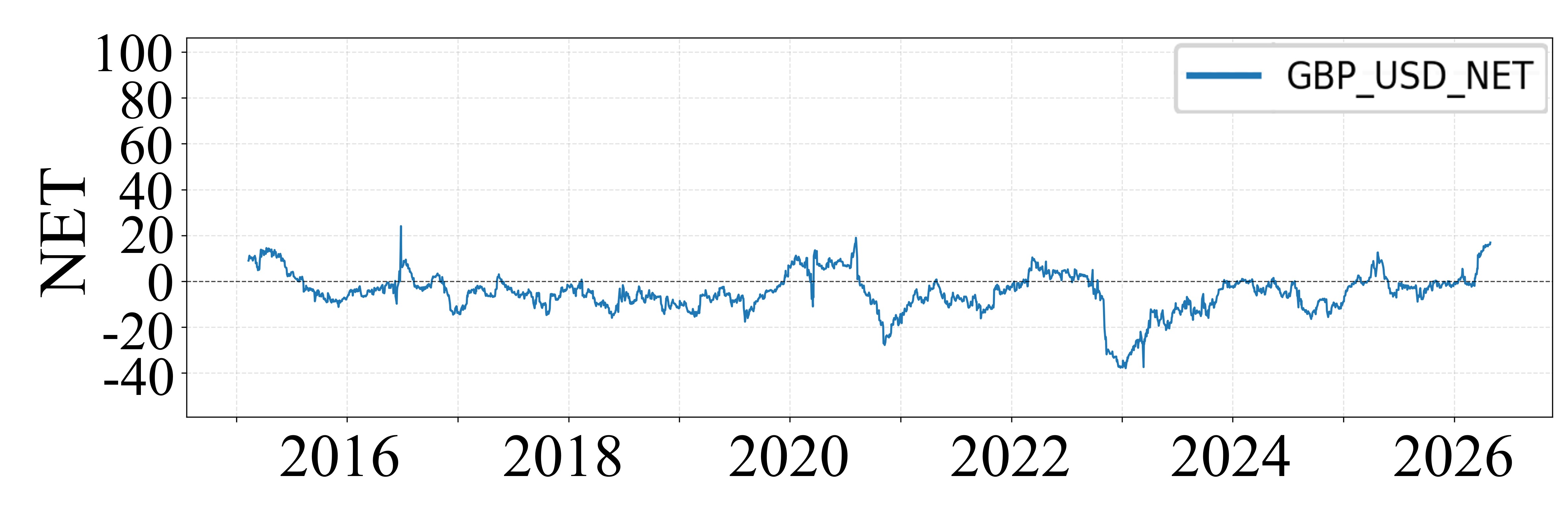}
 
{\small (b) GBP/USD}
\end{minipage}
\hfill
\begin{minipage}{0.32\textwidth}
\centering
\includegraphics[width=\textwidth]{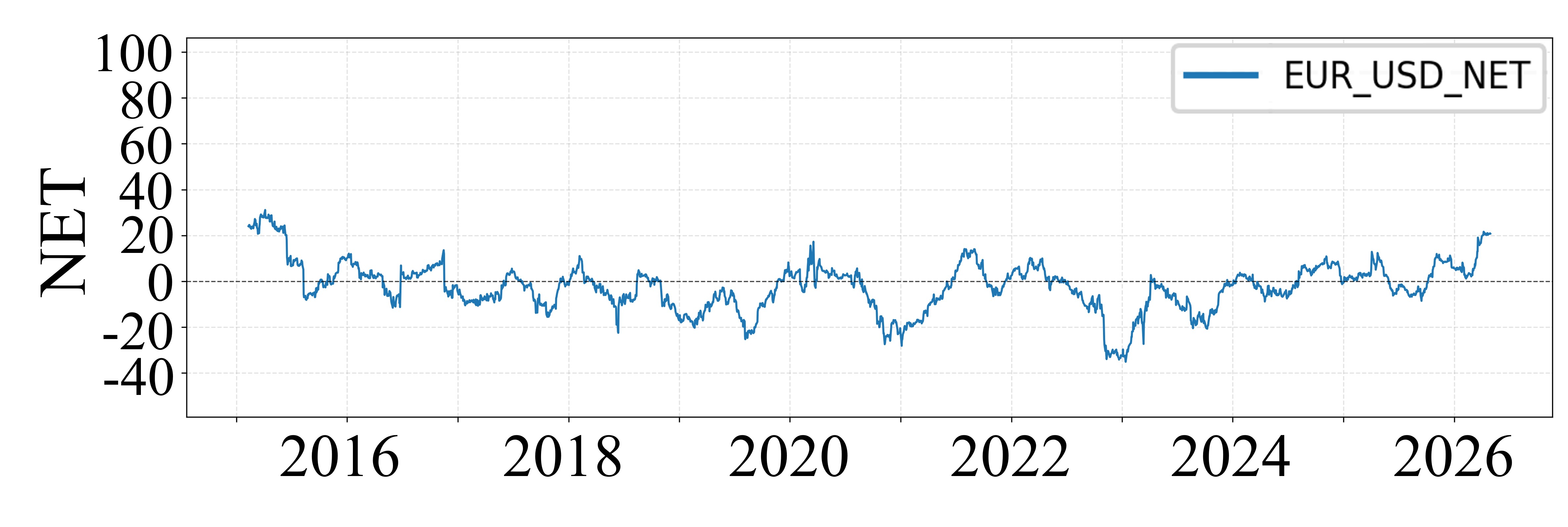}
 
{\small (c) EUR/USD}
\end{minipage}
 
\vspace{2mm}
 
\begin{minipage}{0.32\textwidth}
\centering
\includegraphics[width=\textwidth]{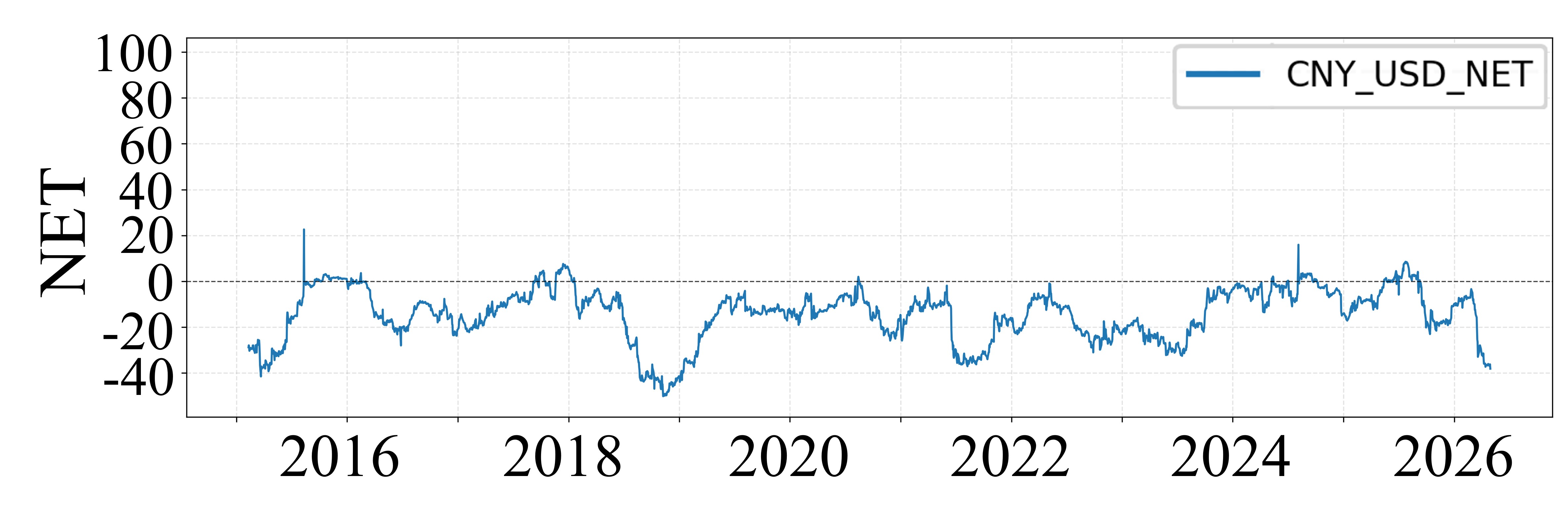}
 
{\small (d) CNY/USD}
\end{minipage}
\hfill
\begin{minipage}{0.32\textwidth}
\centering
\includegraphics[width=\textwidth]{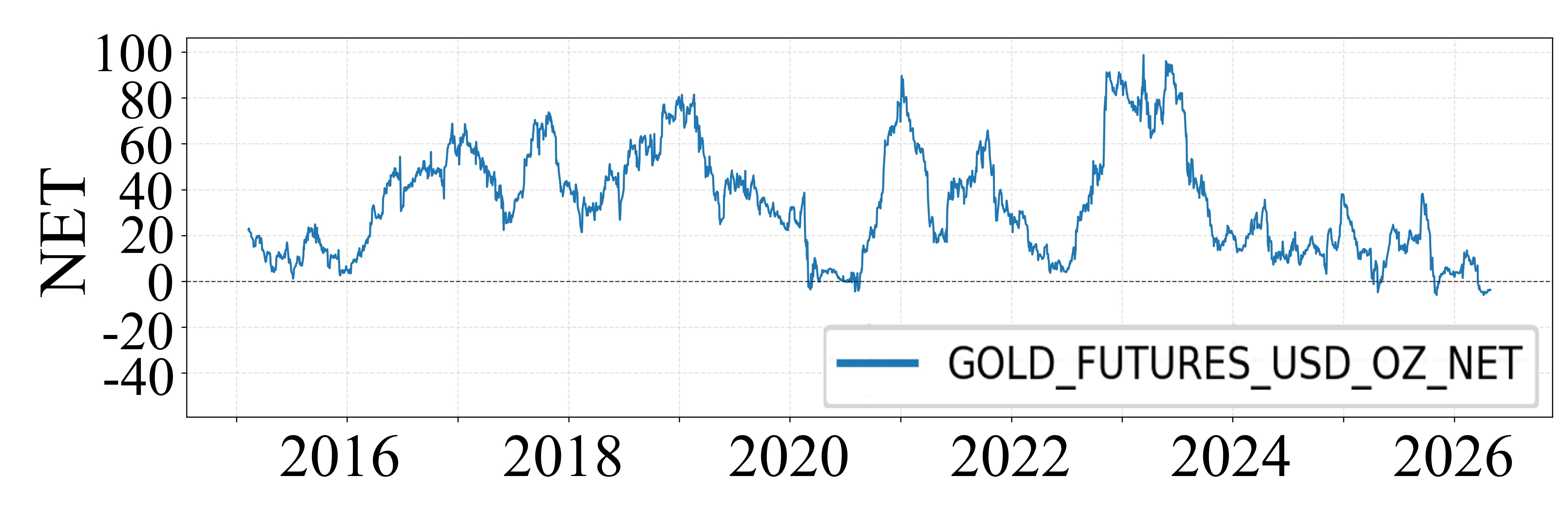}
 
{\small (e) Gold futures}
\end{minipage}
\hfill
\begin{minipage}{0.32\textwidth}
\centering
\includegraphics[width=\textwidth]{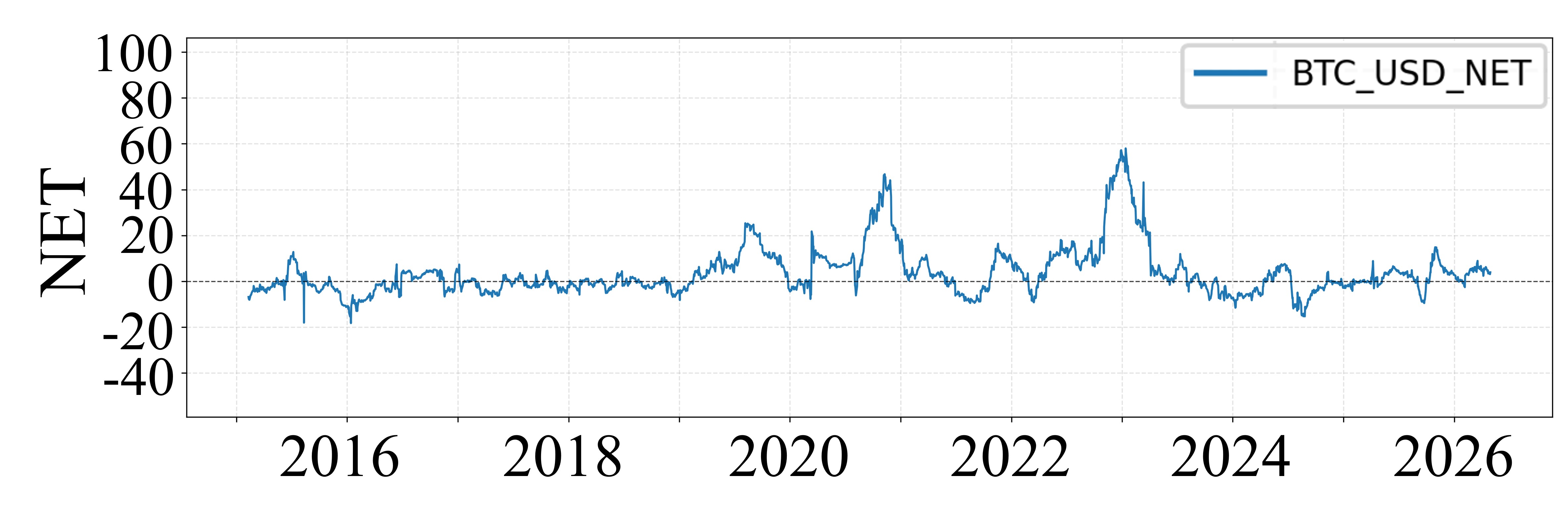}
 
{\small (f) BTC/USD}
\end{minipage}
 
\caption[Asset level NET spillovers]{
Asset level NET spillovers for the network excluding the dollar index under window size 100 and forecast horizon $H=10$. NET is defined as TO minus FROM. A positive NET value indicates a net transmitter, whereas a negative NET value indicates a net receiver. The y-axis scale is kept identical across assets to allow direct comparison of net transmission roles.
}
\label{fig:net_spillover_asset_level}
\end{figure}

Figure~\ref{fig:net_spillover_asset_level} reports NET spillovers, defined as TO minus FROM. Gold futures show positive NET spillovers for most of the sample period, confirming their role as the dominant direct net transmitter. CNY/USD and JPY/USD frequently display negative NET values, indicating clearer net receiving roles. EUR/USD and GBP/USD are closer to balanced but tend to remain on the receiver side. BTC/USD shows positive NET values in some periods, but it does not exhibit the persistent and dominant net transmitting pattern observed for gold futures.

Taken together, Figures~\ref{fig:to_spillover_asset_level}--\ref{fig:net_spillover_asset_level} show that direct spillovers in the baseline network are strongly asymmetric. Gold futures act as the dominant direct transmitter, whereas the major exchange-rate variables mainly function as receivers or near-balanced receivers of cross-asset shocks. Bitcoin occupies an intermediate position with a limited but nonzero transmitting role. This asymmetric topology provides the first layer of the asset-role classification developed in the main text.

\section{Distribution-based diffusion-state diagnostics}
\label{supsec:diffusion_state_diagnostics}

This section reports the detailed distribution-based diffusion-state
diagnostics that support the main-text summary. The diagnostics are based on
rolling deterministic Viral Centrality (VC) and the stationary departure
measure $F_{\mathrm{out}}$ for the network excluding the dollar index under
window size 100 and forecast horizon $H=10$. VC is a closure-dependent
fixed-point score for multistep upstream connectivity on the augmented reverse
source-tracing kernel, whereas $F_{\mathrm{out}}$ is the stationary
probability of occupying an asset and then moving to a distinct asset in that
chain. Neither measure is a forward causal shock-propagation estimate.

\begin{figure}[!t]
\centering
\captionsetup{font=small,skip=4pt}

\begin{minipage}{0.47\textwidth}
\centering
\includegraphics[width=\textwidth,height=0.105\textheight,keepaspectratio]{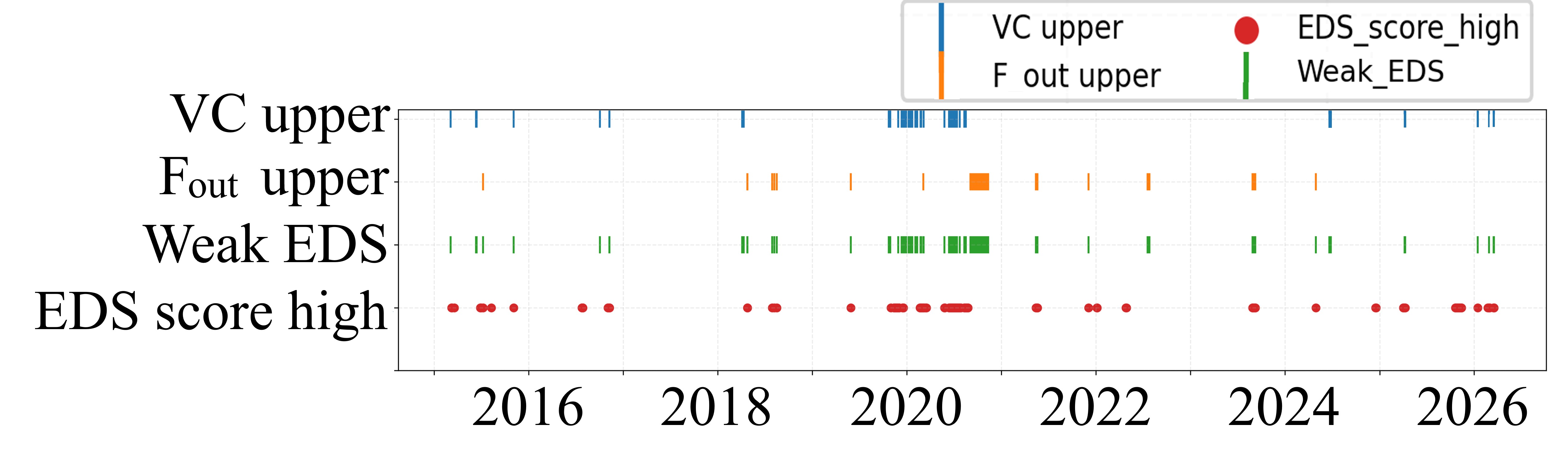}
{\small (a) BTC/USD}
\end{minipage}
\hfill
\begin{minipage}{0.47\textwidth}
\centering
\includegraphics[width=\textwidth,height=0.105\textheight,keepaspectratio]{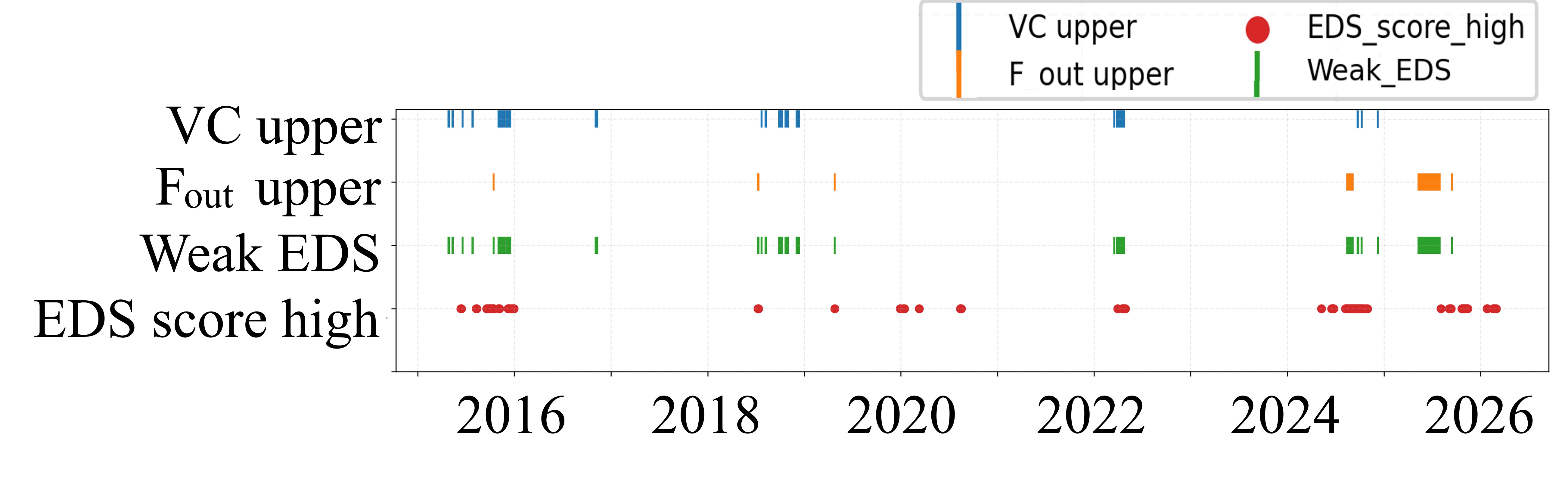}
{\small (b) CNY/USD}
\end{minipage}

\vspace{0.5mm}

\begin{minipage}{0.47\textwidth}
\centering
\includegraphics[width=\textwidth,height=0.105\textheight,keepaspectratio]{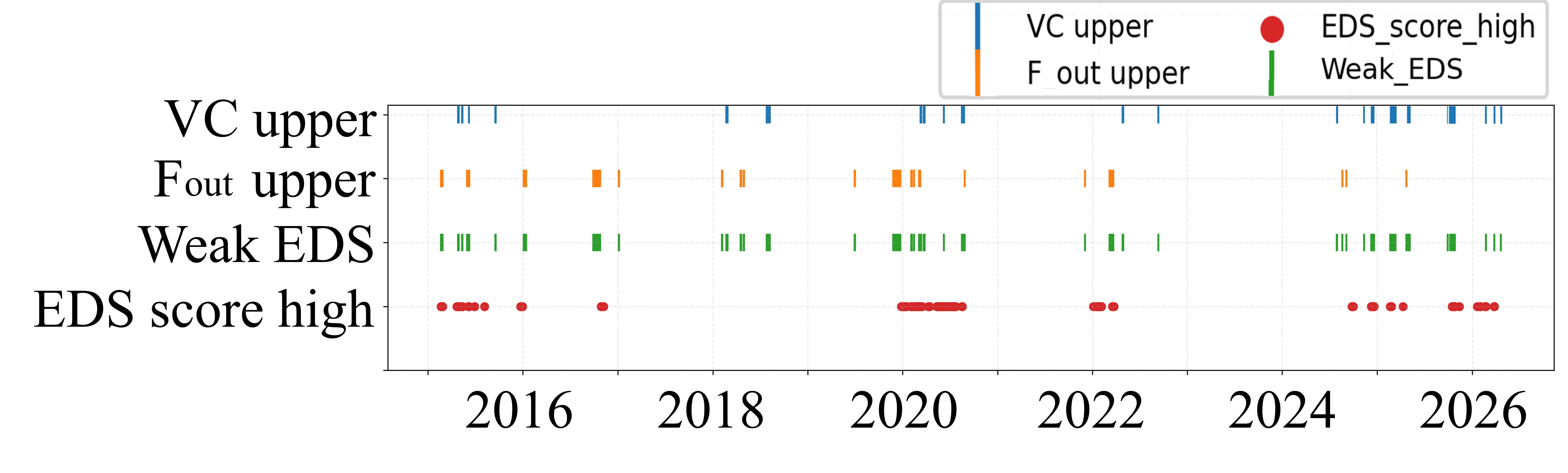}
{\small (c) EUR/USD}
\end{minipage}
\hfill
\begin{minipage}{0.47\textwidth}
\centering
\includegraphics[width=\textwidth,height=0.105\textheight,keepaspectratio]{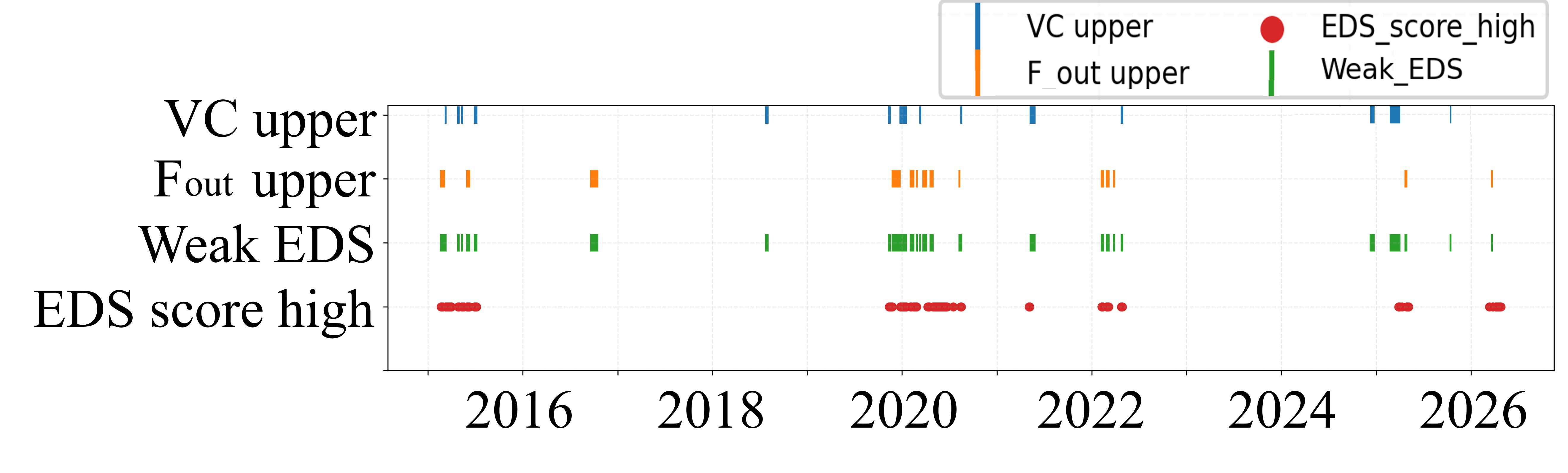}
{\small (d) GBP/USD}
\end{minipage}

\vspace{0.5mm}

\begin{minipage}{0.47\textwidth}
\centering
\includegraphics[width=\textwidth,height=0.105\textheight,keepaspectratio]{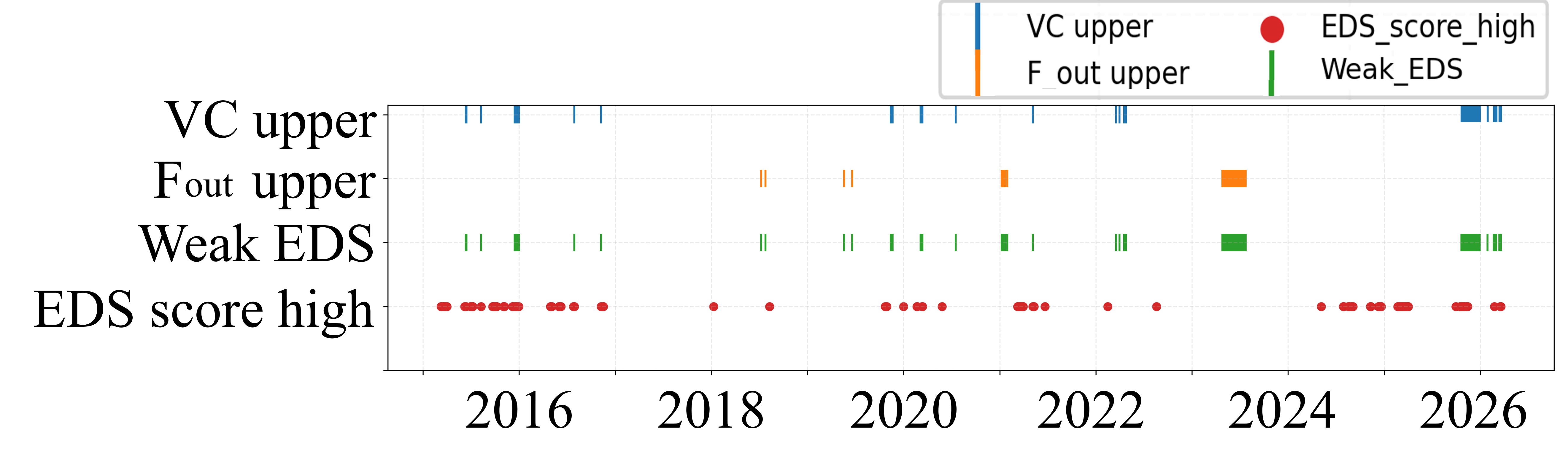}
{\small (e) Gold futures}
\end{minipage}
\hfill
\begin{minipage}{0.47\textwidth}
\centering
\includegraphics[width=\textwidth,height=0.105\textheight,keepaspectratio]{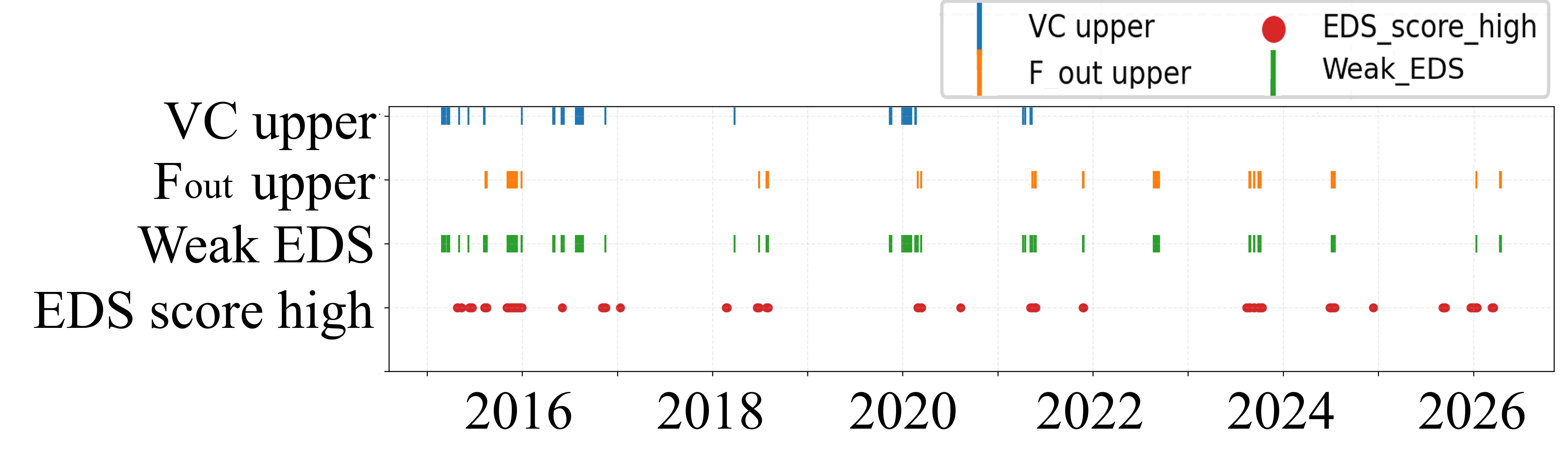}
{\small (f) JPY/USD}
\end{minipage}

\caption[Asset-level diffusion-state diagnostic exceedance timelines]{
Asset-level diffusion-state diagnostic exceedance timelines for the network excluding
the dollar index under window size 100 and forecast horizon $H=10$. For each
asset, the plotted rows indicate the rolling windows in which VC upper
exceedance, $F_{\mathrm{out}}$ upper exceedance, Weak EDS, and
EDS\_score\_high occur. Joint EDS is omitted from the plotted rows
because no Joint EDS events are observed for any asset over the sample
period.

Because the thresholds are
in-sample quantiles, the marginal exceedance frequencies are fixed by
construction at approximately 2.5\% for VC and $F_{\mathrm{out}}$ and 5\%
for the score-based indicator. The empirical content of the figure is
therefore the timing of the exceedance episodes. The absence of coincident VC
and $F_{\mathrm{out}}$ upper exceedances is a descriptive in-sample result,
not evidence of independence or the absence of contagion.
}
\label{figS:weak_eds_timelines}
\end{figure}

\vspace{-1mm}

Figure~\ref{figS:weak_eds_timelines} displays, for each asset, the windows in
which the individual diagnostic indicators fire. Because the marginal
exceedance frequencies are pinned down by the in-sample quantile thresholds,
the informative dimension of this figure is temporal: it shows when each
asset's diagnostic enters its own upper tail, not how often.

Two features stand out. First, consistent with the zero Joint EDS frequency,
the VC and $F_{\mathrm{out}}$ upper-exceedance marks never occur in the same
window for any asset, so the two diagnostics enter their upper tails through
disjoint episodes in this sample. Second, the exceedances appear as runs of
adjacent windows rather than isolated dates, reflecting the fact that
consecutive rolling windows share $W-1$ observations. These patterns are
descriptive and do not constitute a test of independence or causal contagion.

The timelines also reveal asset-level differences in the type and timing of
the episodes. BTC/USD shows concentrated episodes of Weak EDS and
EDS\_score\_high in particular periods, indicating that Bitcoin's combined
diagnostic score becomes high relative to its own history only under specific
estimated network conditions. Weak EDS episodes
are also observed for several exchange-rate variables, including JPY/USD and
CNY/USD, which is consistent with the preceding VC results showing that some
exchange-rate nodes can have relatively high deterministic upstream-connectivity
scores even when they are not dominant direct transmitters. This does not
identify these assets as forward cascade conduits. Gold futures exhibit
$F_{\mathrm{out}}$ upper exceedances in some periods, but these never
coincide with VC upper exceedances. Thus, their prominent direct-transmission
and reverse-chain stationary-departure roles do not imply simultaneous high
deterministic upstream connectivity.

\begin{figure}[!htbp]
\centering

\begin{minipage}{0.48\textwidth}
\centering
\includegraphics[width=\textwidth,height=0.105\textheight,keepaspectratio]{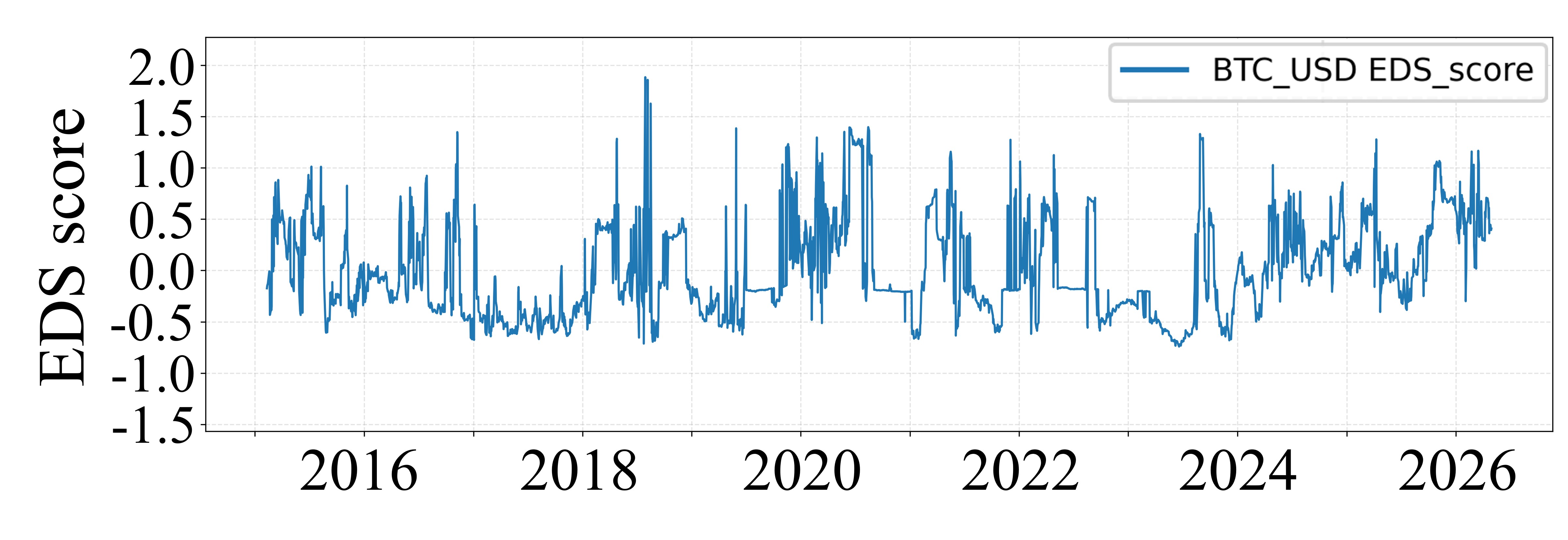}

{\small (a) BTC/USD}
\end{minipage}
\hfill
\begin{minipage}{0.48\textwidth}
\centering
\includegraphics[width=\textwidth,height=0.105\textheight,keepaspectratio]{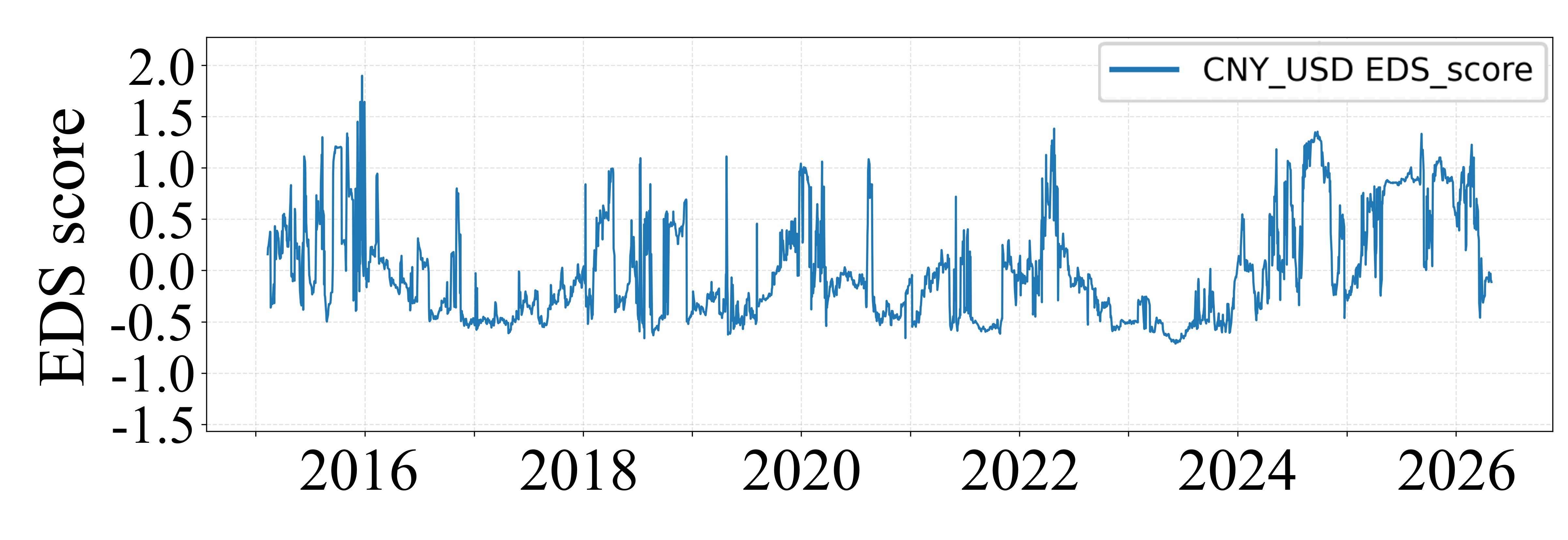}

{\small (b) CNY/USD}
\end{minipage}

\vspace{0.5mm}

\begin{minipage}{0.48\textwidth}
\centering
\includegraphics[width=\textwidth,height=0.105\textheight,keepaspectratio]{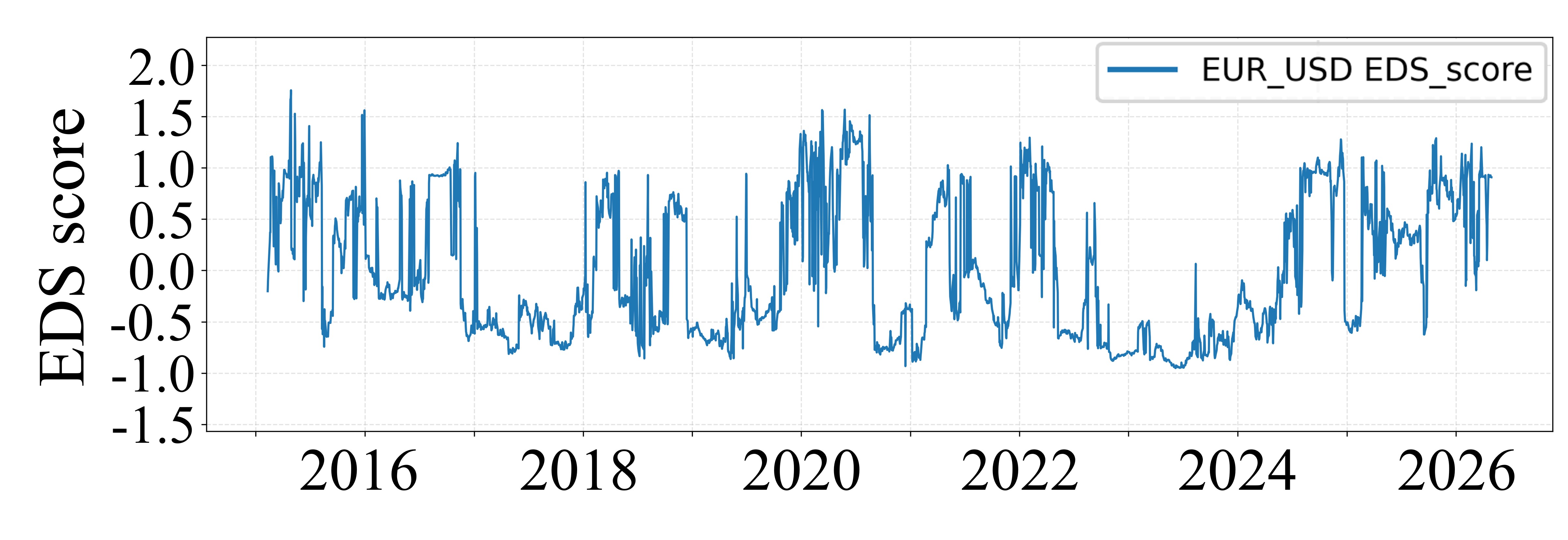}

{\small (c) EUR/USD}
\end{minipage}
\hfill
\begin{minipage}{0.48\textwidth}
\centering
\includegraphics[width=\textwidth,height=0.105\textheight,keepaspectratio]{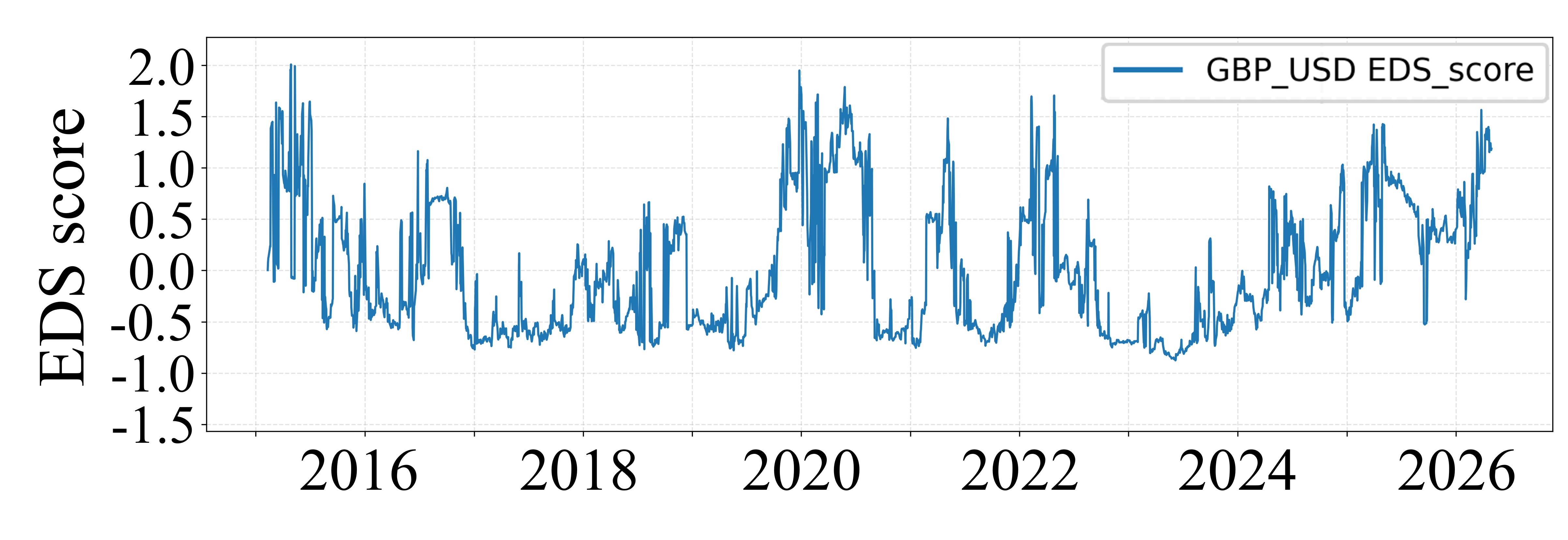}

{\small (d) GBP/USD}
\end{minipage}

\vspace{0.5mm}

\begin{minipage}{0.48\textwidth}
\centering
\includegraphics[width=\textwidth,height=0.105\textheight,keepaspectratio]{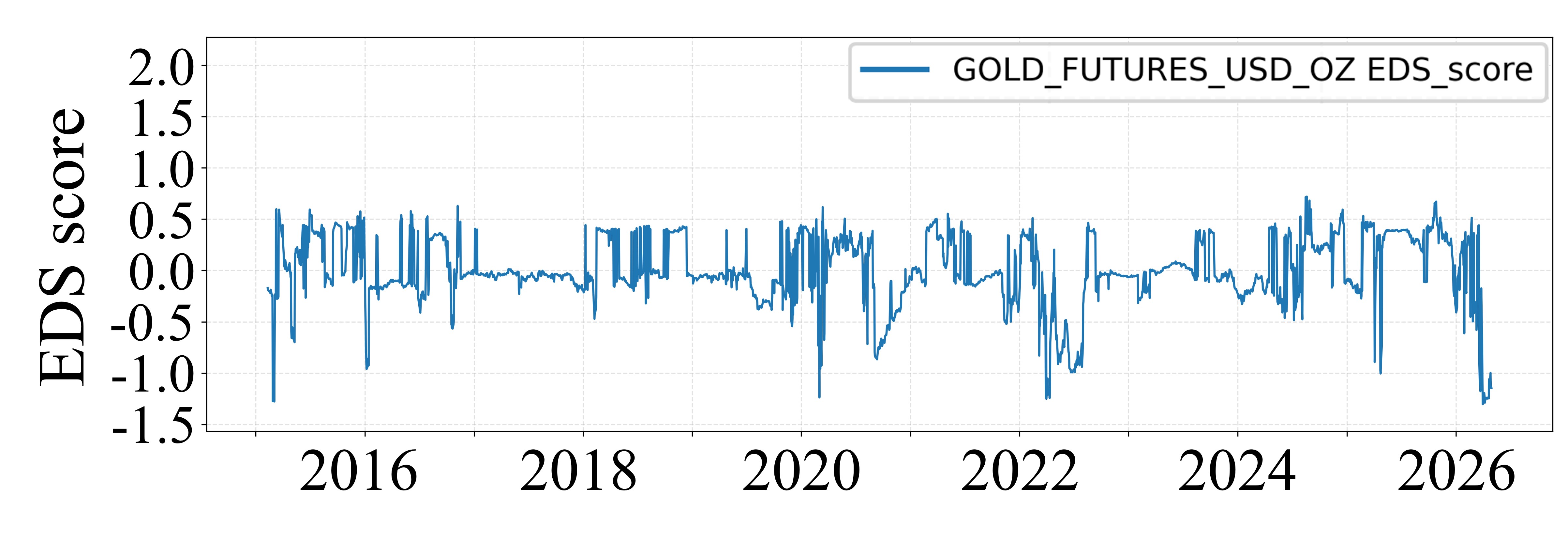}

{\small (e) Gold futures}
\end{minipage}
\hfill
\begin{minipage}{0.48\textwidth}
\centering
\includegraphics[width=\textwidth,height=0.105\textheight,keepaspectratio]{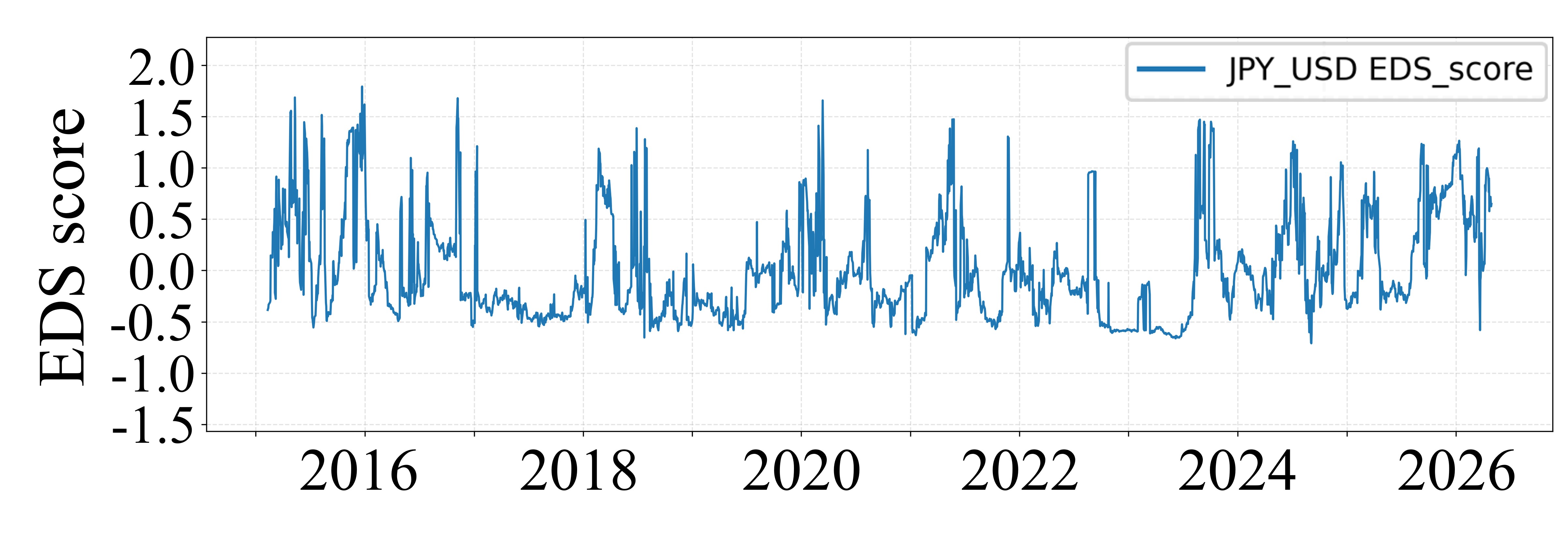}

{\small (f) JPY/USD}
\end{minipage}

\caption[Asset-level EDS\_score time series]{
Asset-level EDS\_score time series for the network excluding the dollar index
under window size 100 and forecast horizon $H=10$. EDS\_score combines standardized deterministic VC and standardized
reverse-chain stationary departure $F_{\mathrm{out}}$. A higher value
indicates a higher combined standardized level of the two diagnostics
relative to the asset's own in-sample distributions; it does not require
both diagnostics to be individually high. It is not a measure of forward
shock propagation, return performance, or causal contagion.
}
\label{figS:eds_score_asset_heterogeneity}
\end{figure}

Figure~\ref{figS:eds_score_asset_heterogeneity} reports the time-varying
EDS\_score for each asset. Since EDS\_score combines standardized VC and
standardized $F_{\mathrm{out}}$, a higher value indicates that the joint
diagnostic score of an asset increases relative to its own empirical
distribution. It is not a measure of return performance, forward shock spread,
or observed contagion.

The EDS\_score time-series results show asset-level variation in the
intensity and timing of the standardized diagnostics. GBP/USD, CNY/USD,
BTC/USD, JPY/USD, and EUR/USD record maximum EDS\_score values of
approximately 2.008, 1.898, 1.884, 1.794, and 1.757, respectively. These
values indicate temporary increases in the combined standardized diagnostics.
They do not mean that these assets become persistent direct net transmitters.

The EDS\_score pattern of gold futures is relatively smoother than those of
several other assets. This does not imply that gold futures have a weak direct
or stationary role. Gold futures are dominant in direct NET transmission and
reverse-chain stationary departure, whereas their deterministic VC is
relatively low. The score therefore combines two differently constructed
diagnostics.

Taken together, Figures~\ref{figS:weak_eds_timelines}
and~\ref{figS:eds_score_asset_heterogeneity} show two complementary aspects
of the diffusion-state diagnostics. Figure~\ref{figS:weak_eds_timelines} shows
that rolling VC and rolling $F_{\mathrm{out}}$ never enter their upper tails
in the same window for any asset, whereas
Figure~\ref{figS:eds_score_asset_heterogeneity} illustrates time-varying
changes in the combined standardized score. 
These results show that the
diffusion-state diagnostics vary over time and cannot be reduced
to a single direct transmitter--receiver classification.

Finally, the in-sample coverage of the 2.5th--97.5th percentile band is
approximately 95\% by construction, up to ties and discreteness; we report
it only as a consistency check of the threshold computation, not as evidence
of predictive calibration.

\section{Auxiliary U.S. Dollar Index network}
\label{supsec:dollar_index_network}

The auxiliary network including the U.S. Dollar Index is examined
under the same baseline setting of window size 100 and forecast horizon
$H=10$. This section reports the rolling TCI comparison and the detailed
asset-level directional and source-tracing results. All Markov and VC
quantities remain conditional on row normalization and the uniform restart
closure.

 
\begin{figure}[H]
\centering
\includegraphics[width=0.6\textwidth]{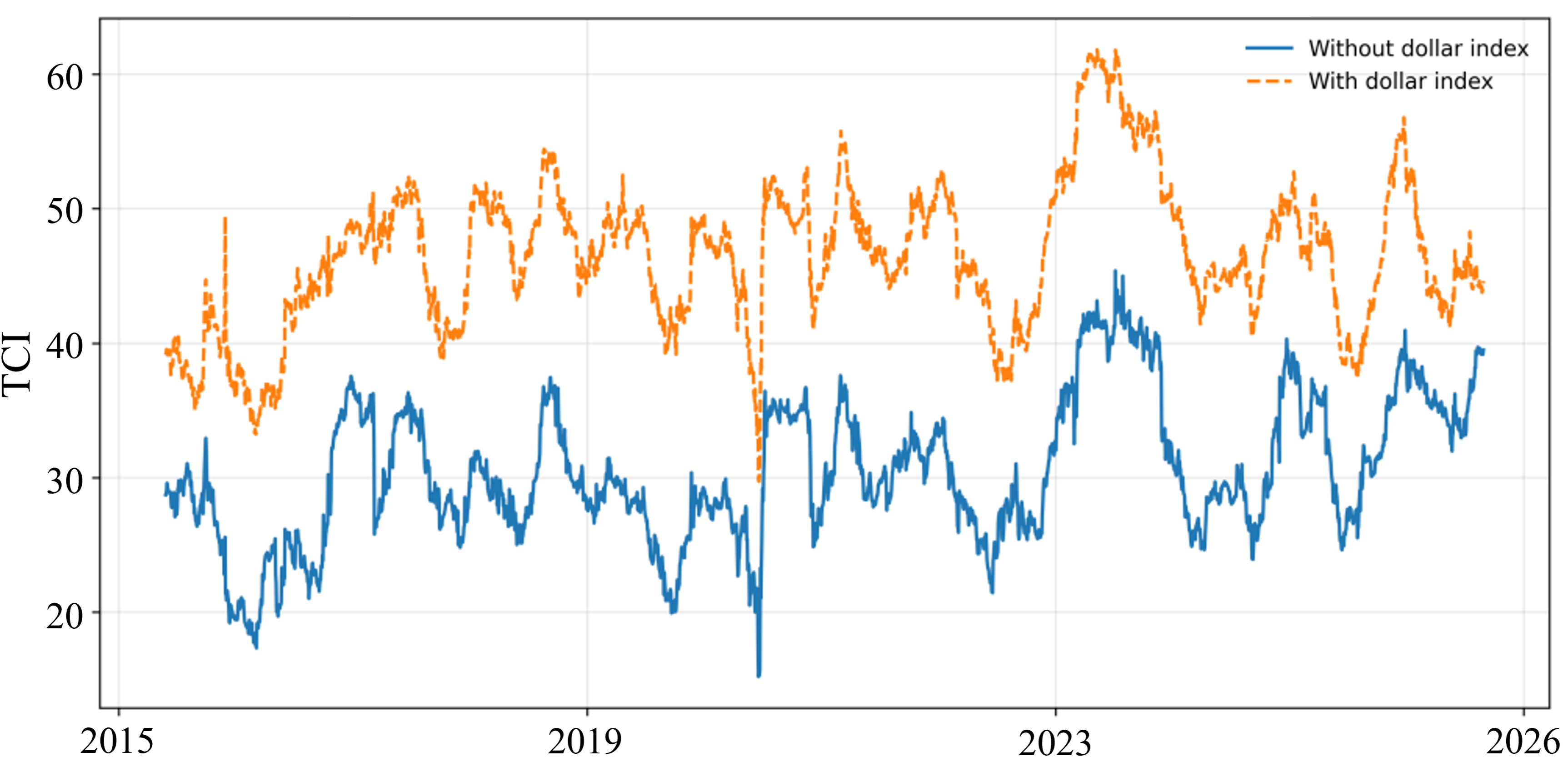}
\caption{Rolling TCI comparison between the networks excluding and including the U.S. Dollar Index. The figure reports the time-varying Total Connectedness Index under the same baseline setting of window size 100 and forecast horizon $H=10$.}
\label{figS:rolling_tci_with_without_dollar}
\end{figure}

Figure~\ref{figS:rolling_tci_with_without_dollar} shows that the network including the U.S. Dollar Index exhibits a higher level of system-wide connectedness than the baseline network excluding the dollar index. However, both networks display time-varying connectedness, indicating that cross-asset spillover intensity changes substantially over time regardless of whether the broad dollar factor is explicitly modeled.

\begin{table}[H]
\centering
\caption{Directional spillover comparison between networks excluding and including the U.S. Dollar Index}
\label{tabS:directional_spillover_with_without_dollar}
\scriptsize
\resizebox{0.70\textwidth}{!}{%
\begin{tabular}{l c c c c c c}
\toprule
 & \multicolumn{3}{c}{Without dollar index} 
 & \multicolumn{3}{c}{With dollar index} \\
\cmidrule(lr){2-4} \cmidrule(lr){5-7}
Asset & TO & FROM & NET & TO & FROM & NET \\
\midrule
U.S. Dollar Index & -- & -- & -- & 155.03 & 31.59 & 123.45 \\
Gold futures      & 47.51 & 13.71 & 33.81 & 66.59 & 31.50 & 35.10 \\
BTC/USD           & 15.49 & 11.95 & 3.54  & 25.46 & 21.20 & 4.25 \\
EUR/USD           & 43.98 & 46.38 & -2.40 & 23.26 & 84.66 & -61.40 \\
GBP/USD           & 36.68 & 41.58 & -4.90 & 17.58 & 62.17 & -44.59 \\
CNY/USD           & 16.14 & 30.47 & -14.33 & 16.76 & 40.30 & -23.55 \\
JPY/USD           & 22.84 & 38.55 & -15.71 & 21.13 & 54.39 & -33.27 \\
\bottomrule
\end{tabular}
}
\vspace{1mm}
\begin{minipage}{0.95\textwidth}
\small
\textit{Note}: 
TO denotes the total directional spillover transmitted by each asset to the other assets in the network, whereas FROM denotes the total directional spillover received from the other assets. NET is defined as TO minus FROM. The values are based on rolling GFEVD
estimates under window size 100 and forecast horizon $H=10$. NET is
computed from full-precision TO and FROM estimates; therefore, minor
discrepancies of 0.01 may arise when subtracting the rounded TO and FROM
values reported in the table. Dashes indicate that the U.S. Dollar Index
is not included in the baseline network excluding the dollar index.
\end{minipage}
\end{table}

Table~\ref{tabS:directional_spillover_with_without_dollar} shows that the direct transmission hierarchy changes when the dollar index is included. In the baseline network excluding the dollar index, gold futures are the strongest direct net transmitter, with a NET spillover of 33.81. In the auxiliary network including the dollar index, the U.S. Dollar Index becomes the dominant net transmitter, with a NET spillover of 123.45. Under this auxiliary specification, the dollar index becomes a dominant estimated node and changes the direction and intensity of the estimated direct spillovers; this comparison does not establish a structural causal mechanism. Nevertheless, gold futures continue to show a positive NET spillover and remain an important transmitter after the dollar index is included.
 
 
\begin{table}[H]
\centering
\caption{Source-tracing role comparison between networks excluding and including the U.S. Dollar Index}
\label{tabS:diffusion_role_with_without_dollar}
\scriptsize
\resizebox{0.65\textwidth}{!}{%
\begin{tabular}{l l c c}
\toprule
Specification & Asset & Mean $F_{\mathrm{out}}$ & Mean deterministic VC \\
\midrule
Without dollar index & Gold futures & 0.312 & 1.690 \\
Without dollar index & BTC/USD & 0.169 & 2.088 \\
Without dollar index & EUR/USD & 0.157 & 2.155 \\
Without dollar index & GBP/USD & 0.133 & 2.134 \\
Without dollar index & CNY/USD & 0.093 & 2.225 \\
Without dollar index & JPY/USD & 0.091 & 2.248 \\
\midrule
With dollar index & U.S. Dollar Index & 0.304 & 1.722 \\
With dollar index & Gold futures & 0.200 & 1.981 \\
With dollar index & BTC/USD & 0.152 & 2.137 \\
With dollar index & EUR/USD & 0.082 & 2.320 \\
With dollar index & JPY/USD & 0.079 & 2.235 \\
With dollar index & CNY/USD & 0.071 & 2.292 \\
With dollar index & GBP/USD & 0.068 & 2.304 \\
\bottomrule
\end{tabular}
}
\vspace{1mm}
\begin{minipage}{0.95\textwidth}
\small
\textit{Note}: 
$F_{\mathrm{out},i}=\pi_i(1-P_{ii})$ is the stationary probability of
occupying asset $i$ and then moving to a distinct asset in the augmented
reverse source-tracing chain. VC is computed deterministically on the full
augmented kernel, including restart and diagonal transitions. It is a
closure-dependent approximation to multistep upstream connectivity, not a
Monte Carlo estimate or an exact independent-cascade expectation. The values
are time averages of rolling estimates under window size 100 and forecast
horizon $H=10$.
\end{minipage}
\end{table}

Table~\ref{tabS:diffusion_role_with_without_dollar} shows that the rankings depend on the network summary considered. In the baseline network, gold futures have the highest mean $F_{\mathrm{out}}$ and stationary occupation probability, making them the leading stationary-departure node in the selected reverse chain. After the U.S. Dollar Index is added, the highest mean $F_{\mathrm{out}}$ and stationary occupation shift to the dollar index. This result is conditional on the selected orientation, normalization, and restart closure. Among these rolling means, the highest deterministic VC is observed for JPY/USD in the baseline network and for EUR/USD in the auxiliary network. Because the kernel is reverse oriented, these values indicate broad modeled connectivity to upstream net-spillover suppliers; they do not indicate forward shock spreading through cascade paths.

 
The qualitative conclusion is that direct transmission, reverse-chain stationary departure, and deterministic upstream connectivity assign different roles. The baseline specification examines the conditional source-tracing topology among exchange rates, gold futures, and Bitcoin, while the auxiliary specification shows how the broad dollar factor changes that estimated topology when it is included explicitly.

\section{PageRank-regularized source-tracing robustness}
\label{supsec:pagerank_robustness}

To examine whether the stationary-departure ranking depends on the
unregularized stationary distribution of $\mathbf{P}$, we use the
PageRank-regularized kernel defined in the main text with $\alpha=0.85$.
For comparability with $F_{\mathrm{out},i}=\pi_i(1-P_{ii})$, the reported
score is
\[
PR_{\mathrm{out},i}=PR_i(1-P_{ii}).
\]
Thus, $PR_{\mathrm{out}}$ is a hybrid PageRank-weighted original-kernel
departure score. It is not the stationary off-diagonal flux of the PageRank
kernel $\mathbf{G}_{\alpha}$, which would instead equal
$PR_i(1-G_{\alpha,ii})$.

Figure~\ref{figS:stationary_pagerank_scatter} compares the asset-level
rankings obtained from stationary departure $F_{\mathrm{out}}$ and the hybrid
PageRank-weighted score $PR_{\mathrm{out}}$ under the
baseline specification excluding the U.S. Dollar Index, with window size
$W=100$ and forecast horizon $H=10$. The two rankings are identical. Gold
futures remain the leading reverse-chain stationary-departure node under both
measures, while BTC/USD retains an intermediate position. The major
exchange-rate variables remain below gold futures and BTC/USD under both
scores.

\begin{figure}[H]
\centering
\includegraphics[width=0.6\textwidth]{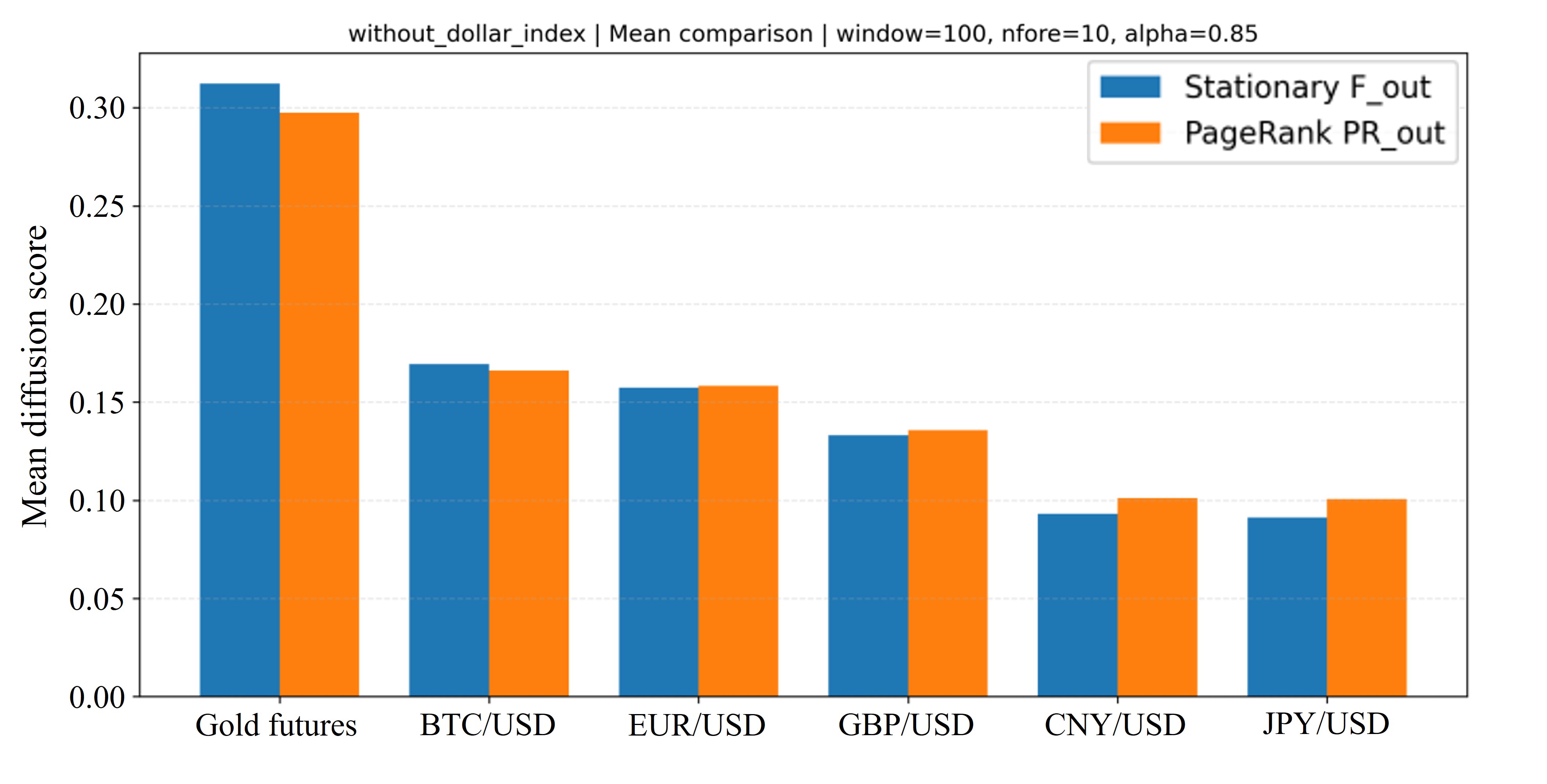}
\caption{Comparison of average reverse-chain stationary departure $F_{\mathrm{out}}$ and the hybrid PageRank-weighted original-kernel departure score $PR_{\mathrm{out}}$. The figure is based on the baseline network excluding the U.S. Dollar Index, with $W=100$, $H=10$, and $\alpha=0.85$. The similar values and identical ranking show that the stationary-departure ordering is robust to PageRank-regularized occupancy weighting. Neither score represents forward economic shock outflow.}
\label{figS:stationary_pagerank_scatter}
\end{figure}

Figure~\ref{figS:stationary_pagerank_scatter} further compares the average
values of $F_{\mathrm{out}}$ and $PR_{\mathrm{out}}$. For every asset, the
paired bars for the two measures are nearly indistinguishable. In the
baseline specification, the Spearman rank correlation and Pearson
correlation between the two scores are both equal to 1.000 after rounding to
three decimals. PageRank regularization therefore changes the scale only
slightly while preserving the reported asset ordering.

To evaluate whether this similarity is specific to the baseline setting, we
repeat the comparison over alternative rolling-window sizes
$W\in\{52,100,150\}$ and forecast horizons $H\in\{5,10,15\}$.
Table~\ref{tabS:stationary_pagerank_robustness} summarizes the correlations
between the two source-tracing departure scores across all specifications.

\begin{table}[H]
\centering
\caption{Robustness of stationary-departure and PageRank-weighted departure rankings}
\label{tabS:stationary_pagerank_robustness}
\small
\begin{tabular}{lcccc}
\toprule
Asset set & Minimum & Mean & Median & Maximum \\
\midrule
Including U.S. Dollar Index  & 1.000 & 1.000 & 1.000 & 1.000 \\
Excluding U.S. Dollar Index  & 1.000 & 1.000 & 1.000 & 1.000 \\
\bottomrule
\end{tabular}

\vspace{1mm}

\begin{minipage}{0.92\textwidth}
\small
\textit{Note}: The table reports Spearman rank correlations between the
average reverse-chain stationary-departure ranking based on
$F_{\mathrm{out}}$ and the average hybrid PageRank-weighted original-kernel
departure ranking based on $PR_{\mathrm{out},i}=PR_i(1-P_{ii})$ across alternative
rolling-window sizes $W\in\{52,100,150\}$ and forecast horizons
$H\in\{5,10,15\}$, with $\alpha=0.85$.
\end{minipage}
\end{table}

\section{Full Pairwise Connectedness Matrix (GFEVD)}
\label{supsec:gfevd_matrix}
\begin{table}[H]
\centering
\caption{Full pairwise connectedness matrix (GFEVD)}
\label{tabS:gfevd_full_matrix}
\small
\setlength{\tabcolsep}{6pt}
\renewcommand{\arraystretch}{1.15}
\begin{adjustbox}{max width=\textwidth}
\begin{tabular}{lrrrrrr}
\toprule
 & JPY/USD & GBP/USD & EUR/USD & CNY/USD & Gold futures & BTC/USD \\
\midrule
JPY/USD      & 73.34 & 4.03  & 11.36 & 0.93  & 10.34 & 0.00 \\
GBP/USD      & 3.59  & 65.18 & 23.59 & 3.26  & 3.70  & 0.67 \\
EUR/USD      & 9.25  & 21.79 & 60.21 & 2.84  & 5.63  & 0.28 \\
CNY/USD      & 1.25  & 4.76  & 4.52  & 84.68 & 4.58  & 0.21 \\
Gold futures & 0.64  & 0.26  & 0.66  & 0.09  & 97.72 & 0.64 \\
BTC/USD      & 0.01  & 0.08  & 0.01  & 0.15  & 0.66  & 99.08 \\
\bottomrule
\end{tabular}
\end{adjustbox}

\vspace{4pt}
\begin{minipage}{0.95\textwidth}
\footnotesize
\emph{Note.} Values are reported in percentage terms. Rows denote the variables whose generalized forecast-error variance is decomposed, while columns denote the variables contributing to that forecast-error variance. Diagonal elements represent own-variable contributions, and off-diagonal elements represent cross-variable spillover contributions. These are statistical GFEVD contributions, not identified structural shocks. The matrix is computed from the full-sample GFEVD for the network excluding the dollar index with forecast horizon $H=10$. Row sums may differ from 100 by 0.01 because of rounding.
\end{minipage}
\end{table}

\section{Full signed pairwise net-spillover matrix}
\label{supsec:net_matrix}
\begin{table}[H]
\centering
\caption{Full signed pairwise net-spillover matrix}
\label{tabS:full_signed_net_matrix}
\small
\setlength{\tabcolsep}{6pt}
\renewcommand{\arraystretch}{1.15}
\begin{adjustbox}{max width=\textwidth}
\begin{tabular}{lrrrrrr}
\toprule
 & JPY/USD & GBP/USD & EUR/USD & CNY/USD & Gold futures & BTC/USD \\
\midrule
JPY/USD      & 0.00  & -0.44 & -2.11 & 0.32  & -9.70 & 0.01 \\
GBP/USD      & 0.44  & 0.00  & -1.80 & 1.50  & -3.44 & -0.59 \\
EUR/USD      & 2.11  & 1.80  & 0.00  & 1.68  & -4.97 & -0.27 \\
CNY/USD      & -0.32 & -1.50 & -1.68 & 0.00  & -4.49 & -0.06 \\
Gold futures & 9.70  & 3.44  & 4.97  & 4.49  & 0.00  & 0.02 \\
BTC/USD      & -0.01 & 0.59  & 0.27  & 0.06  & -0.02 & 0.00 \\
\bottomrule
\end{tabular}
\end{adjustbox}

\vspace{4pt}
\begin{minipage}{0.95\textwidth}
\footnotesize
\emph{Note.} Values are reported in percentage-point terms. A positive value indicates that the row variable is a net pairwise transmitter to the column variable, whereas a negative value indicates that the row variable is a net pairwise receiver from the column variable. The source-tracing kernel reverses this transmission direction, so a transition $i\to j$ traces an upstream net-spillover supplier rather than a forward transmission target. For a pure net source, the uniform row is an augmented restart closure; its transitions, including the diagonal entry, are retained in both $F_{\mathrm{out}}$ and deterministic VC. The matrix is computed from the full-sample pairwise net connectedness results for the network excluding the dollar index with forecast horizon $H=10$.
\end{minipage}
\end{table}

\section{Robustness checks across rolling-window sizes and forecast horizons}
\label{supsec:rolling_window_robustness}
This appendix reports robustness checks based on alternative rolling-window sizes.
The main analysis uses a rolling window size of 100 daily observations. To examine whether the main findings are sensitive to the window-length choice, additional results are reported using shorter and longer rolling windows. The shorter window captures temporary changes in connectedness more sensitively, whereas the longer window smooths short-run fluctuations and reflects more persistent spillover patterns.
In addition, Table~\ref{tabS:eds_lds_robustness_grid} summarizes the Joint EDS and LDS
frequencies over the full specification grid $W\in\{52,100,150\}$ and $H\in\{5,10,15\}$.

\begin{figure}[H]
\centering
\includegraphics[width=0.90\textwidth]{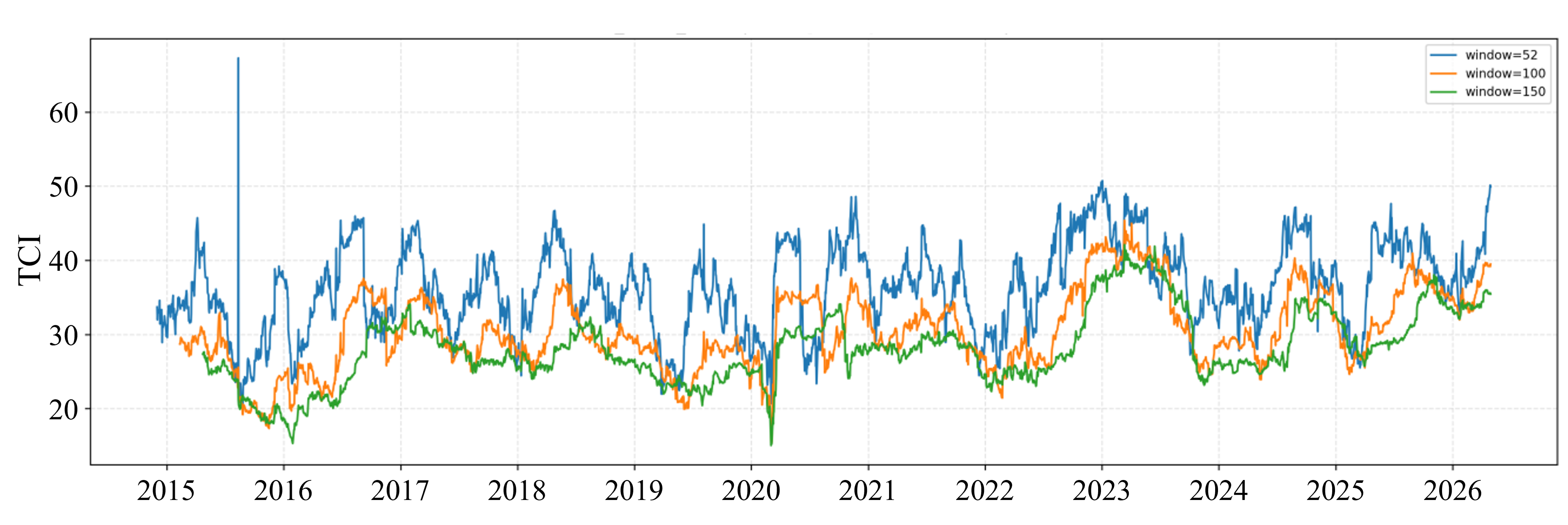} \caption{Rolling TCI comparison across alternative rolling-window sizes. 
The figure compares the time-varying TCI of the network excluding the dollar index across the baseline window size of 100 and alternative window sizes used for robustness checks.
The forecast horizon is fixed at $H=10$.} \label{fig:appendix_rolling_tci_window_comparison}
\end{figure}

\begin{table}[H]
\centering
\caption{Joint EDS and LDS diagnostic frequencies across rolling-window sizes and forecast horizons}
\label{tabS:eds_lds_robustness_grid}
\small
\begin{tabular}{p{0.35\textwidth}cc}
\toprule
Specification grid
& Maximum Joint EDS frequency
& Maximum LDS frequency \\
\midrule
$W\in\{52,100,150\}$,
$H\in\{5,10,15\}$,
all assets and asset sets
& 0.000
& 0.000 \\
\bottomrule
\end{tabular}

\vspace{1mm}

\begin{minipage}{0.94\textwidth}
\small
\textit{Note}: Each value is the maximum asset-level state frequency across
all analyzed asset sets, rolling-window sizes, and forecast horizons.
Joint EDS requires simultaneous upper-tail exceedances of VC and
$F_{\mathrm{out}}$, whereas LDS requires simultaneous lower-tail
exceedances. Both state frequencies are zero in every specification.
Because the thresholds are estimated in sample and adjacent rolling windows
overlap, these values are descriptive frequencies. They do not establish
statistical independence, intrinsic antagonism between the diagnostics,
predictive calibration, or the absence of financial contagion.
\end{minipage}
\end{table}

\bibliographystyle{elsarticle-num} 
\bibliography{ref}

\end{document}